\pretocmd{\chapter}{\addtocontents{toc}{\protect\addvspace{15\p@}}}{}{}
\pretocmd{\section}{\addtocontents{toc}{\protect\addvspace{5\p@}}}{}{}
\let\oldtocsection=\tocsection
\let\oldtocsubsection=\tocsubsection
\let\oldtocsubsubsection=\tocsubsubsection
\renewcommand{\tocsection}[2]{\hspace{0em}\oldtocsection{#1}{#2}}
\renewcommand{\tocsubsection}[2]{\hspace{1.8em}\oldtocsubsection{#1}{#2}}
\renewcommand{\tocsubsubsection}[2]{\hspace{4.4em}\oldtocsubsubsection{#1}{#2}}
\definecolor{linkcolor}{HTML}{e88d67} 
\definecolor{citecolor}{HTML}{e88d67} 
\definecolor{urlcolor}{HTML}{e88d67} 
\definecolor{myNewColorA}{HTML}{fec3a6}
\definecolor{myNewColorB}{HTML}{ffaf80}
\definecolor{myNewColorC}{HTML}{fb8f67}
\definecolor{seagreen}{HTML}{337180}
\definecolor{mseagreen}{HTML}{369673}
\definecolor{darksalmon}{HTML}{e88d67}
\definecolor{silver}{HTML}{bbbbbb}
\definecolor{flowerblue}{HTML}{4e77fc}
\definecolor{tomato}{HTML}{ff6347}
\definecolor{orange}{HTML}{f2b13d}
\definecolor{darkgray}{HTML}{939393}
\DeclareMathOperator{\PIV}{PIV}
\DeclareMathOperator{\PPII}{PII}
\newcommand{\brackets}[1]{\left( #1 \right)}
\newcommand{\PoissonBrackets}[1]{\left\{ #1 \right\}}
\newcommand{\LieBrackets}[1]{\left[ #1 \right]}
\newcommand{\angleBrackets}[1]{\langle #1 \rangle}
\newcommand{\Painleve}{Painlev{\'e} }
\newcommand{\PPainleve}{Painlev{\'e}}
\newcommand{\Backlund}{B{\"a}cklund }
\newcommand{\PIVn}[1]{\text{P}_4^{ #1 }}
\theoremstyle{plain}
\newtheorem{thm}{Theorem}[]
\newtheorem{prop}{Proposition}[]
\theoremstyle{definition}
\newtheorem{exmp}{Example}[]
\theoremstyle{remark}
\newtheorem{rem}{Remark}
\begin{document}

    \title[A fully noncommutative analog of the $\PIV$ equation]{A fully noncommutative analog of the Painlev{\'e} IV equation and~a~structure~of its solutions}
    
    \author{Irina Bobrova}
    \noindent\address{\noindent Faculty of Mathematics, National Research University Higher School of Economics, Usacheva str. 6, Moscow,
    119048, Russia}
    \email{ia.bobrova94@gmail.com}
    
    \author{Vladimir Retakh}
    \noindent\address{\noindent 
    Department of Mathematics,
    Rutgers University,
    Piscataway, New Jersey, 08854, USA
    }
    \email{vretakh@math.rutgers.edu}
    
    \author{Vladimir Rubtsov}
    \noindent\address{\noindent 
    Maths Department, University of Angers, 
    Building I,
    Lavoisier Boulevard,
    Angers, 49045, CEDEX 01, France
    }
    \email{volodya@univ-angers.fr}
    
    \author{Georgy Sharygin}
    \noindent\address{\noindent 
    Department of Mathematics and Mechanics,
    Moscow State (Lomonosov) University,
    Leninskie Gory, d. 1,
    Moscow, 119991, Russia
    }
    \email{sharygin@itep.ru}

    \subjclass{Primary 37K20. Secondary 16B99, 16W25} 
    \keywords{
    noncommutative Toda equations, 
    noncommutative Painlev{\'e} equations, quasideterminants}
    
    \maketitle 

    \begin{flushright}
    \textit{
    Dedicated to the 70th birthday of V.V. Sokolov
    }
    \end{flushright}

    \begin{abstract}
    We study a fully noncommutative generalisation of the commutative fourth \Painleve equation that possesses solutions in terms of an infinite Toda system over an associative unital division ring equipped by a derivation.
    \end{abstract}
    
    
    \section*{Introduction}

The main purpose of the present paper is to illustrate the ``pure noncommutative'' approach to the theory of integrable systems by applying it to the noncommutative \Painleve equations, more precisely to the $\PIV$ equation and the related $\tau$-functions. This approach was first introduced in the paper \cite{Retakh_Rubtsov_2010} and can be regarded as a universal approach to the quantization or deformation of the well-known classical objects. In the cited paper the authors constructed an integrable fully noncommutative analog of the second \Painleve equation and gave an explicit description of its solutions in terms of Hankel quasideterminants. They also generalized results related to the solutions of the fully noncommutative Toda chain, suggested in papers \cite{etingof1997factorization}, \cite{gel1992theory}. Different versions of the noncommutative Toda equations can also be found in \cite{mikhailov1981reduction}, \cite{krichever1981periodic}. We will apply these results to construct a fully noncommutative version of the fourth \Painleve equation that admits solutions in terms of the infinite Toda system. The subject of our study is motivated by papers \cite{joshi2004generating}, \cite{joshi2006generating}, where the authors have explained the nature of the determinant solutions for the commutative $\PPII$ and $\PIV$ equations.

Consider an associative unital division ring $R$ over a field $F$. Let $D:R\to R$ be a derivation of $R$, i.e. an $F$-linear map that satisfies the Leibniz rule; for any $f \in R$ we put $D \, f = f'$. Below we will often refer to the elements of the ring $R$ as to \textit{functions}. We will fix an element $t\in R$ such that $t' = 1$ (so we assume that the differential equation $f'=1$ in $R$ has solutions) and for any scalar parameter $\alpha \in F$ we have $\alpha' = 0$. In the paper \cite{Retakh_Rubtsov_2010} the authors constructed solutions of the \textit{infinite Toda system} in terms of quasideterminants of Hankel matrices over $R$ (see Theorem 2.1 in \cite{Retakh_Rubtsov_2010}). This system contains two parts, ``positive'' and ``negative'', and can be written as
\begin{align}
    \label{eq:ncToda_thetan_0}
    (\theta_n' \theta_n^{-1})'
    &= \theta_{n + 1} \theta_n^{-1}
    - \theta_n \theta_{n - 1}^{-1},
    &
    n 
    &\geq 0,
    \\[2mm]
    \label{eq:ncToda_etam_0}
    (\eta_{m}^{-1} \eta_{m}') '
    &= \eta_{m}^{-1} \eta_{m - 1} - \eta_{m + 1}^{-1} \eta_{m},
    & 
    m &\leq 0,
\end{align}
here $\theta_1  = \eta_0^{-1}= \kappa_{1}$ and $\theta_0  = \eta_{-1}^{-1}= \kappa_{-1}^{-1}$ for some generic initial functions $\kappa_{-1}$ and $\kappa_1$. It turns out that if we impose some conditions on the functions $\kappa_{-1}$ and $\kappa_1$, then one can use the solutions of the noncommutative Toda equations to construct the solutions of \textit{the fully noncommutative \Painleve II equation} of the form
\begin{equation}
    \label{eq:ncP2}
    \tag*{$\PPII [u; \beta]$}
    u''
    = 2 u^3 - 2 t u - 2 u t + 4 \brackets{\beta + \tfrac12},
\end{equation}
where $u$, $t \in R$, $t' = 1$, and $\beta$ is a scalar parameter, i.e. $\beta\in F$ (in particular $\beta' = 0$). Note that the r.h.s. of the \ref{eq:ncP2} equation is written in symmetric or anticommutator form; this form is useful to construct some generalizations of the commutative $\PPII$ equation. Equation \ref{eq:ncP2} is a generalization of the matrix \Painleve II equation, obtained in \cite{Balandin_Sokolov_1998}, \cite{Adler_Sokolov_2020_1}, and of the quantum \Painleve II equation, suggested in \cite{nagoya2008quantum}, since in contrast with these two examples, there are no additional assumptions for the algebra $R$ in it.

In this paper we go further in this direction and suggest a fully noncommutative version of the commutative \Painleve IV equation
\begin{align}
    y ''
    &= \tfrac12 y^{-1} (y')^2
    + \tfrac32 y^3 
    - 2 t y^2
    + \brackets{\tfrac12 t^2 - (\alpha_1 - \alpha_0)} y
    - \tfrac12 \alpha_2^2 y^{-1},
\end{align}
where $y = y(t)$ and the scalar parameters $\alpha_0,\alpha_1,\alpha_2$ are such that $\alpha_0 + \alpha_1 + \alpha_2 = 1$. The solutions of this equation are expressed in terms of the solutions of noncommutative Toda equations under a certain ansatz for the functions $\kappa_{-1}$ and $\kappa_1$. This analog reads as 
\begin{align}
    \label{eq:ncP4sym0}
    &\left\{
    \begin{array}{lcl}
         f_{0}'
         &=& f_{0} f_{1}
         - f_{2} f_{0}
         + \alpha_0,  
         \\[2mm]
         f_{1}'
         &=& f_{1} f_{2}
         - f_{0} f_{1}
         + \alpha_1,
         \\[2mm]
         f_{2}'
         &=& f_{2} f_{0}
         - f_{1} f_{2}
         + \alpha_2,
    \end{array}
    \right.
\end{align}
where again $\alpha_0 + \alpha_1 + \alpha_2 = 1$; this system can be regarded as \textit{a fully noncommutative generalization of the $\PIV$ symmetric system} (in the commutative case system \eqref{eq:ncP4sym0} defines 3-periodic solutions of the dressing chain \cite{veselov1993}). This system admits the same \Backlund transformations as in the commutative case (see Table~\ref{tab:BTonP4sym}) and has an isomonodromic Lax representation, presented in Section \ref{sec:isompair}, that is equivalent to the Noumi-Yamada pair for the commutative $\PIV$ symmetric form \cite{noumi2000affine}. 
To construct solutions of \eqref{eq:ncP4sym0} in terms of the noncommutative Toda equations we use noncommutative analogs of the translation operators (for more details see Section \ref{sec:ncP4}). Therefore, the existence of \Backlund transformations, compositions of which form an affine Weyl group of type $A_2^{(1)}$, is significant for determinig solutions of \eqref{eq:ncP4sym0}. Unlike the \ref{eq:ncP2} equation, system \eqref{eq:ncP4sym0} is not written in the symmetric form. It means that a noncommutative analog of the $\PIV$ equation cannot be obtained by using the Weyl ordering.

We also remark that the order of the system can be reduced by the first integral
\begin{align}
    I
    &= f_0 + f_1 + f_2 - t,
\end{align}
but in noncommutative situation the resulting system cannot be written as an ODE of the second order, since to do this we have to invert an operator of the form $f \, g + g \, f$. 

Observe that unlike in the papers \cite{nagoya2004quantum} and \cite{nagoya2008quantum}, where quantum versions of the \Painleve equations were considered, we do not impose any additional conditions on $R$. So, system \eqref{eq:ncP4sym0} is a generalization of the quantum version of the $\PIV$ equation to a ``pure noncommutative'' case.

On the other hand, with the help of the matrix generalization of the \PPainleve-Kovalevskaya test, authors of the papers \cite{Bobrova_Sokolov_2021_1}, \cite{Bobrova_Sokolov_2021_2} have derived three non-equivalent matrix \Painleve IV systems, labeled $\PIVn{i}$, $i = 0, 1, 2$. The $\PIVn{0}$ system with scalar parameters is equivalent to \eqref{eq:ncP4sym0} with central variable $t$. There is also a fully noncommutative version of the $\PIVn{0}$ system:
\begin{equation} \label{roubtsov}
    \left\{
    \begin{array}{lcl} 
    u'
    &=& - u^2 + u v + v u + (k - 2) \, \bar zu - k\, u \bar z + \gamma_1,
    \\[2mm]
    v'
    &=& - v^2 + v u + u v + k\, \bar z v - (k - 2)\, v \bar z + \gamma_2,
    \end{array}
    \right.
\end{equation}
where $\bar z' = 1$ and $k \in  \mathbb{C}$; this system admits an isomonodromic Lax pair \cite{Bobrova_Sokolov_2021_2}. It turns out that solutions of this system can be expressed via solutions of the semi-infinite noncommutative Toda equations when $k = 0$ or $k = 2$. The noncommutative version \eqref{eq:ncP4sym0} allows us to express its solutions in both directions.

Up to the authors' knowledge, particular cases of the other two systems from the list in \cite{Bobrova_Sokolov_2021_1}, $\PIVn{1}$ and $\PIVn{2}$, do not have \Backlund transformations that define an affine Weyl group of type $A_2^{(1)}$\footnote{But they do possess a \Backlund transformation (see \cite{adler2020}). We are grateful to V.~E.~Adler, who drew our attention to~this~fact.}. Therefore they cannot be solved by the noncommutative Toda equations. The same remark holds for noncommutative systems of the $\PIV$ type, obtained in the paper \cite{cafasso2014non} by using the Riemann-Hilbert problem.

\subsection*{Organization of the paper} In Section \ref{sec:Todaeqs}, we give a short overview of the known facts about Toda equations and their solutions in the commutative (Section \ref{sec:scalToda}) and noncommutative cases (Section \ref{sec:ncToda}). The overview is based on the papers \cite{kajiwara1999determinant} and \cite{Retakh_Rubtsov_2010}. In Section \ref{sec:scalP4} we recall the Hirota bilinear form of the commutative $\PIV$ equation, that is derived by \Backlund transformations of the commutative $\PIV$ symmetric form. We note that, unlike the paper \cite{joshi2006generating}, we present a direct proof of the fact that solutions of the \Painleve IV equation can be constructed by solutions of the Toda equations under some assumptions for the general initial functions $\kappa_{-1}$ and $\kappa_1$ (see Theorem \ref{thm:scalP4solmainthm} in Section \ref{sec:scalHankel}). In Section \ref{sec:ncP4} we give the generalizations of the results from Section \ref{sec:scalP4} to a ``pure noncommutative'' case. Namely, we prove that solutions of the system \eqref{eq:ncP4sym0} can be expressed via solutions of the infinite noncommutative Toda system (see Theorem \ref{thm:ncP4solmainthm} in Section \ref{sec:ncHankel}). In Sections \ref{sec:ncham} and \ref{sec:isompair} we discuss the ``Hamiltonians'' and isomonodromic properties of the system \eqref{eq:ncP4sym0}. We move some proofs to the appendices (see Appendix \ref{sec:app}) to make the main text more straightforward and readable.

\subsection*{Acknowledgments}
The authors are thankful to M.~Mazzocco for sharing her preprint. We are also grateful to
V.~E.~Adler and 
V.~V.~Sokolov for invaluable suggestions and comments. The research of I.B. was partially supported by the International Laboratory of Cluster Geometry HSE, RF Government grant № 075-15-2021-608, and by Young Russian Mathematics award; V.R. is grateful to Centre Henri Lebesgue, program ANR-11-LABX-0020-0; G.Sh. was partially supported by the RNF grant 22-11-00272.

\subsection*{Dedication}
We are very glad to dedicate this paper to our colleague V. V. Sokolov whose profound contributions to the area of non-abelian differential equations are recognized worldwide. Some of his significant works are listed in Appendix \ref{sec:VV}.

\section{Toda equations and their solutions}
\label{sec:Todaeqs}

\subsection{Solutions of the commutative Toda chain} \label{sec:scalToda}
In this section, we review the results from \cite{kajiwara1999determinant}.

The Toda equations can be viewed as the recurrence relation for a sequence of functions $\{\tau_n\}_{n \in \mathbb{Z}}$, $\tau_n = \tau_n (t)$: 
\begin{equation} \label{eq:scalToda}
    \tau_{n + 1} \tau_{n - 1}
    = \tau_n'' \tau_n - {(\tau_n')}^2,
\end{equation}
with general initial conditions $\tau_0$ and $\tau_1$. Equation \eqref{eq:scalToda} is called \textit{the bilinear form of the Toda chain}. It is convenient to introduce a sequence of functions $\{\kappa_n\}_{n \in \mathbb{Z}}$ with $\kappa_0 = 1$ by applying the gauge transformation to $\tau_n$ given by
\begin{equation}
    (\ln \tau_n)''
    = (\ln \kappa_n)'' + \kappa_{-1} \, \kappa_1.
\end{equation}
Then the Toda equations \eqref{eq:scalToda} become
\begin{align} \label{eq:scalToda_kappa}
    &&
    \kappa_{n + 1} \kappa_{n - 1}
    &= \kappa_n'' \kappa_n - {(\kappa_n')}^2 + \kappa_{-1} \kappa_1 \kappa_n^2,
    &
    \kappa_0
    &= 1.
    &&
\end{align}
It tuns out that $\kappa_n$ are uniquely determined for the given $\kappa_{-1}$ and $\kappa_1$; it is also known that $\kappa_n$ are expressible in a determinant form (see Theorem 2.1 in \cite{kajiwara1999determinant}):
\begin{thm} \label{thm:scalkappan}
Let $\{a_n\}_{n \in \mathbb{N}}$, $\{ b_n \}_{n \in \mathbb{N}}$ be two sequences of functions $a_n=a_n(t),\,b_n=b_n(t)$ defined~recursively~as
\begin{align}
    &&
    a_n
    &= a_{n - 1}'
    + \kappa_{-1} \sum_{
    \tiny
    \begin{array}{c}
         i + j = n - 2  
         \\
         i, j \geq 0
    \end{array}
    }
    a_i a_j,
    &
    a_0
    &= \kappa_1,
    &&
    \\[2mm]
    &&
    b_n
    &= b_{n - 1}'
    + \kappa_{1} \sum_{
    \tiny
    \begin{array}{c}
         i + j = n - 2  
         \\
         i, j \geq 0
    \end{array}
    }
    b_i b_j,
    &
    b_0
    &= \kappa_{-1}.
    &&
\end{align}
Let $\kappa_n$ for $n \in \mathbb{Z}$ be an $|n| \times |n|$ Hankel determinant:
\begin{equation}
    \kappa_n
    = \left\{
    \begin{array}{cc}
         \begin{vmatrix}
         a_0 & a_1 & \dots & a_{n - 1} \\
         a_1 & a_2 & \dots & a_n \\
         \vdots & \vdots & \ddots & \vdots \\
         a_{n - 1} & a_n & \dots & a_{2n - 2}
         \end{vmatrix},
         &  
         n > 0,
         \\[2mm]
         1,  &  n = 0 
         \\[2mm]
         \begin{vmatrix}
         b_0 & b_1 & \dots & b_{- n - 1} \\
         b_1 & b_2 & \dots & b_{- n} \\
         \vdots & \vdots & \ddots & \vdots \\
         b_{- n - 1} & b_{- n} & \dots & b_{- 2n - 2}
         \end{vmatrix},
         &  
         n < 0.
    \end{array}
    \right.
\end{equation}
Then $\kappa_n$ satisfies the bilinear Toda equations \eqref{eq:scalToda_kappa} for general initial conditions $\kappa_{-1}$ and $\kappa_1$.
\end{thm}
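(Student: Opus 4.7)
The plan is to prove the Hankel-determinant formula by induction on $|n|$, with Jacobi's (Desnanot--Sylvester) identity supplying the bilinear Toda relation. By the symmetry of the definitions (swap the $a$- and $b$-sequences along with $\kappa_{-1} \leftrightarrow \kappa_1$), it suffices to treat the positive case $n > 0$; the base case $n = 0$ is the tautology $\kappa_1 \cdot \kappa_{-1} = \kappa_{-1}\kappa_1 \cdot \kappa_0^2$.

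The main computational step is to identify the derivatives $\kappa_n'$ and $\kappa_n''$ of $\kappa_n = \det(a_{i+j})_{0 \le i, j \le n-1}$ with specific minors of the enlarged $(n+1)\times(n+1)$ Hankel matrix $M = (a_{i+j})_{0 \le i, j \le n}$, modulo corrections that I will track carefully. Rewriting the defining recurrence as
\[
a_k' = a_{k+1} - \kappa_{-1} \sum_{i + j = k - 1} a_i a_j,
\]
I would differentiate $\kappa_n$ row by row. For each interior row, the ``shift'' part $a_{k+1}$ makes the differentiated row a copy of the next row, killing that contribution by antisymmetry; the quadratic correction $\kappa_{-1}\sum a_i a_j$ then needs a separate bilinear analysis (a small check, e.g.\ at $n=2$, already shows that these corrections mutually cancel in the first-derivative sum). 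After exploiting the convolution structure of this correction together with the Hankel structure of $M$ and the boundary value $a_0 = \kappa_1$, one arrives at
\[
\kappa_n' = \det(M[\{n-1\};\{n\}]), \qquad \det(M[\{n-1\};\{n-1\}]) = \kappa_n'' + \kappa_{-1}\kappa_1 \kappa_n,
\]
with the inhomogeneous $\kappa_{-1}\kappa_1 \kappa_n$ coming from the sole surviving boundary contribution of the quadratic part of the recurrence.

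The induction then closes by applying Jacobi's identity to $M$ with rows and columns $n-1$ and $n$ as the distinguished pair:
\[
\det(M) \cdot \det(M[\{n-1,n\};\{n-1,n\}]) = \det(M[\{n-1\};\{n-1\}]) \cdot \det(M[\{n\};\{n\}]) - \det(M[\{n-1\};\{n\}]) \cdot \det(M[\{n\};\{n-1\}]).
\]
The six minors match respectively $\kappa_{n+1}$, $\kappa_{n-1}$, $\kappa_n'' + \kappa_{-1}\kappa_1\kappa_n$, $\kappa_n$, $\kappa_n'$, and $\kappa_n'$ (the last two equal by the symmetry of the Hankel matrix $M$). Substituting gives exactly
\[
\kappa_{n+1}\kappa_{n-1} = \kappa_n \kappa_n'' - (\kappa_n')^2 + \kappa_{-1}\kappa_1 \kappa_n^2,
\]
which is \eqref{eq:scalToda_kappa}.

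The main obstacle I anticipate is the bookkeeping of quadratic corrections. Every cross-term arising from $\kappa_{-1}\sum a_i a_j$ in the recurrence must either cancel by determinantal antisymmetry (when it produces a repeated row) or collect into precisely the $\kappa_{-1}\kappa_1$ coefficient appearing in the identification of $\det(M[\{n-1\};\{n-1\}])$ with $\kappa_n'' + \kappa_{-1}\kappa_1\kappa_n$; the boundary condition $a_0 = \kappa_1$ (and its dual $b_0 = \kappa_{-1}$ in the negative-index case) is what makes this coefficient clean. The Hankel (translation-invariant) structure of $M$ is crucial throughout, as it is what guarantees that the shifted-row minors produced by successive differentiation align with the off-diagonal minors in Jacobi's identity.
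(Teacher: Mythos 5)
Your proposal is correct and takes essentially the same route as the paper: both identify $\kappa_{n+1}$, $\kappa_{n-1}$, $\kappa_n$, $\kappa_n'$, and $\kappa_n'' + \kappa_{-1}\kappa_1\kappa_n$ with the complementary minors of the bordered $(n+1)\times(n+1)$ Hankel matrix and then conclude by the Jacobi (Desnanot--Sylvester) identity applied to the last two rows and columns, the negative-index case following by the $a\leftrightarrow b$, $\kappa_1\leftrightarrow\kappa_{-1}$ symmetry. The paper states the minor identities without proof, so your sketch of the row-by-row differentiation and the cancellation of the quadratic correction terms merely fills in detail the paper leaves implicit (and the ``induction'' framing is inessential, since the Jacobi identity gives the relation directly for each $n$).
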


\begin{proof}
Let us denote the determinant of an $n \times n$ matrix $X$ from the list above by $D$ and let
$D \left(\begin{array}{cccc}
     i_1 & i_2 & \dots & i_k
     \\
     j_1 & j_2 & \dots & j_k
\end{array} \right)
$ be the determinant of the matrix obtained from $X$ by removing $i_1$, $\dots$, $i_k$ rows and $j_1$, $\dots$, $j_k$ columns. Then the statement of Theorem \ref{thm:scalkappan} follows from the relations
\begin{gather}
    \begin{aligned}
    D 
    &\equiv \kappa_{n + 1},
    &&&
    D 
    \begin{pmatrix}
    n + 1  \\ n + 1
    \end{pmatrix}
    &= \kappa_{n},
    &&&
    D 
    \begin{pmatrix}
    n & n + 1 \\ n & n + 1
    \end{pmatrix}
    &= \kappa_{n - 1},
    \end{aligned}
    \\[2mm]
    \begin{aligned}
    D 
    \begin{pmatrix}
    n \\ n + 1
    \end{pmatrix}
    &= D 
    \begin{pmatrix}
    n + 1 \\ n
    \end{pmatrix}
    = \kappa_n',
    &&&&&
    D 
    \begin{pmatrix}
    n \\ n
    \end{pmatrix}
    &= \kappa_n'' + \kappa_{-1} \kappa_1 \kappa_n,
    \end{aligned}
\end{gather}
and the Jacobi identity
\begin{equation}
    D \begin{pmatrix}
    n \\ n
    \end{pmatrix}
    \cdot 
    D \begin{pmatrix}
    n + 1 \\ n + 1
    \end{pmatrix}
    -
    D \begin{pmatrix}
    n \\ n + 1
    \end{pmatrix}
    \cdot 
    D \begin{pmatrix}
    n + 1 \\ n
    \end{pmatrix}
    = D \cdot 
    D \begin{pmatrix}
    n & n + 1 \\ n & n + 1
    \end{pmatrix}.
\end{equation}
\end{proof}

\begin{rem}
Let us put $\theta_n = \kappa_{n} \, \kappa_{n - 1}^{-1}$ for $n \geq 0$, in particular $\theta_0 = \kappa_{-1}^{-1}$ and $\theta_1 = \kappa_1$. Then the Toda equations \eqref{eq:scalToda_kappa} can be rewritten as
\begin{align} \label{eq:scalToda_thetan}
    &&
    \brackets{\ln \theta_n}''
    &= \theta_{n + 1} \, \theta_n^{-1}
    - \theta_n \, \theta_{n - 1}^{-1}
    ,
    &&
\end{align}
Similarly, we can put $\eta_{m} = \kappa_{m} \kappa_{m + 1}^{-1}$ for $ m \leq 0$ to get
\begin{align} \label{eq:scalToda_etam}
    &&
    \brackets{\ln \eta_{m}}''
    &= \eta_{m - 1} \, \eta_{m}^{-1}
    - \eta_{m} \, \eta_{m + 1}^{-1}
    .
    &&
\end{align}
Note that $\eta_0 = \kappa_1^{-1} = \theta_1^{-1}$ and $\eta_{-1} = \kappa_{-1} = \theta_0^{-1}$. In the commutative case, there is no difference between ``positive'' and ``negative'' parts of the Toda equations, but in the noncommutative setting the difference is significant.
\end{rem}

\begin{rem}
The substitution $\theta_n = e^{u_n}$ brings the Toda chain \eqref{eq:scalToda_thetan} to the classical form:
\begin{align}
    u_n''
    &= e^{\text{\small{$u_{n + 1} - u_{n}$}}} 
    - e^{\text{\small{$u_{n} - u_{n - 1}$}}} .
\end{align}
\end{rem}

\subsection{Solutions of noncommutative Toda equations}
\label{sec:ncToda}

In the current section we give an overview of Section 2 from the paper \cite{Retakh_Rubtsov_2010}.

\subsubsection{Quasideterminants and almost Hankel matrices}
Consider a matrix algebra over an associative unital ring $R$ and its element $X = (x_{ij})$, where $1 \leq i, j \leq n$. We denote by $X^{ij}$ the $(n - 1) \times (n - 1)$ matrix obtained from $X$ by removing the $i$-th row and the $j$-th column. Let $r_i$ and $c_j$ be the $i$-th row and the $j$-th column of the matrix $X$:
\begin{align}
    r_i
    &= 
    \begin{pmatrix}
    x_{i1} & x_{i2} & \dots & x_{ij} & \dots & x_{in}
    \end{pmatrix},
    &
    c_j
    &= 
    \begin{pmatrix}
    x_{1 j} & x_{2 j} & \dots & x_{ij} & \dots & x_{n j}
    \end{pmatrix}^T.
\end{align}
According to \cite{gel1991determinants}, \textit{the quasideterminant} $|X|_{ij}$ is defined as
\begin{itemize}
    \item for $n = 1$, $|X|_{11} = x_{11}$;
    \item for $n > 1$ under the condition that $X^{ij}$ is an invertible matrix, $|X|_{ij} = x_{ij} - r_i (X^{ij})^{-1} c_j$.
\end{itemize}
\begin{exmp}
\phantom{}

\begin{itemize}
    \item Consider a $2 \times 2$ matrix $A$ with generic entries $a_{ij}$, $i, j = 1, 2$. Then it has four quasideterminants:
    \begin{align}
        |A|_{11}
        &= a_{11}
        - a_{12} \, a_{22}^{-1} \, a_{21}
        ,
        &
        |A|_{12}
        &= a_{12}
        - a_{11} \, a_{21}^{-1} \, a_{22}
        ,
        \\[1mm]
        |A|_{21}
        &= a_{21}
        - a_{22} \, a_{12}^{-1} \, a_{11}
        ,
        &
        |A|_{22}
        &= a_{22}
        - a_{21} \, a_{11}^{-1} \, a_{12}
        .
    \end{align}
    
    \item In the case of $3 \times 3$ matrix $A$ with generic entries $a_{ij}$, $i, j = 1, 2, 3$, there are nine quasideterminants. For example,
    \begin{align}
        |A|_{11}
        = a_{11}
        &- a_{12} \, \brackets{
        a_{22} 
        - a_{23} \, a_{33}^{-1} \, a_{32}
        }^{-1} a_{21}
        - a_{12} \, \brackets{
        a_{32}
        - a_{33} \, a_{23}^{-1} \, a_{22}
        }^{-1} \, a_{31}
        \\[1mm]
        &- \, a_{13} \brackets{
        a_{23}
        - a_{22} \, a_{32}^{-1} \, a_{33}
        }^{-1} \, a_{21}
        - a_{13} \, \brackets{
        a_{33} 
        - a_{32} \, a_{22}^{-1} \, a_{23}
        }^{-1} \, a_{31}.
    \end{align}
\end{itemize}
\end{exmp}

Note that in the case of commutative ring $R$, the quasideterminant is given simply by
\begin{align}
    &&
    |X|_{ij}
    &= (-1)^{i + j} \frac{\det X}{\det X^{ij}}, 
    &&&
    \text{for any}
    &&
    i, j
    &= 1, 2, \dots, n.
    &&
\end{align}

Following the paper \cite{Retakh_Rubtsov_2010}, we define \textit{almost Hankel matrices} $H_n (i, j) = (x_{st})$, where $s, t = 0, 1, \dots, n$ and $i, j \geq 0$, for a sequence $x_0$, $x_1$, $\dots$ of the elements of $R$ in the following way:
\begin{align}
    x_{st}
    &= x_{s + t},
    &
    x_{nt}
    &= x_{i + t},
    &
    x_{s n}
    &= x_{s + j},
    &
    x_{n n}
    &= x_{i + j},
\end{align}
where $s, t < n$. In particular, $H_n (n,n)$  is a Hankel matrix. 

Let us denote by $h_n (i, j)$ the quasideterminant $|H_n (i, j)|_{nn}$. If at least one of the conditions $i < n$, $j < n$ holds, we have $h_n (i, j) = 0$.

\subsubsection{Noncommutative Toda equations}

Consider an associative unital division ring $R$ over a field $F$ equipped with a derivation $D: R \to R$ such that
\begin{enumerate}
    \item[1)] $D(\alpha) = 0$ for any $\alpha \in F$;
    
    \item[2)] there exists an element $t \in R$ such that $D(t) = 1$.
\end{enumerate}
Let us use the following notation $D(u) = u'$, $D^2(u) = u''$, $\dots$, and recall that $D(v^{-1}) = - v^{-1} v' v^{-1}$ for any invertible element $v \in R$. We assume that all objects related to the noncommutative case belong to the~ring~$R$. Elements of the ring $R$ we will call \textit{functions}.

In the paper \cite{Retakh_Rubtsov_2010} the following noncommutative analogs of the Toda equations \eqref{eq:scalToda_thetan} and \eqref{eq:scalToda_etam} were introduced:
\begin{align}
    \label{eq:ncToda_thetan}
    (\theta_n' \theta_n^{-1})'
    &= \theta_{n + 1} \theta_n^{-1}
    - \theta_n \theta_{n - 1}^{-1},
    &
    n 
    &\geq 0,
    \\[2mm]
    \label{eq:ncToda_etam}
    (\eta_{m}^{-1} \eta_{m}') '
    &= \eta_{m}^{-1} \eta_{m - 1} - \eta_{m + 1}^{-1} \eta_{m},
    & 
    m &\leq 0,
\end{align}
where $\theta_1 = \kappa_{1} = \eta_0^{-1}$ and $\theta_0 = \kappa_{-1}^{-1} = \eta_{-1}^{-1}$ with general initial conditions $\kappa_{-1}$ and $\kappa_1$. Further in the text, we will refer to the expressions $\theta' \theta^{-1}$ and $\theta^{-1} \theta'$ as the left and right noncommutative logarithmic derivatives of $\theta$, respectively.

There is a noncommutative generalization of Theorem \ref{thm:scalkappan}, obtained in \cite{Retakh_Rubtsov_2010}.
\begin{thm} \label{thm:ncthetan}
Let $\{a_n\}_{n \in \mathbb{N}}$, $\{ b_n \}_{n \in \mathbb{N}}$ be two sequences of elements in $R$ defined recursively as
\begin{align}
    \label{eq:ncan}
    &&
    a_n
    &= a_{n - 1}'
    + \sum_{
    \tiny
    \begin{array}{c}
         i + j = n - 2  
         \\
         i, j \geq 0
    \end{array}
    }
    a_i \, \kappa_{-1} \, a_j,
    &
    a_0
    &= \kappa_1,
    &&
    \\[2mm]
    \label{eq:ncbn}
    &&
    b_n
    &= b_{n - 1}'
    + \sum_{
    \tiny
    \begin{array}{c}
         i + j = n - 2  
         \\
         i, j \geq 0
    \end{array}
    }
    b_i \, \kappa_{1} \, b_j,
    &
    b_0
    &= \kappa_{-1}.
    &&
\end{align}
Let $\theta_{p + 1} = |A_p|_{pp}$ and $\eta_{q - 1} = |B_{- q}|_{-q, -q}$, where $A_n = (a_{i + j})$ and $B_n = (b_{i + j})$, $i, j = 0, 1, \dots, n$, are Hankel matrices.

Then $\theta_n$, $n \geq 0$ and $\eta_{m}$, $m \leq 0$ satisfy the systems \eqref{eq:ncToda_thetan} and \eqref{eq:ncToda_etam} respectively for general initial conditions $\kappa_{-1}$ and $\kappa_1$.
\end{thm}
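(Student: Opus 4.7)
The plan is to mirror the commutative argument of Theorem \ref{thm:scalkappan}, replacing ordinary determinants by quasideterminants and the classical Jacobi identity by its noncommutative counterpart. I would describe the plan for the ``positive'' equation \eqref{eq:ncToda_thetan}; the ``negative'' equation \eqref{eq:ncToda_etam} is handled by an entirely parallel argument with left and right multiplications interchanged, which explains why the statement uses the right logarithmic derivative $\eta_m^{-1}\eta_m'$ rather than the left one.

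First, I would identify each ingredient of \eqref{eq:ncToda_thetan}---namely $\theta_{n+1}$, the cross term $\theta_n\theta_{n-1}^{-1}\theta_n$, and the first two derivatives of $\theta_n$---with the quasideterminant $h_{n+1}(i,j)=|H_{n+1}(i,j)|_{n+1,n+1}$ of an almost Hankel matrix for a specific pair $(i,j)$. The recursion \eqref{eq:ncan}, rearranged as
\begin{equation*}
a_{k}' \;=\; a_{k+1} \;-\; \sum_{\substack{i+j=k-1\\ i,j\ge 0}} a_i\,\kappa_{-1}\,a_j,
\end{equation*}
shows that the derivation $D$ shifts the entry $a_k$ to $a_{k+1}$ modulo a quadratic correction involving $\kappa_{-1}$. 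Applying $D$ and $D^2$ row-by-row and column-by-column to $\theta_{p+1}=|A_p|_{pp}$ then yields explicit representations of $\theta_{p+1}'$ and $\theta_{p+1}''$ as almost Hankel quasideterminants $h_{p+1}(i,j)$ with appropriate $i,j$.

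Second, I would close the computation using the noncommutative Sylvester (Muir) identity of Gelfand--Retakh, which relates the quasideterminant at a corner entry of a matrix to a $2\times 2$ ``determinant'' whose entries are quasideterminants of smaller submatrices. Applied to the Hankel matrix $A_{n+1}$ with its last two rows and columns distinguished, this identity expresses $\theta_{n+1}$ (or its right product with $\theta_n^{-1}$) as a bilinear combination of the corner quasideterminants that were matched with $\theta_n$, $\theta_n'$, $\theta_n''$, and $\theta_n\theta_{n-1}^{-1}\theta_n$ in the first step. Substituting these identifications, multiplying through by $\theta_n^{-1}$ on the right, and invoking the noncommutative product rule $(\theta_n'\theta_n^{-1})' = \theta_n''\theta_n^{-1} - \theta_n'\theta_n^{-1}\theta_n'\theta_n^{-1}$, should produce exactly \eqref{eq:ncToda_thetan}.

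The main obstacle is the first step. Because $|X|_{pp}=x_{pp}-r_p (X^{pp})^{-1} c_p$, differentiating generates cross-terms through $D((X^{pp})^{-1})=-(X^{pp})^{-1} D(X^{pp})(X^{pp})^{-1}$, and the noncommutative Leibniz rule does not a priori preserve Hankel structure. The quadratic correction in \eqref{eq:ncan} is engineered precisely to absorb these cross-terms. Tracking this absorption across iterated differentiation, and verifying at each stage that the relevant submatrices remain invertible so every quasideterminant is well defined, constitutes the technical heart of the argument; once that bookkeeping is under control, the Sylvester identity step is essentially mechanical.
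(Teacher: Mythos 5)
Your plan coincides with the paper's own treatment: the paper does not prove this theorem in detail but cites Theorem 2.1 of Retakh--Rubtsov (2010) and remarks that the proof rests on properties of almost Hankel matrices and the noncommutative Sylvester identity, which is exactly the strategy you outline (express $\theta_{n+1}$, its derivatives, and the cross terms as quasideterminants $h_n(i,j)$ of almost Hankel matrices via the recursion \eqref{eq:ncan}, then close with the noncommutative Sylvester identity, with the mirrored left/right version for the $\eta_m$ chain). So the approach is the same; the technical bookkeeping you defer is precisely what the cited reference carries out.
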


\begin{rem}
The proof of Theorem \ref{thm:ncthetan} is based on the properties of almost Hankel matrices and the noncommutative generalization of the Sylvester identity \cite{gel1991determinants}, \cite{gel1992theory}.
\end{rem}

\begin{exmp}
Let us show that under the conditions of Theorem \ref{thm:ncthetan} the left logarithmic derivative $\theta_1' \, \theta_1^{-1}$ satisfies the noncommutative Toda chain \eqref{eq:ncToda_thetan}:
\begin{align}
    \brackets{\theta_1' \, \theta_{1}^{-1}}'
    &= \theta_2 \, \theta_1^{-1}
    - \theta_1 \, \theta_0^{-1},
\end{align}
where
\begin{align}
    \theta_0
    &= b_0^{-1},
    &
    \theta_1
    &= a_0,
    &
    \theta_2
    &= a_2 
    - a_1 \, a_0^{-1} \, a_1,
\end{align}
and the functions $a_n$ are defined by recurrence relations \eqref{eq:ncan}. Indeed,
\begin{align}
    \brackets{\theta_1' \, \theta_{1}^{-1}}'
    &= \brackets{
    a_0' \, a_0^{-1}
    }'
    = \brackets{a_1 \, a_0^{-1}}'
    = a_1' \, a_0^{-1} 
    - (a_0' \, a_0^{-1})^2
    = \brackets{
    a_2 - a_0 \, b_0 \, a_0
    } a_0^{-1}
    - (a_1 \, a_0^{-1})^2
    \\[1mm]
    &= \brackets{
    a_2 
    - a_1 \, a_0^{-1} \, a_1
    } a_0^{-1}
    - a_0 \, b_0
    = \theta_2 \, \theta_1^{-1}
    - \theta_1 \, \theta_0^{-1}.
\end{align}
Similar computations show that the right logarithmic derivative $\eta_{-1}^{-1} \, \eta_{-1}'$, of Hankel quasideterminant $\eta_{-1}$ (where the entries $b_n$ of $\eta_{-m}$ are defined by \eqref{eq:ncbn}), is a solution of the noncommutative Toda chain \eqref{eq:ncToda_etam}:
\begin{align}
    \brackets{\eta_{-1}^{-1} \, \eta_{-1}'}'
    &= \brackets{b_0^{-1} \, b_0'}'
    = \brackets{b_0^{-1} \, b_1}'
    = - \brackets{b_0^{-1} b_0'}^{-2}
    + b_0^{-1} \, b_1'
    = b_0^{-1} \brackets{
    b_2 - b_0 \, a_0 \, b_0
    }
    - \brackets{b_0^{-1} b_1}^{-2}
    \\[1mm]
    &= b_0^{-1} \brackets{
    b_2 - b_1 \, b_0^{-1} \, b_1
    }
    - a_0 \, b_0
    = \eta_{-1}^{-1} \, \eta_{-2}
    - \eta_0^{-1} \, \eta_{-1}.
\end{align}
\end{exmp}

\section{Commutative \Painleve IV equation}
\label{sec:scalP4}
In this section we consider the \Painleve IV equation written in the form
\begin{align}
    \label{eq:P4y}
    y ''
    &= \tfrac12 y^{-1} (y')^2
    + \tfrac32 y^3 
    - 2 t y^2
    + \brackets{\tfrac12 t^2 - (\alpha_1 - \alpha_0)} y
    - \tfrac12 \alpha_2^2 y^{-1},
\end{align}
where $y = y(t)$ and $\alpha_0 + \alpha_1 + \alpha_2 = 1$, and describe its \Backlund transformations and some of its representations in the Hirota bilinear form (see Sections \ref{sec:scalBT}, \ref{sec:Hirota}). We will follow \cite{okamoto1981tau}, \cite{noumi2004painleve} and refer the reader to these texts for details. 

In Section \ref{sec:scalHankel}, we prove that solutions of the \Painleve IV equation are expressible in terms of solutions of the Toda equations under some assumptions for the general initial functions $\kappa_{-1}$ and $\kappa_1$ (see Theorem \ref{thm:scalP4solmainthm} in Section~\ref{sec:scalHankel}).

\subsection{Symmetric form and \Backlund transformations} 
\label{sec:scalBT}

The symmetric form of the $\PIV$ equation \eqref{eq:P4y} is given by
\begin{align}
    \label{eq:P4sym}
    &\left\{
    \begin{array}{lcl}
         f_0'
         &=& f_0 f_1 - f_0 f_2 + \alpha_0,  
         \\[2mm]
         f_1'
         &=& f_1 f_2 - f_0 f_1 + \alpha_1,
         \\[2mm]
         f_2'
         &=& f_0 f_2 - f_1 f_2 + \alpha_2,
    \end{array}
    \right.
\end{align}
where $f_i = f_i (t)$ and $\alpha_0 + \alpha_1 + \alpha_2 = 1$. This system has the first integral 
$I = f_0 + f_1 + f_2 - t$
and is equivalent to $\PIV$ \eqref{eq:P4y} for $y = f_2$ by eliminating $f_0$ and $f_1$. We define $\tau$-functions $\tau_i$, $i = 0, 1, 2$ by the formulas
\begin{align}
    h_0
    &= \tau_0' \tau_0^{-1},
    &
    h_1
    &= \tau_1' \tau_1^{-1},
    &
    h_2
    &= \tau_2' \tau_2^{-1},
\end{align}
where $h_i$, $i = 0, 1, 2$, are Hamiltonians:
\begin{align}
    \label{eq:scalh0}
    h_0
    &= f_0 f_1 f_2
    + \frac{\alpha_1 - \alpha_2}{3} f_0
    + \frac{\alpha_1 + 2 \alpha_2}{3} f_1
    - \frac{2 \alpha_1 + \alpha_2}{3} f_2,
    \\[1mm]
    h_1
    &= f_0 f_1 f_2
    - \frac{2 \alpha_2 + \alpha_0}{3} f_0
    + \frac{\alpha_2 - \alpha_0}{3} f_1
    + \frac{\alpha_2 + 2 \alpha_0}{3} f_2,
    \\[1mm]
    h_2
    &= f_0 f_1 f_2
    + \frac{\alpha_0 + 2 \alpha_1}{3} f_0
    - \frac{2 \alpha_0 + \alpha_1}{3} f_1
    + \frac{\alpha_0 - \alpha_1}{3} f_2.
\end{align}

The symmetric form \eqref{eq:P4sym} is invariant under the following \Backlund transformations defined by Table \ref{tab:BTonP4sym}:
\begin{table}[H]
    \centering
    \begin{tabular}{c||ccc|ccc}
         & $\alpha_0$
         & $\alpha_1$
         & $\alpha_2$
         & $f_0$
         & $f_1$
         & $f_2$
         \\[1mm]
         \hline\hline
         $s_0$
         & $- \alpha_0$
         & $\alpha_1 + \alpha_0$
         & $\alpha_2 + \alpha_0$
         & $f_0$
         & $f_1 + \tfrac{\alpha_0}{f_0}$
         & $f_2 - \tfrac{\alpha_0}{f_0}$
         \\[2mm]
         $s_1$
         & $\alpha_0 + \alpha_1$
         & $- \alpha_1$
         & $\alpha_2 + \alpha_1$
         & $f_0 - \tfrac{\alpha_1}{f_1}$
         & $f_1$
         & $f_2 + \tfrac{\alpha_1}{f_1}$
         \\[2mm]
         $s_2$
         & $\alpha_0 + \alpha_2$
         & $\alpha_1 + \alpha_2$
         & $- \alpha_2$
         & $f_0 + \tfrac{\alpha_2}{f_2}$
         & $f_1 - \tfrac{\alpha_2}{f_2}$
         & $f_2$
         \\[2mm]
         \hline
         \vspace{4mm}
         $\pi$
         & $\alpha_1$
         & $\alpha_2$
         & $\alpha_0$
         & $f_1$
         & $f_2$
         & $f_0$
    \end{tabular}
    \caption{\Backlund transformations of the symmetric form \eqref{eq:P4sym}}
    \label{tab:BTonP4sym}
\end{table}

\Backlund transformations form an extended affine Weyl group of type $A_2^{(1)}$:
\begin{equation}
    \tilde{W} \brackets{A_2^{(1)}} 
    = \angleBrackets{s_0, s_1, s_2; \pi},
\end{equation}
where the generators $s_i$, $i = 0, 1, 2$ and $\pi$ commute with the derivation and satisfy the following fundamental relations
\begin{align}
    &&
    s_i^2
    &= 1,
    &
    (s_i s_{i + 1})^3
    &= 1,
    &
    \pi^3
    &= 1,
    &
    \pi s_i
    &= s_{i + 1} \pi,
    &
    i
    &\in \mathbb{Z}_3.
    &&
\end{align}

Using the relations between the $f$-variables and the $\tau$-functions
\begin{align}
    f_0
    &= (\ln \tau_2 \tau_1^{-1})' + \tfrac13 t,
    &
    f_1
    &= (\ln \tau_0 \tau_2^{-1})' + \tfrac13 t,
    &
    f_2
    &= (\ln \tau_1 \tau_0^{-1})' + \tfrac13 t,
\end{align}
one can show that $\tau$-functions satisfy different bilinear differential equations of the Hirota type. Here are some of them:
\begin{align} \label{eq:biltau1}
\begin{aligned}
    \brackets{
    D_t^2 + \tfrac13 t D_t - \tfrac29 t^2 + \tfrac13 (\alpha_0 - \alpha_1)
    } \tau_0 \cdot \tau_1
    &= 0,
    \\[1mm]
    \brackets{
    D_t^2 + \tfrac13 t D_t - \tfrac29 t^2 + \tfrac13 (\alpha_1 - \alpha_2)
    } \tau_1 \cdot \tau_2
    &= 0,
    \\[1mm]
    \brackets{
    D_t^2 + \tfrac13 t D_t - \tfrac29 t^2 + \tfrac13 (\alpha_2 - \alpha_0)
    } \tau_2 \cdot \tau_0
    &= 0,
\end{aligned}
\end{align}
and
\begin{align}
\begin{aligned} \label{eq:biltau2}
    \brackets{
    \tfrac12 D_t^2 - \tfrac13 (\alpha_1 - \alpha_2)
    } \tau_0 \cdot \tau_0
    &= s_1 (\tau_1) \, s_2 (\tau_2),
    \\[1mm]
    \brackets{
    \tfrac12 D_t^2 - \tfrac13 (\alpha_1 - \alpha_2)
    } \tau_1 \cdot \tau_1
    &= s_2 (\tau_2) \, s_0 (\tau_0),
    \\[1mm]
    \brackets{
    \tfrac12 D_t^2 - \tfrac13 (\alpha_1 - \alpha_2)
    } \tau_2 \cdot \tau_2
    &= s_0 (\tau_0) \, s_1 (\tau_1),
\end{aligned}
\end{align}
where $D_t^n$ is the Hirota operator:
\begin{align}
    D_t^n \, f \cdot g
    &= \left.
    \brackets{
    \frac{d}{dt} - \frac{d}{ds}
    }^n f(t) \, g(s) \right|_{s = t}.
\end{align}

\subsection{Translations and \texorpdfstring{$\tau$}{tau}-functions on the lattice}
\label{sec:Hirota}

By composing \Backlund transformations, we define the translation operators $T_i$, $i = 1, 2, 3$:
\begin{align}
    T_1
    &= \pi s_2 s_1,
    &
    T_2
    &= s_1 \pi s_2,
    &
    T_3
    &= s_2 s_1 \pi.
\end{align}
From this definition, it follows that $T_1 T_2 T_3 = 1$ and
\begin{align}
    T_1^l (\alpha_0)
    &= \alpha_0 + l,
    &
    T_1^l (\alpha_1)
    &= \alpha_1 - l,
    &
    T_1^l (\alpha_2)
    &= \alpha_2;
    \\[1mm]
    T_2^m (\alpha_0)
    &= \alpha_0,
    &
    T_2^m (\alpha_1)
    &= \alpha_1 + m,
    &
    T_2^m (\alpha_2)
    &= \alpha_2 - m;
    \\[1mm]
    T_3^n (\alpha_0)
    &= \alpha_0 - n,
    &
    T_3^n (\alpha_1)
    &= \alpha_1,
    &
    T_3^n (\alpha_2)
    &= \alpha_2 + n.
\end{align}
Thus, we define the $\tau$-functions on the lattice by
\begin{align}
    \tau_{l, m, n}
    &= T_1^l T_2^m T_3^n (\tau_0).
\end{align}
In particular, 
\begin{align}
    \tau_{0, 0, 0}
    &= \tau_0,
    &
    \tau_{1, 0, 0}
    &= \tau_1,
    &
    \tau_{1, 1, 0}
    &= \tau_2.
\end{align}
Since $T_1T_2 T_3 = 1$, for any $k \in \mathbb{Z}$ we have $\tau_{l + k, m + k, n + k} = \tau_{l, m, n}$. Applying $T_1^l T_2^m T_3^n$ on the bilinear equations \eqref{eq:biltau1} -- \eqref{eq:biltau2}, we obtain the following equations:
\begin{align} \label{eq:biltaun1}
\begin{aligned}
    &\brackets{
    D_t^2 + \tfrac13 t D_t - \tfrac29 t^2 + \tfrac13 (\alpha_0 - \alpha_1 + 2 l - m - n)
    } \tau_{l, m, n} \cdot \tau_{l + 1, m, n}
    = 0,
    \\[1mm]
    &\brackets{
    D_t^2 + \tfrac13 t D_t - \tfrac29 t^2 + \tfrac13 (\alpha_1 - \alpha_2 - l + 2 m - n)
    } \tau_{l + 1, m, n} \cdot \tau_{l + 1, m + 1, n}
    = 0,
    \\[1mm]
    &\brackets{
    D_t^2 + \tfrac13 t D_t - \tfrac29 t^2 + \tfrac13 (\alpha_2 - \alpha_0 - l - m + 2 n)
    } \tau_{l + 1, m + 1, n} \cdot \tau_{l, m, n}
    = 0,
\end{aligned}
\end{align}
and
\begin{align}
\begin{aligned} \label{eq:biltaun2}
    &\brackets{
    \tfrac12 D_t^2 - \tfrac13 (\alpha_1 - \alpha_2 - l + 2 m - n)
    } \tau_{l, m, n} \cdot \tau_{l, m, n}
    = \tau_{l, m + 1, n} \, \tau_{l, m - 1, n}
    \\[1mm]
    &\brackets{
    \tfrac12 D_t^2 - \tfrac13 (\alpha_1 - \alpha_2 - l - m + 2 n)
    } \tau_{l + 1, m, n} \cdot \tau_{l + 1, m, n}
    = \tau_{l + 1, m, n + 1} \, \tau_{l + 1, m , n - 1}
    \\[1mm]
    &\brackets{
    \tfrac12 D_t^2 - \tfrac13 (\alpha_1 - \alpha_2 + 2 l - m - n)
    } \tau_{l + 1, m + 1, n} \cdot \tau_{l + 1, m + 1, n}
    = \tau_{l + 2, m + 1, n} \, \tau_{l, m + 1, n}.
\end{aligned}
\end{align}

Following the notation in Section \ref{sec:scalToda}, we introduce the variables $\kappa_{l, m, n}$ by
\begin{align}
    \kappa_{l, m, n}
    &= e^{- \frac13 (l + m + n) t^2} \tau_{l, m, n} \, \tau_{0, 0, 0}^{-1}.
\end{align}
Here $\kappa_{l, m, n}$ depend on three indices corresponding to the three operators $T_1$, $T_2$, and $T_3$ (unlike in the paper \cite{joshi2006generating}): shift of the $i$-th index corresponds to the action of the operator $T_i,\,i=1,2,3$. Then the bilinear equations \eqref{eq:biltaun1} -- \eqref{eq:biltaun2} become
\begin{align}
    \label{eq:bilkap0kap1}
    &\brackets{
    D_t^2 
    - t D_t 
    + 2 \kappa_{-1, 0, 0} \, \kappa_{1, 0, 0}
    + \brackets{
    \alpha_0 
    - \alpha_1 
    + 2 l 
    + m 
    + n
    }
    } 
    \kappa_{l, m, n} 
    \cdot 
    \kappa_{l + 1, m, n}
    = 0,
    \\[1mm]
    \label{eq:bilkap1kap2}
    &\brackets{
    D_t^2 
    - t D_t 
    + 2 \kappa_{0, -1, 0} \, \kappa_{0, 1, 0}
    + \brackets{
    \alpha_1 - \alpha_2
    + l 
    + 2 m 
    + n
    + 2
    }
    } 
    \kappa_{l + 1, m, n } 
    \cdot 
    \kappa_{l + 1, m + 1, n}
    = 0,
    \\[1mm]
    \label{eq:bilkap2kap0}
    &
    \brackets{
    D_t^2 
    - t D_t 
    + 2 \kappa_{0, 0, -1} \, \kappa_{0, 0, 1}
    + \brackets{
    \alpha_2
    - \alpha_0 
    + l 
    + m 
    + 2 n
    + 4
    }
    } 
    \kappa_{l + 1, m + 1, n} 
    \cdot 
    \kappa_{l + 1, m + 1, n + 1}
    = 0,
\end{align}
\begin{align}
    \label{eq:kapToda1}
    &\brackets{
    \tfrac12 D_t^2
    + \kappa_{-1, 0, 0} \, \kappa_{1, 0, 0}
    + (m + n)
    } 
    \kappa_{l, m, n} \cdot
    \kappa_{l, m, n}
    = \kappa_{l + 1, m, n} 
    \,
    \kappa_{l - 1, m, n},
    \\[1mm]
    \label{eq:kapToda2}
    &\brackets{
    \tfrac12 D_t^2
    + \kappa_{0, -1, 0} \, \kappa_{0, 1, 0}
    + (l + n)
    } 
    \kappa_{l, m, n} \cdot
    \kappa_{l, m, n}
    = \kappa_{l, m + 1, n} 
    \,
    \kappa_{l, m - 1, n},
    \\[1mm] 
    \label{eq:kapToda3}
    &\brackets{
    \tfrac12 D_t^2
    + \kappa_{0, 0, -1} \, \kappa_{0, 0, 1}
    + (l + m)
    } 
    \kappa_{l, m, n} \cdot
    \kappa_{l, m, n}
    = \kappa_{l, m, n + 1} 
    \,
    \kappa_{l, m, n - 1}.
\end{align}

\subsection{Hankel determinant formula}
\label{sec:scalHankel}

In this section we consider a sequence of the $\kappa$-functions $\kappa_{n, 0, 0} = \kappa_n$, $n \in \mathbb{Z}$, only in the $T_1$-direction
(similar results can be formulated in the directions of the operators $T_2$ and $T_3$). This sequence can be regarded as being generated by the Toda equations \eqref{eq:kapToda1}:
\begin{align} \label{eq:kapTodan}
    &&
    \brackets{
    \tfrac12 D_t^2
    + \kappa_{-1} \, \kappa_1
    } \kappa_{n} \cdot \kappa_n
    &= \kappa_{n - 1} \, \kappa_{n + 1},
    &
    \kappa_0
    &= 1,
    &&
\end{align}
where $\kappa_{-1}$ and $\kappa_1$ are general initial conditions. As we have mentioned in Section \ref{sec:scalToda}, solutions of this equation are given by the Hankel determinant (see Theorem \ref{thm:scalkappan}). For the sake of completeness, we are going to show that solutions of the $\PIV[y_{n}; n]$ equation
\begin{align} \label{eq:P4yn}
    y_n ''
    &= \tfrac12 y_n^{-1} (y_n')^2
    + \tfrac32 y_n^3 
    - 2 t y_n^2
    + \brackets{\tfrac12 t^2 + \alpha_0 - \alpha_1 + 2 n} y_n
    - \tfrac12 \alpha_2^2 y_n^{-1},
\end{align}
where 
$$y_n = (\ln \kappa_{n + 1} \, \kappa_n^{-1})' + t,$$
are also given by the Hankel determinants, if we impose some conditions for $\kappa_{-1}$ and $\kappa_1$.

\begin{prop}
\label{thm:ym1y0sol}
Let $\kappa_{-1}$ and $\kappa_1$ satisfy the following relations
\begin{align}
    \label{eq:kapm1cond}
    &\kappa_{-1}'' - t \kappa_{-1}' + 2 \kappa_{-1}^2 \kappa_{1} + \brackets{\alpha_0 - \alpha_1 - 2} \kappa_{-1}
    = 0,
    \\[1mm]
    \label{eq:kap1cond}
    &\kappa_{1}'' + t \kappa_{1}' + 2 \kappa_{-1} \kappa_{1}^2 + \brackets{\alpha_0 - \alpha_1} \kappa_1
    = 0.
\end{align}  
Let $z_0 = \kappa_{-1} \kappa_{1} - (\alpha_1 + \alpha_2)$. 
Then
\begin{itemize}
    \item[(a)] $y_{0} = \kappa_{1}' \kappa_{1}^{-1} + t$ is a solution of $\PIV[y_0; 0]$ \eqref{eq:P4yn}, if 
    \begin{align} \label{eq:zy0cond}
        - z_0'
        &= y_{0}^{-1} z_0^2 
        + (\alpha_2 - y_{0}^2) y_{0}^{-1} z_0 - (\alpha_1 + \alpha_2) y_{0};
    \end{align}
    
    \item[(b)] $y_{-1} = - \kappa_{-1}' \kappa_{-1}^{-1} + t$ is a solution of $\PIV[y_{-1}; -1]$ \eqref{eq:P4yn}, if 
    \begin{align} \label{eq:zym1cond}
        z_0'
        &= y_{-1}^{-1} z_0^2 
        + (\alpha_2 - y_{-1}^2) y_{-1}^{-1} z_0 - (\alpha_1 + \alpha_2) y_{-1}.
    \end{align}
\end{itemize}
\end{prop}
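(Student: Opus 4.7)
The plan is to reduce each assertion to a short elimination argument. From the hypothesis on $\kappa_{\pm 1}$ I first extract a \emph{first-order} algebraic relation between $y_{0}$ (resp.\ $y_{-1}$) and $z_0$, then differentiate it, combine the result with the given Riccati-type equation \eqref{eq:zy0cond} (resp.\ \eqref{eq:zym1cond}), eliminate $z_0$, and verify that the resulting second-order ODE coincides with $\PIV[y_0;0]$ (resp.\ $\PIV[y_{-1};-1]$).

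For part (a), differentiating $y_0 = \kappa_1' \kappa_1^{-1} + t$ gives $y_0' = 1 + \kappa_1''\kappa_1^{-1} - (y_0 - t)^2$. Substituting $\kappa_1''$ from \eqref{eq:kap1cond} and using $\kappa_{-1}\kappa_1 = z_0 + \alpha_1 + \alpha_2$ together with $\alpha_0+\alpha_1+\alpha_2=1$ yields the clean quadratic relation
\begin{equation*}
    2z_0 + y_0' + y_0^2 - t\, y_0 + \alpha_2 = 0.
\end{equation*}
For part (b), starting from $y_{-1} = -\kappa_{-1}'\kappa_{-1}^{-1} + t$ and applying the analogous procedure with \eqref{eq:kapm1cond} produces the mirror identity $2z_0 - y_{-1}' + y_{-1}^2 - t\, y_{-1} + \alpha_2 = 0$; the sign flip is exactly what will generate the shift from parameter $0$ to $-1$.

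Next I differentiate this auxiliary relation to express $z_0'$ in terms of $y_0, y_0', y_0''$, substitute into \eqref{eq:zy0cond}, and replace every remaining $z_0$ on the right via the same relation. Multiplying through by $2y_0$ and expanding $z_0^2$ produces a polynomial identity whose leading term is $2y_0 y_0''$; the $y_0'$-linear contributions between the $(y_0^2-ty_0+\alpha_2)$ cross terms and the explicit terms $-4y_0^2 y_0'+2ty_0 y_0'$ collapse, and what remains is
\[
    2y_0 y_0'' = (y_0')^2 + 3 y_0^4 - 4t\, y_0^3 + (t^2 + 2\alpha_0 - 2\alpha_1)\, y_0^2 - \alpha_2^2,
\]
which is $\PIV[y_0;0]$ after division by $2y_0$; the coefficient of $y_0^2$ matches only after invoking $\alpha_0+\alpha_1+\alpha_2=1$. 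Repeating the bookkeeping for $y_{-1}$ with the opposite sign gives the parameter $\alpha_0 - \alpha_1 - 2$ and proves (b).

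The computation is essentially mechanical; the only conceptually delicate point is Step 1, namely recognising why the particular shift $z_0 = \kappa_{-1}\kappa_1 - (\alpha_1 + \alpha_2)$ is chosen so as to clear all constants when the Leibniz expansion of $(\ln\kappa_{\pm 1})''$ is reorganised with help of \eqref{eq:kapm1cond}--\eqref{eq:kap1cond}. Once this first-order integral is in hand, the rest is careful sign bookkeeping — the two Riccati assumptions in (a) and (b) are designed precisely so that, after differentiation of the auxiliary relation and substitution, the elimination of $z_0$ collapses to the correct $\PIV$ equation.
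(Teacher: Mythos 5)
Your proposal is correct and follows essentially the same route as the paper: differentiate $y_0=\kappa_1'\kappa_1^{-1}+t$ (resp.\ $y_{-1}=-\kappa_{-1}'\kappa_{-1}^{-1}+t$), use \eqref{eq:kap1cond} (resp.\ \eqref{eq:kapm1cond}) together with $\kappa_{-1}\kappa_1=z_0+\alpha_1+\alpha_2$ and $\alpha_0+\alpha_1+\alpha_2=1$ to obtain the first-order relation $y_0'=-2z_0-y_0^2+ty_0-\alpha_2$ (resp.\ $y_{-1}'=2z_0+y_{-1}^2-ty_{-1}+\alpha_2$), and couple it with the Riccati assumption \eqref{eq:zy0cond} (resp.\ \eqref{eq:zym1cond}). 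The only difference is that you carry out the elimination of $z_0$ explicitly to reach $2y_0y_0''=(y_0')^2+3y_0^4-4ty_0^3+(t^2+2\alpha_0-2\alpha_1)y_0^2-\alpha_2^2$ (which is indeed $\PIV[y_0;0]$, and analogously with the shift $-2$ for $y_{-1}$), whereas the paper stops at the equivalent first-order system and asserts its equivalence to \eqref{eq:P4yn}.
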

\begin{proof}
First of all, we notice that the conditions for $\kappa_{-1}$ and $\kappa_1$ follow from the bilinear equation \eqref{eq:bilkap0kap1} for $(l, m, n) = (-1, 0, 0)$ and $(l, m, n) = (0, 0, 0)$.

\medskip
\textbullet \,\, \textbf{Case (a).} 
Consider the case of $y_{0}$. Using the definition of $y_0$, we take the derivative w.r.t. $t$:
\begin{align*}
    y_{0}'
    &= (\kappa_1' \, \kappa_1^{-1} + t)' 
    = \kappa_1'' \kappa_1^{-1} - (\kappa_1' \kappa_1^{-1})^2 + 1.
\end{align*}
Then we use the condition for $\kappa_1$ to reduce the order of the relation:
\begin{align*}
    y_{0}'
    &= \kappa_1'' \kappa_1^{-1} 
    - (\kappa_1' \kappa_1^{-1})^2 
    + 1
    = - t \kappa_1' \kappa_1^{-1} - 2 \kappa_{-1} \kappa_1  
    - (\kappa_{1}' \kappa_1^{-1})^2 
    + (2 \alpha_1 + \alpha_2).
\end{align*}
Substituting $\kappa_{1}' \, \kappa_{1}^{-1} = y_0 - t$ and $\kappa_{-1} \, \kappa_1 = z_0 + (\alpha_1 + \alpha_2)$, we arrive at the equation
\begin{align*}
    y_{0}'
    &= - 2 z_0 - y_0^2 + t y_0 - \alpha_2,
\end{align*}
that, with the assumption \eqref{eq:zy0cond} for $z_0$, gives the following system
\begin{align} \label{eq:y0sys}
    &\left\{
    \begin{array}{lcl}
         - z_0'
         &=& y_{0}^{-1} z_0^2 + (\alpha_2 - y_{0}^2) y_{0}^{-1} z_0 - (\alpha_1 + \alpha_2) y_{0},
         \\[2mm]
         - y_0'
         &=& 2 z_0 + y_0^2 - t y_0 + \alpha_2,
    \end{array}
    \right.
\end{align}
which is equivalent to the $\PIV[y_0; 0]$ equation \eqref{eq:P4yn} with $n = 0$:
\begin{equation}
    y_0''
    = \tfrac12 y_0^{-1} (y_0')^2 + \tfrac32 y_0^3 - 2 t y_0^2 + \brackets{
    \tfrac12 t^2 + \alpha_0 - \alpha_1
    } y_0 - \tfrac12 \alpha_2^2 y_0^{-1}.
\end{equation}

\medskip
\textbullet \,\, \textbf{Case (b).}
The case of $y_{-1}$ can be regarded in a similar way. The resulting system is given by
\begin{align} \label{eq:ym1sys}
    &\left\{
    \begin{array}{lcl}
         z_0'
         &=& y_{-1}^{-1} z_0^2 + (\alpha_2 - y_{-1}^2) y_{-1}^{-1} z_0 - (\alpha_1 + \alpha_2) y_{-1},
         \\[2mm]
         y_{-1}'
         &=& 2 z_0 + y_{-1}^2 - t y_{-1} + \alpha_2,
    \end{array}
    \right.
\end{align}
where the first equation is just condition \eqref{eq:zym1cond}.
Eliminating $z$ from this system, we arrive at the~$\PIV[y_{-1}; -1]$ equation \eqref{eq:P4yn} with $n = -1$:
\begin{equation}
    y_{-1}''
    = \tfrac12 y_{-1}^{-1} (y_{-1}')^2 + \tfrac32 y_{-1}^3 - 2 t y_{-1}^2 + \brackets{
    \tfrac12 t^2 + \alpha_0 - \alpha_1 - 2
    } y_{-1} - \tfrac12 \alpha_2^2 y_{-1}^{-1}.
\end{equation}
\end{proof}

\begin{rem}
If we substitute $y_{-1}$ and $z_0$ into \eqref{eq:zym1cond} or $y_{0}$ and $z_0$ into \eqref{eq:zy0cond}, we get the following additional condition for $\kappa_{-1}$, $\kappa_1$:
    \begin{align} \label{eq:kapm1kap1thirdcond}
        \kappa_{-1}' \, \kappa_1'
        + t \brackets{
        \kappa_{-1}' \, \kappa_1
        - \kappa_{-1} \, \kappa_1'
        }
        + \kappa_{-1}^2 \kappa_1^2
        - (t^2 - \alpha_0 + \alpha_1 + 1) \kappa_{-1} \, \kappa_1
        &= \alpha_1 (\alpha_0 - 1).
    \end{align}
So, as in the case of the $\text{PII}$ equation (see Proposition 2.2 in \cite{joshi2004generating}), the initial functions $\kappa_{-1}$ and $\kappa_1$ should satisfy three conditions: \eqref{eq:kapm1cond}, \eqref{eq:kap1cond}, and \eqref{eq:kapm1kap1thirdcond}.

Condition \eqref{eq:kapm1kap1thirdcond} can be derived without the use of the auxiliary function $z_0$. If we require that $y_0 = \kappa_1' \, \kappa_1^{-1} + t$ (resp. $y_{-1} = - \kappa_{-1}' \, \kappa_{-1}^{-1} + t$) is a solution of the $\PIV[y_0; 0]$ (resp. $\PIV[y_{-1}; - 1]$) equation, where $\kappa_{- 1}$ and $\kappa_{1}$ satisfy \eqref{eq:kapm1cond} -- \eqref{eq:kap1cond}, then $\kappa_{-1}$, $\kappa_1$ must satisfy condition \eqref{eq:kapm1kap1thirdcond}. The converse statement is also true, but it is more convenient to prove it in other notation and using \Backlund transformations.
\end{rem}

A generalization of Proposition \ref{thm:ym1y0sol} to the case of arbitrary $n \in \mathbb{Z}$ is given in Theorem \ref{thm:scalkapnsol}:
\begin{thm} \label{thm:scalkapnsol}
Let $\kappa_n$, $n \in \mathbb{Z}$, satisfy the Toda equations \eqref{eq:kapTodan} and the bilinear equation
\begin{align}
    \label{eq:bilkapncond}
    \brackets{
    D_t^2 - t D_t + 2 \kappa_{-1} \, \kappa_1 + (\alpha_0 - \alpha_1 + 2 n)
    } \kappa_{n} \cdot \kappa_{n + 1}
    &= 0.
\end{align}
Let $y_n = (\ln \kappa_{n + 1} \, \kappa_{n}^{-1})' + t$ and the function $z_n$ be defined as 
\begin{equation} \label{eq:zn}
    z_n 
    = \kappa_{n - 1} \, \kappa_n^{-2} \, \kappa_{n + 1} - (\alpha_1 + \alpha_2 - n).
\end{equation}
Then 
\begin{itemize}
    \item[(a)] for $n \geq 0$, $y_n = y_n (t)$ is a solution of the $\PIV[y_n; n]$ equation \eqref{eq:P4yn}, if 
    \begin{align} \label{eq:znposcond}
        - z_n' 
        &= y_n^{-1} z_n^2
        + \brackets{\alpha_2 - y_n^2} y_n^{-1} z_n
        - \brackets{\alpha_1 + \alpha_2 - n} y_n;
    \end{align}
    
    \item[(b)] for $n \leq 0$, $y_{n - 1} = y_{n - 1} (t)$ is a solution of the $\PIV[y_{n - 1}; n - 1]$ equation \eqref{eq:P4yn}, if 
    \begin{align} \label{eq:znnegcond}
        z_n' 
        &= y_{n - 1}^{-1} z_n^2
        + \brackets{\alpha_2 - y_{n - 1}^2} y_{n - 1}^{-1} z_n
        - \brackets{\alpha_1 + \alpha_2 - n} y_{n - 1}.
    \end{align}
\end{itemize}
\end{thm}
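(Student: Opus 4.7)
The plan is to extend the two-case argument from Proposition~\ref{thm:ym1y0sol} to arbitrary $n\in\mathbb{Z}$ by replacing the ad hoc use of the relations \eqref{eq:kapm1cond}--\eqref{eq:kap1cond} (which are just \eqref{eq:bilkapncond} evaluated at $n=-1$ and $n=0$) by a systematic combination of the Toda equation \eqref{eq:kapTodan} at indices $n$ and $n+1$ with the bilinear equation \eqref{eq:bilkapncond} at index $n$. In both cases the whole argument reduces to producing a first-order ODE $y'=P(y,z)$ which, together with the hypothesis \eqref{eq:znposcond} (case a) or \eqref{eq:znnegcond} (case b), forms a planar system whose elimination of $z$ is the same routine algebraic manipulation as at the end of the proof of Proposition~\ref{thm:ym1y0sol}.

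Setting $u_k := \kappa_k'\kappa_k^{-1}$, so that $y_n = u_{n+1}-u_n+t$, I would first rewrite the Toda equation at index $k$ as $u_k' = \kappa_{k-1}\kappa_k^{-2}\kappa_{k+1}-\kappa_{-1}\kappa_1 = z_k+(\alpha_1+\alpha_2-k)-\kappa_{-1}\kappa_1$; this immediately yields the useful identity $y_n'=z_{n+1}-z_n$. Next I would expand \eqref{eq:bilkapncond} at index $n$ by dividing through by $\kappa_n\kappa_{n+1}$: the $D_t^2$ contribution becomes $u_n'+u_{n+1}'+(u_{n+1}-u_n)^2$ and the $-tD_t$ contribution becomes $t(u_{n+1}-u_n)$, so using $u_{n+1}-u_n = y_n-t$ the identity reads
\begin{align}
    u_n'+u_{n+1}'+y_n^2-ty_n+2\kappa_{-1}\kappa_1+\alpha_0-\alpha_1+2n=0.
\end{align}
Substituting the Toda formulas for $u_n'$ and $u_{n+1}'$ cancels the product $\kappa_{-1}\kappa_1$, and after using $\alpha_0+\alpha_1+\alpha_2=1$ the whole thing collapses to the clean identity $z_n+z_{n+1}+y_n^2-ty_n+\alpha_2=0$. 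Combined with $y_n'=z_{n+1}-z_n$ this decouples algebraically into $2z_n=-y_n'-y_n^2+ty_n-\alpha_2$ and $2z_{n+1}=y_n'-y_n^2+ty_n-\alpha_2$.

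In case (a), $n\ge 0$, the first of these formulas paired with \eqref{eq:znposcond} is the exact analog of the system \eqref{eq:y0sys} with $\alpha_1+\alpha_2$ replaced by $\alpha_1+\alpha_2-n$, and eliminating $z_n$ gives $\PIV[y_n; n]$. In case (b), $n\le 0$, I would re-run the same computation at the shifted index $m:=n-1$, so that $z_{m+1}=z_n$ is the $z$ adjacent to $y_{n-1}$ from above; the second formula then reads $y_{n-1}' = y_{n-1}^2 - ty_{n-1} + 2z_n + \alpha_2$, which together with \eqref{eq:znnegcond} is the analog of \eqref{eq:ym1sys} and by the same elimination yields $\PIV[y_{n-1}; n-1]$. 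The main obstacle is bookkeeping of the sign/index asymmetry between the two cases: case (a) pairs $y_n$ with the $z_n$ lying below it (yielding $-2z_n$ in $y_n'$) while case (b) pairs $y_{n-1}$ with the $z_n=z_{(n-1)+1}$ lying above it (yielding $+2z_n$ in $y_{n-1}'$), and the opposite signs in conditions \eqref{eq:znposcond} and \eqref{eq:znnegcond} are exactly what is needed to compensate; the shift $\alpha_1+\alpha_2\mapsto\alpha_1+\alpha_2-n$ inside the condition on $z_n$ is what converts the $y$-coefficient $\tfrac12 t^2+\alpha_0-\alpha_1$ of the $n=0$ equation into $\tfrac12 t^2+\alpha_0-\alpha_1+2n$ in $\PIV[y_n; n]$ (respectively $+2(n-1)$ in case~(b)).
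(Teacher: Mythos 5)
Your proposal is correct and follows essentially the same route as the paper: it combines the Toda relation (to trade $\kappa_k''\kappa_k^{-1}+\kappa_{-1}\kappa_1$ for $z_k+(\alpha_1+\alpha_2-k)$) with the bilinear relation at index $n$ (resp.\ $n-1$) to obtain exactly the paper's first-order equations $-y_n'=2z_n+y_n^2-ty_n+\alpha_2$ and $y_{n-1}'=2z_n+y_{n-1}^2-ty_{n-1}+\alpha_2$, and then eliminates $z_n$ against \eqref{eq:znposcond}, \eqref{eq:znnegcond} to recover $\PIV[y_n;n]$ and $\PIV[y_{n-1};n-1]$. Your repackaging through $u_k=\kappa_k'\kappa_k^{-1}$, with the two identities $y_n'=z_{n+1}-z_n$ and $z_n+z_{n+1}+y_n^2-ty_n+\alpha_2=0$, is only a tidier bookkeeping of the same computation (both arguments also use the bilinear equation at the shifted index $n-1$ for case (b), consistent with the lattice family \eqref{eq:bilkap0kap1}).
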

\begin{proof}
The proof is given by a straightforward computation, in which we will use the bilinear equation \eqref{eq:bilkapncond}:
\begin{align}
    \kappa_n'' \, \kappa_{n}^{-1} 
    - 2 \kappa_n' \, \kappa_n^{-1} \, \kappa_{n + 1}' \, \kappa_{n + 1}^{-1}
    + \kappa_{n + 1}'' \, \kappa_{n + 1}^{-1}
    - t \brackets{
    \kappa_n' \, \kappa_{n}^{-1}
    - \kappa_{n + 1}' \, \kappa_{n + 1}^{-1}
    }
    + 2 \kappa_{-1} \, \kappa_1
    + (\alpha_0 - \alpha_1 + 2 n)
    &= 0,
\end{align}
which is just the equation \eqref{eq:bilkap0kap1} for the set $(n, 0, 0)$, and the Toda equations \eqref{eq:kapTodan}:
\begin{align}
    \kappa_n'' \, \kappa_n^{-1} 
    - {(\kappa_n' \, \kappa_n^{-1})}^2
    + \kappa_{-1} \, \kappa_1
    &= \kappa_{n - 1} \, \kappa_n^{-2} \, \kappa_{n + 1}.
\end{align}
Note that case (a) and case (b) coincide with statements in cases (a) and (b) from Proposition \ref{thm:ym1y0sol} for $n = 0$.

\medskip
\textbullet \,\, \textbf{Case (a).} 
Take the derivative of $y_n = \kappa_{n + 1}' \, \kappa_{n + 1}^{-1} - \kappa_{n}' \, \kappa_n^{-1} + t$ w.r.t. $t$:
\begin{align}
    y_n'
    &= \brackets{
    \kappa_{n + 1}' \, \kappa_{n + 1}^{-1} 
    - \kappa_{n}' \, \kappa_n^{-1} 
    + t
    }'
    = \kappa_{n + 1}'' \, \kappa_{n + 1}^{-1}
    - {(\kappa_{n + 1}' \, \kappa_{n + 1}^{-1})}^2 
    - \kappa_{n}'' \, \kappa_{n}^{-1}
    + {(\kappa_{n}' \, \kappa_{n}^{-1})}^2
    + 1.
\end{align}
From the bilinear equation it follows that
\begin{align}
    \kappa_{n + 1}'' \, \kappa_{n + 1}^{-1}
    &= - \kappa_n'' \, \kappa_{n}^{-1} 
    + 2 \kappa_n' \, \kappa_n^{-1} \, \kappa_{n + 1}' \, \kappa_{n + 1}^{-1}
    + t \brackets{
    \kappa_n' \, \kappa_{n}^{-1}
    - \kappa_{n + 1}' \, \kappa_{n + 1}^{-1}
    }
    - 2 \kappa_{-1} \, \kappa_1
    - (\alpha_0 - \alpha_1 + 2 n).
\end{align}
Thus,
\begin{align}
    y_n'
    = - 2 \kappa_{n}'' \, \kappa_{n}^{-1} 
    - 2 \kappa_{-1} \, \kappa_1
    + 2 \kappa_n' \, \kappa_n^{-1} \, \kappa_{n + 1}' \, \kappa_{n + 1}^{-1}
    + t \brackets{
    \kappa_n' \, \kappa_{n}^{-1}
    - \kappa_{n + 1}' \, \kappa_{n + 1}^{-1}
    }
    - {(\kappa_{n + 1}' \, \kappa_{n + 1}^{-1})}^2 
    + {(\kappa_{n}' \, \kappa_{n}^{-1})}^2
    \\[1mm]
    + \brackets{
    2 \alpha_1 + \alpha_2 - 2 n
    },
\end{align}
where we have used the condition $\alpha_0 + \alpha_1 + \alpha_2 = 1$. Since $\kappa_n$ satisfy the Toda equations, we replace $\kappa_{n}'' \, \kappa_{n}^{-1} + \kappa_{-1} \, \kappa_1$ by $\kappa_{n - 1} \, \kappa_n^{-2} \, \kappa_{n + 1} + {(\kappa_n' \, \kappa_n^{-1})}^2$. The result can be written as
\begin{align}
    y_n'
    &= - 2 \kappa_{n - 1} \, \kappa_n^{-2} \, \kappa_{n + 1} 
    - t \brackets{
    \kappa_{n + 1}' \, \kappa_{n + 1}^{-1}
    - \kappa_n' \, \kappa_{n}^{-1}
    }
    - \brackets{
     \kappa_{n + 1}' \, \kappa_{n + 1}^{-1}
    - \kappa_n' \, \kappa_n^{-1}
    }^2
    + \brackets{
    2 \alpha_1 + \alpha_2 - 2 n
    }.
\end{align}
Recall that from the definitions of $y_n$ and $z_n$, it follows that
$$(\ln \kappa_{n + 1} \, \kappa_n^{-1})' = y_n - t$$ 
and $\kappa_{n - 1} \, \kappa_n^{-2} \, \kappa_{n + 1}  = z_n + (\alpha_1 + \alpha_2 - n)$. Therefore, we obtain
\begin{align}
    y_n'
    &= - 2 \brackets{
    z_n + (\alpha_1 + \alpha_2 - n)
    }
    - t (y_n - t)
    - (y_n - t)^2
    + \brackets{
    2 \alpha_1 + \alpha_2 - 2 n
    }
    = - 2 z_n - y_n^2 + t y_n - \alpha_2.
\end{align}
This equation with condition \eqref{eq:znposcond} give the system
\begin{align}
    \left\{
    \begin{array}{lcl}
         - z_n'
         &=& y_{n}^{-1} z_n^2 
         + (\alpha_2 - y_{n}^2) y_{n}^{-1} z_n 
         - (\alpha_1 + \alpha_2 - n) y_{n},
         \\[2mm]
         - y_{n}'
         &=& 2 z_n + y_{n}^2 - t y_{n} + \alpha_2,
    \end{array}
    \right.
\end{align}
that is equivalent to the $\PIV[y_n;n]$ equation \eqref{eq:P4yn}:
\begin{equation}
    y_n''
    = \tfrac12 y_n^{-1} {(y_n')}^2
    + \tfrac32 y_n^3 
    - 2 t y_n^2
    + \brackets{
    \tfrac12 t^2 + \alpha_0 - \alpha_1 + 2 n
    } y_n
    - \tfrac12 \alpha_2^2 y_n^{-1}.
\end{equation}

\medskip
\textbullet \,\, \textbf{Case (b).}
Similarly, for the case (b), we have the following chain of identities:
\begin{align}
    y_{n - 1}'
    &
    \begin{aligned}
    = \kappa_n'' \, \kappa_n^{-1} 
    - {(\kappa_{n}' \, \kappa_n^{-1})}^2
    - \kappa_{n - 1}'' \, \kappa_{n - 1}^{-1} 
    + {(\kappa_{n - 1}' \, \kappa_{n - 1}^{-1})}^2
    + 1
    \end{aligned}
    \\[2mm]
    &
    \begin{aligned}
    = 2 \kappa_n'' \, \kappa_n^{-1} + 2 \kappa_{-1} \, \kappa_1
    - {(\kappa_n' \, \kappa_n^{-1})}^2
    + {(\kappa_{n - 1}' \, \kappa_{n - 1}^{-1})}^2
    - 2 \kappa_{n - 1}' \, \kappa_{n - 1}^{-1} \, \kappa_n' \, \kappa_n^{-1}
    \\[1mm]
    - t \brackets{
    \kappa_{n - 1}' \, \kappa_{n - 1}^{-1}
    - \kappa_n' \, \kappa_{n}^{-1}
    }
    + (- 2 \alpha_1 - \alpha_2 + 2 n)
    \end{aligned}
    \\[2mm]
    &
    \begin{aligned}
    = 2 \kappa_{n - 1} \, \kappa_n^{-2} \, \kappa_{n + 1}
    + \brackets{
    \kappa_n' \, \kappa_n^{-1}
    - \kappa_{n - 1}' \, \kappa_{n - 1}^{-1}
    }^2
    + t \brackets{
    \kappa_n' \, \kappa_n^{-1}
    - \kappa_{n - 1}' \, \kappa_{n - 1}^{-1}
    }
    \\[1mm]
    + (- 2 \alpha_1 - \alpha_2 + 2 n).
    \end{aligned}
\end{align}
Replacing $(\ln \kappa_n \, \kappa_{n - 1}^{-1})'$ by $y_{n - 1} - t$ and $\kappa_{n - 1} \, \kappa_n^{-2} \, \kappa_{n + 1}$ by $z_n + (\alpha_1 + \alpha_2 - n)$ and taking into account the condition \eqref{eq:znnegcond}, one can obtain the system
\begin{align}
    \left\{
    \begin{array}{lcl}
         z_n'
         &=& y_{n - 1}^{-1} z_n^2 + (\alpha_2 - y_{n - 1}^2) y_{n - 1}^{-1} z_n - (\alpha_1 + \alpha_2 - n) y_{n - 1},
         \\[2mm]
         y_{n - 1}'
         &=& 2 z_n + y_{n - 1}^2 - t y_{n - 1} + \alpha_2,
    \end{array}
    \right.
\end{align}
which is also equivalent to the $\PIV[y_{n - 1}; n - 1]$ equation \eqref{eq:P4yn}:
\begin{equation}
    y_{n - 1}''
    = \tfrac12 y_{n - 1}^{-1} {(y_{n - 1}')}^2
    + \tfrac32 y_{n - 1}^3 
    - 2 t y_{n - 1}^2
    + \brackets{
    \tfrac12 t^2 + \alpha_0 - \alpha_1 + 2 (n - 1)
    } y_{n - 1}
    - \tfrac12 \alpha_2^2 y_{n - 1}^{-1}.
\end{equation}
\end{proof}

For our next goal, it is convinient to introduce variables $\theta_{n} = \kappa_{n} \, \kappa_{n - 1}^{-1}$, $n \geq 0$, and $\eta_{m} = \kappa_{m} \, \kappa_{m + 1}^{-1}$, $m \leq 0$. Then bilinear equation \eqref{eq:bilkapncond} can be rewritten as
\begin{align}
    &\theta_{n + 1}''
    + t \, \theta_{n + 1}'
    + 2 \theta_{n + 1}^2 \, \theta_n^{-1}
    + \brackets{\alpha_0 - \alpha_1 + 2 n} \theta_{n + 1} 
    = 0,
    \\[2mm]
    &\eta_{m - 1}''
    - t \, \eta_{m - 1}'
    + 2 \eta_{m - 1}^2 \, \eta_{m}^{-1}
    + (\alpha_0 - \alpha_1 + 2 (m - 1)) \eta_{m - 1}
    = 0.
\end{align}
For instance, in the case of $\theta_{n + 1}$, we have the following identities:
\begin{align}
    \kappa_n^{-2}
    \brackets{
    D_t^2 - t D_t + 2 \kappa_{-1} \, \kappa_1 + (\alpha_0 - \alpha_1 + 2 n)
    } \kappa_{n} \cdot \kappa_{n + 1}
    &= 0,
    \\[1mm]
    \theta_{n + 1}''
    + 2 \theta_{n + 1} \kappa_{n}^{-2} 
    \brackets{
    \kappa_n'' \kappa_n - (\kappa_n')^2
    }
    + t \theta_{n + 1}'
    + 2 \kappa_{-1} \kappa_1 \theta_{n + 1}
    + (\alpha_0 - \alpha_1 + 2 n) \theta_{n + 1}
    &= 0,
    \\[1mm]
    \theta_{n + 1}''
    + 2 \theta_{n + 1} \kappa_{n}^{-2} 
    \brackets{
    \kappa_{n - 1} \kappa_{n + 1} - \kappa_{-1} \kappa_1 \kappa_n^2
    }
    + t \theta_{n + 1}'
    + 2 \kappa_{-1} \kappa_1 \theta_{n + 1}
    + (\alpha_0 - \alpha_1 + 2 n) \theta_{n + 1}
    &= 0,
    \\[1mm]
    \theta_{n + 1}''
    + 2 \theta_{n + 1}^2 \theta_n^{-1}
    + t \theta_{n + 1}'
    + (\alpha_0 - \alpha_1 + 2 n) \theta_{n + 1}
    &= 0.
\end{align}
where we used the Toda chain \eqref{eq:kapTodan} and the definition of $\theta_{n + 1}$. 

\begin{rem}
The resulting conditions, \eqref{eq:thetancond} and \eqref{eq:etamcond}, are consequences of the bilinear equation \eqref{eq:bilkapncond} and the Toda equations~\eqref{eq:kapTodan}, but they implicitly require the Toda chain to hold.
\end{rem}

Theorem \ref{thm:scalkapnsol} can be reformulated in the following way.

\begin{thm} \label{thm:scalthetanetamsol}
Let $\kappa_n$, $n \in \mathbb{Z}$, be solution of the Toda chain \eqref{eq:kapTodan}. Assume that the functions 
\begin{align}
    &&
    \theta_{n} 
    &= \kappa_{n} \, \kappa_{n - 1}^{-1}, 
    &
    n 
    &\geq 0,
    &&&
    &\text{and}
    &&&
    \eta_{m} 
    &= \kappa_{m} \, \kappa_{m + 1}^{-1},
    &
    m 
    &\leq 0,
    &&
\end{align} 
satisfy the following equations
\begin{align}
    \label{eq:thetancond}
    &\theta_{n + 1}''
    + t \, \theta_{n + 1}'
    + 2 \theta_{n + 1}^2 \, \theta_n^{-1}
    + \brackets{\alpha_0 - \alpha_1 + 2 n} \theta_{n + 1} 
    = 0,
    \\[2mm]
    \label{eq:etamcond}
    &\eta_{m - 1}''
    - t \, \eta_{m - 1}'
    + 2 \eta_{m - 1}^2 \, \eta_{m}^{-1}
    + (\alpha_0 - \alpha_1 + 2 (m - 1)) \eta_{m - 1}
    = 0.
\end{align}
Under these conditions
\begin{itemize}
    \item[(a)] 
    if the function $$z_n = \theta_n^{-1} \, \theta_{n + 1} - (\alpha_1 + \alpha_2 - n)$$ 
    satisfies the equation
    \begin{align}
    \label{eq:zncond}
        - z_n' 
        &= y_n^{-1} z_n^2
        + \brackets{\alpha_2 - y_n^2} y_n^{-1} z_n
        - \brackets{\alpha_1 + \alpha_2 - n} y_n,
    \end{align}
    then \,\, $y_n = (\ln \theta_{n + 1})' + t$ \,\, is a solution of the $\PIV[y_n; n]$ equation;

    \item[(b)]
    if the function 
    $$z_{m} = \eta_{m - 1} \, \eta_{m}^{-1} - (\alpha_1 + \alpha_2 - m)$$ 
    satisfies the equation
    \begin{align} \label{eq:zmcond}
        z_{m}' 
        &= y_{m - 1}^{-1} z_{m}^2
        + \brackets{\alpha_2 - y_{m - 1}^2} y_{m - 1}^{-1} z_{m}
        - \brackets{\alpha_1 + \alpha_2 - m} y_{m - 1},
    \end{align}
    then \,\, $y_{m - 1} = - (\ln \eta_{m - 1})' + t$ \,\, is a solution of the $\PIV[y_{m - 1}; m - 1]$ equation.
\end{itemize}
\end{thm}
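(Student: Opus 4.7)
The strategy is to reduce Theorem \ref{thm:scalthetanetamsol} directly to Theorem \ref{thm:scalkapnsol} by recognizing that the hypotheses have simply been rewritten in terms of $\theta_n=\kappa_n\kappa_{n-1}^{-1}$ and $\eta_m=\kappa_m\kappa_{m+1}^{-1}$. Since $\kappa_n$ is already assumed to satisfy the Toda chain \eqref{eq:kapTodan}, the content of the theorem is entirely a change-of-variables statement, and the main work is checking that the bilinear condition \eqref{eq:bilkapncond} translates exactly into \eqref{eq:thetancond} (and \eqref{eq:etamcond}), and that the auxiliary functions $z_n$, $z_m$ match \eqref{eq:zn}.

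First I would establish the equivalence of hypotheses. Starting from \eqref{eq:bilkapncond}, divide by $\kappa_n^2$, expand the Hirota operator, and use the Toda chain to replace $\kappa_n''\kappa_n-(\kappa_n')^2$ by $\kappa_{n-1}\kappa_{n+1}-\kappa_{-1}\kappa_1\kappa_n^2$; after recognizing $\theta_{n+1}=\kappa_{n+1}\kappa_n^{-1}$ and noting $\kappa_{n-1}\kappa_n^{-1}=\theta_n^{-1}\cdot\kappa_{n+1}\cdot\ldots$, one obtains exactly \eqref{eq:thetancond}. This is the computation already carried out in the paragraph preceding the theorem statement. The negative-index version is symmetric: divide by $\kappa_m^2$ the bilinear identity \eqref{eq:bilkapncond} shifted to index $m-1$, use the Toda chain, and recover \eqref{eq:etamcond} for $\eta_{m-1}=\kappa_{m-1}\kappa_m^{-1}$. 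Both implications go through since the derivation is a sequence of reversible algebraic manipulations, so \eqref{eq:thetancond} (resp.\ \eqref{eq:etamcond}) together with the Toda chain yields back \eqref{eq:bilkapncond}.

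Second, match the auxiliary variables. For (a), the definition $y_n=(\ln\theta_{n+1})'+t$ expands to $(\ln\kappa_{n+1}\kappa_n^{-1})'+t$, which is precisely the $y_n$ of Theorem \ref{thm:scalkapnsol}. The function $z_n=\theta_n^{-1}\theta_{n+1}-(\alpha_1+\alpha_2-n)=\kappa_{n-1}\kappa_n^{-2}\kappa_{n+1}-(\alpha_1+\alpha_2-n)$ coincides with \eqref{eq:zn}. Assumption \eqref{eq:zncond} is identical to \eqref{eq:znposcond}, so Theorem \ref{thm:scalkapnsol}(a) applies and $y_n$ solves $\PIV[y_n;n]$. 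For (b), a direct computation gives $y_{m-1}=-(\ln\eta_{m-1})'+t=(\ln\kappa_m\kappa_{m-1}^{-1})'+t$ and $z_m=\eta_{m-1}\eta_m^{-1}-(\alpha_1+\alpha_2-m)=\kappa_{m-1}\kappa_m^{-2}\kappa_{m+1}-(\alpha_1+\alpha_2-m)$, again matching the quantities in Theorem \ref{thm:scalkapnsol}; assumption \eqref{eq:zmcond} is \eqref{eq:znnegcond} shifted by $n\mapsto m$, so part (b) of Theorem \ref{thm:scalkapnsol} gives the conclusion.

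There is no substantial obstacle here: the only careful bookkeeping is in the first step, namely checking that the combination of Toda and the $\theta$/$\eta$-equations is genuinely equivalent to the bilinear equation \eqref{eq:bilkapncond} (which is the form the hypotheses of Theorem \ref{thm:scalkapnsol} take). Once that equivalence is in place, parts (a) and (b) are immediate consequences of Theorem \ref{thm:scalkapnsol} via the dictionary above.
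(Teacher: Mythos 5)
Your proposal is correct, but it proves the theorem by a different route than the paper does. You reduce Theorem \ref{thm:scalthetanetamsol} to the already-proved Theorem \ref{thm:scalkapnsol}: you observe that, given the Toda chain \eqref{eq:kapTodan}, condition \eqref{eq:thetancond} (resp.\ \eqref{eq:etamcond}) is equivalent to the bilinear equation \eqref{eq:bilkapncond} at the relevant index — the computation preceding the theorem, which the paper only runs in the direction ``bilinear $+$ Toda $\Rightarrow$ \eqref{eq:thetancond}'', is indeed reversible after multiplying back by $\kappa_n^2$ (resp.\ $\kappa_m^2$) — and then you match $y_n$, $z_n$ and $y_{m-1}$, $z_m$ with the quantities of Theorem \ref{thm:scalkapnsol}, whose cases (a) and (b) (the latter with $n=m$) give the conclusion; this dictionary is accurate, including $z_m=\eta_{m-1}\eta_m^{-1}-(\alpha_1+\alpha_2-m)=\kappa_{m-1}\kappa_m^{-2}\kappa_{m+1}-(\alpha_1+\alpha_2-m)$ and the identification of \eqref{eq:zmcond} with \eqref{eq:znnegcond}. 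The paper instead proves the theorem by a self-contained direct computation in the $\theta$/$\eta$ variables (Appendix \ref{app:proofthmscalthetanetamsol}): differentiate $y_n=\theta_{n+1}'\theta_{n+1}^{-1}+t$, substitute \eqref{eq:thetancond}, replace $\theta_{n+1}'\theta_{n+1}^{-1}$ by $y_n-t$ and $\theta_{n+1}\theta_n^{-1}$ by $z_n+(\alpha_1+\alpha_2-n)$, and couple with \eqref{eq:zncond} to get the first-order system equivalent to $\PIV[y_n;n]$, and similarly for $\eta_{m-1}$. What each approach buys: yours is shorter and avoids any new computation, at the cost of invoking the Toda hypothesis and the (easy, but necessary to state) reversibility of the passage between \eqref{eq:bilkapncond} and \eqref{eq:thetancond}/\eqref{eq:etamcond}; the paper's direct argument never returns to the $\kappa$-variables, in fact uses only \eqref{eq:thetancond}/\eqref{eq:etamcond} and the definitions of $y$, $z$ (not the Toda chain itself), and — importantly for the rest of the paper — is exactly the computation that survives verbatim in the noncommutative setting of Theorem \ref{thm:ncP4nsymsol}, where your reduction through ratios of $\kappa$'s and the commutative Theorem \ref{thm:scalkapnsol} would not be available.
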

\begin{proof}
Using the definitions of $y_n$, $z_n$ and conditions for $\theta_{n + 1}$ and $\eta_{m - 1}$, we can derive systems that are equivalent to the corresponding $\PIV$ equations (see Appendix \ref{app:proofthmscalthetanetamsol}).
\end{proof}

\begin{rem}
Note that for $n = m = 0$, the equations \eqref{eq:thetancond} and \eqref{eq:etamcond} take the form
\begin{align}
    \label{eq:theta1cond}
    &\theta_{1}''
    + t \, \theta_{1}'
    + 2 \theta_{1}^2 \, \theta_{0}^{-1}
    + \brackets{\alpha_0 - \alpha_1} \theta_{1} 
    = 0,
    \\[2mm]
    \label{eq:etam1cond}
    &\eta_{- 1}''
    - t \, \eta_{- 1}'
    + 2 \eta_{- 1}^2 \, \eta_0^{-1}
    + (\alpha_0 - \alpha_1 - 2) \eta_{- 1}
    = 0,
\end{align}
where $\theta_1 = \kappa_1$, $\eta_{-1} = \kappa_{-1}$ and $\theta_0 = \eta_{-1}^{-1}$, $\eta_0 = \theta_1^{-1}$. One can see that these equations coincide with \eqref{eq:kap1cond} and \eqref{eq:kapm1cond} from Proposition \ref{thm:ym1y0sol}, respectively. To obtain from these conditions the equations \eqref{eq:thetancond} and \eqref{eq:etamcond}, one can apply the $T_1^k$-operator to \eqref{eq:theta1cond} and \eqref{eq:etam1cond}, since for any $k \in \mathbb{Z}$ 
\begin{align}
    &&
    T_1^k (\alpha_0)
    &= \alpha_0 - k,
    &
    T_1^k (\alpha_1)
    &= \alpha_1 + k,
    &&
\end{align}
and
\begin{align}
    &&
    T_1^k (\theta_0)
    &= \theta_k,
    &
    T_1^{k} (\eta_{0})
    &= \eta_{k}.
    &&
\end{align}
The latter formulas are proved using the definition of $\tau_n$, namely $\tau_n = T_1^n (\tau_0)$, and mathematical induction.
\end{rem}

\begin{rem}
The additional condition \eqref{eq:kapm1kap1thirdcond} for $\kappa_{-1}$, $\kappa_1$ is rewritten in terms of the variables $\theta_{0}$, $\theta_{1}$ and $\eta_0$, $\eta_{- 1}$ as
\begin{align}
    \theta_0' \theta_1'
    + t \brackets{
    \theta_0' \theta_1 + \theta_0 \theta_1'
    }
    - \theta_1^2
    + \brackets{
    t^2 - \alpha_0 + \alpha_1 + 1
    } \theta_0 \theta_1
    &= \alpha_1 (1 - \alpha_0) \theta_0^2,
    \\[2mm]
    \eta_{-1}' \eta_0'
    - t \brackets{
    \eta_{-1}' \eta_0 + \eta_{-1} \eta_0'
    }
    - \eta_{-1}^2
    + \brackets{
    t^2 - \alpha_0 + \alpha_1 + 1
    } \eta_{-1} \eta_0
    &= \alpha_1 (1 - \alpha_0) \eta_0^2.
\end{align}
Applying the $T_1^{k}$-operator to these equalities, we obtain conditions
\begin{align}
    \label{eq:thetanthirdcond}
    &&
    \begin{aligned}
    &&
    \theta_n' \theta_{n + 1}'
    + t \brackets{
    \theta_n' \theta_{n + 1} + \theta_n \theta_{n + 1}'
    }
    - \theta_{n + 1}^2
    + \brackets{
    t^2 - \alpha_0 + \alpha_1 - 2 n + 1
    } \theta_{n} \theta_{n + 1}
    &&
    \\[1mm]
    &
    &= (\alpha_1 - n) (1 - \alpha_0 - n) \theta_n^2,
    &&
    \end{aligned}
    &&
    \begin{aligned}
    \\[1mm]
    n 
    & \geq 0,
    \end{aligned}
    &&
    \\[2mm]
    \label{eq:etamthirdcond}
    &&
    \begin{aligned}
    &&
    \eta_{m - 1}' \eta_{m}'
    - t \brackets{
    \eta_{m - 1}' \eta_{m} 
    + \eta_{m - 1} \eta_{m}'
    }
    - \eta_{m - 1}^2
    + \brackets{
    t^2 - \alpha_0 + \alpha_1 - 2 m + 1
    } \eta_{m - 1} \eta_{m}
    &&
    \\[1mm]
    &
    &= (\alpha_1 - m) (1 - \alpha_0 - m) \eta_{m}^2,
    &&
    \end{aligned}
    &&
    \begin{aligned}
    \\[1mm]
    m 
    &\leq 0,
    \end{aligned}
\end{align}
that coincide with those obtained by substitution of $y_n$, $z_n$ and $y_{m - 1}$, $z_{m}$ into \eqref{eq:zncond} and \eqref{eq:zmcond}, respectively.
\end{rem}

The following proposition connects solutions of the Toda chain \eqref{eq:scalToda} constructed from the Hankel matrix representation and those obtained by \Backlund transformations.

\begin{prop} \label{thm:scalarinitcompcond}
Solutions defined by \Backlund transformations and by Hankel matrices of the Toda equations are equivalent.
\end{prop}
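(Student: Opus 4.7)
The plan is to show that both constructions produce solutions of the Toda chain \eqref{eq:kapTodan} starting from the same triple $(\kappa_{-1},\,\kappa_0 = 1,\,\kappa_1)$ and then to conclude by uniqueness. Solving \eqref{eq:kapTodan} algebraically for the ``outer'' term gives
$$\kappa_{n+1} = \kappa_{n-1}^{-1}\bigl(\kappa_n''\,\kappa_n - (\kappa_n')^2 + \kappa_{-1}\,\kappa_1\,\kappa_n^2\bigr),\qquad n\geq 1,$$
and an analogous formula expressing $\kappa_{n-1}$ in terms of $\kappa_n,\kappa_{n+1}$ for $n\leq -1$. Hence the entire two-sided sequence $\{\kappa_n\}_{n\in\mathbb{Z}}$ is determined uniquely by its three central values.

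On the Hankel side, Theorem \ref{thm:scalkappan} directly produces such a sequence $\{\kappa_n^H\}$ from any chosen $\kappa_{\pm 1}$ with $\kappa_0^H = 1$. On the Bäcklund side, I would set $\kappa_n^B := \kappa_{n,0,0} = e^{-nt^2/3}\,T_1^n(\tau_0)\,\tau_0^{-1}$; these functions satisfy \eqref{eq:kapToda1} by construction, and specialising that identity to the $T_1$-axis (setting $m = n = 0$ in its triple index) reduces it to exactly \eqref{eq:kapTodan}, with the same charge $\kappa_{-1}\kappa_1$ and with $\kappa_0^B = 1$. It then remains to identify the seeds $\kappa_{\pm 1}^H$ with $\kappa_{\pm 1}^B$; once this is done, the uniqueness of the Toda recurrence forces $\kappa_n^H = \kappa_n^B$ for every $n \in \mathbb{Z}$ by induction outward from the origin.

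The main obstacle is making this seed identification legitimate. The Bäcklund-generated $\kappa_{\pm 1}^B$ are not entirely free: they must satisfy the compatibility conditions \eqref{eq:kapm1cond}, \eqref{eq:kap1cond}, \eqref{eq:kapm1kap1thirdcond} inherited from the lattice bilinear identities \eqref{eq:bilkap0kap1}--\eqref{eq:kapToda3} at the origin. Conversely, the Hankel construction is valid for any generic $(\kappa_{-1},\kappa_1)$, but to produce the same \Painleve IV solution (and not merely a solution of the Toda chain) one must restrict to seeds satisfying precisely these conditions, as already observed in the remarks following Theorem \ref{thm:scalkapnsol}. Verifying this compatibility---by unpacking the bilinear identities on the $\tau$-side and matching them to the Hankel-side data---constitutes the only non-formal part of the argument; once done, the uniqueness of the Toda recurrence closes the proof.
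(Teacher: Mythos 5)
Your argument is correct and is essentially the paper's own: both constructions solve the Toda chain \eqref{eq:kapTodan}, whose solutions are uniquely determined by the seeds $\kappa_{-1},\,\kappa_0=1,\,\kappa_1$ via the outward recursion, so matching the initial data (the paper phrases this as choosing $a_0=\kappa_1$, $b_0=\kappa_{-1}$; you phrase it as feeding the B\"acklund seeds, which satisfy \eqref{eq:kapm1cond}, \eqref{eq:kap1cond}, \eqref{eq:kapm1kap1thirdcond}, into the generic Hankel construction) forces the two sequences to coincide. Your discussion of the seed compatibility conditions is a harmless elaboration of the same idea rather than a different route.
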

\begin{proof}
Note that solutions of the Toda equations are uniquely determined by the initial conditions $\kappa_{-1}$ and $\kappa_1$. Thus the statement of this proposition follows from the fact that one can choose the same initial conditions so that $a_0 = \kappa_1$ and $b_0 = \kappa_{-1}$.
\end{proof}

The main result of this section is the following
\begin{thm} \label{thm:scalP4solmainthm}
Let $\kappa_n$, $n \in \mathbb{Z}$, be a function generated by the Toda chain \eqref{eq:kapTodan},
\begin{align}
    &&
    \brackets{
    \tfrac12 D_t^2
    + \kappa_{-1} \, \kappa_1
    } \kappa_{n} \cdot \kappa_n
    &= \kappa_{n - 1} \, \kappa_{n + 1},
    &
    \kappa_0
    &= 1,
    &&
\end{align}
and the initial conditions $\kappa_{-1}$, $\kappa_1$ satisfy the equations
\begin{gather}
    \begin{aligned}
        \kappa_{-1}'' - t \kappa_{-1}' + 2 \kappa_{-1}^2 \kappa_{1} + \brackets{\alpha_0 - \alpha_1 - 2} \kappa_{-1}
        &= 0,
        &&&&&&
        \kappa_{1}'' + t \kappa_{1}' + 2 \kappa_{-1} \kappa_{1}^2 + \brackets{\alpha_0 - \alpha_1} \kappa_1
        &= 0,
    \end{aligned}
    \\[1mm]
    \kappa_{-1}' \, \kappa_1'
    + t \brackets{
    \kappa_{-1}' \, \kappa_1
    - \kappa_{-1} \, \kappa_1'
    }
    + \kappa_{-1}^2 \kappa_1^2
    - (t^2 - \alpha_0 + \alpha_1 + 1) \kappa_{-1} \, \kappa_1
    = \alpha_1 (\alpha_0 - 1).
\end{gather}
Then the function $y_n = (\ln \kappa_{n + 1} \, \kappa_n^{-1})' + t$ is a solution of the $\PIV[y_n;n]$ equation:
\begin{align}
    y_n ''
    &= \tfrac12 y_n^{-1} (y_n')^2
    + \tfrac32 y_n^3 
    - 2 t y_n^2
    + \brackets{\tfrac12 t^2 + \alpha_0 - \alpha_1 + 2 n} y_n
    - \tfrac12 \alpha_2^2 y_n^{-1},
\end{align}
where $\alpha_0 + \alpha_1 + \alpha_2 = 1$.
\end{thm}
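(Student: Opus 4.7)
The strategy is to reduce the statement to Theorem~\ref{thm:scalkapnsol}, which already gives the desired conclusion provided that, in addition to the Toda chain \eqref{eq:kapTodan}, the bilinear equation \eqref{eq:bilkapncond} and the $z_n$-condition \eqref{eq:znposcond} (resp.\ \eqref{eq:znnegcond}) hold for every $n$ in the appropriate range. The Toda chain is assumed as a hypothesis, so only the bilinear and $z_n$-conditions must be propagated from the three assumed identities on $\kappa_{-1}, \kappa_{1}$.

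First I would observe that the base cases come for free. The two bilinear assumptions on $\kappa_{-1}$ and $\kappa_{1}$ are literally the specializations of \eqref{eq:bilkapncond} at $n=-1$ and $n=0$ after using $\kappa_0 = 1$. Moreover, as already recorded in the remark following Proposition~\ref{thm:ym1y0sol}, the ``third'' condition \eqref{eq:kapm1kap1thirdcond} is precisely what results from substituting the definitions of $y_0,\,y_{-1},\,z_0$ in terms of $\kappa_{\pm 1}$ into the $z$-conditions \eqref{eq:zy0cond}--\eqref{eq:zym1cond}. Thus the base cases of \eqref{eq:bilkapncond}, \eqref{eq:znposcond}, \eqref{eq:znnegcond} are in force at $n=0$ (and $n=-1$).

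Next I would propagate these identities to all $n \in \mathbb{Z}$ using the \Backlund translation $T_1 = \pi s_2 s_1$. Since $T_1$ commutes with the derivation $D$ and the action on parameters and on $\tau$-functions is
\[
T_1^k(\alpha_0) = \alpha_0 - k,\qquad T_1^k(\alpha_1) = \alpha_1 + k,\qquad T_1^k(\alpha_2) = \alpha_2,\qquad T_1^k(\tau_0) = \tau_k,
\]
the normalized functions satisfy $T_1^k(\kappa_n) = \kappa_{n+k}$ (up to the same gauge factor appearing throughout Section~\ref{sec:Hirota}). Applying $T_1^k$ to the base case of \eqref{eq:bilkapncond} and to \eqref{eq:kapm1kap1thirdcond} then produces the full families of bilinear and $z$-conditions at every integer level. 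Once the hypotheses of Theorem~\ref{thm:scalkapnsol} are verified for every $n$, the identification $y_n = (\ln \kappa_{n+1}\kappa_n^{-1})' + t$ is a solution of $\PIV[y_n;n]$ is immediate.

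The principal technical obstacle is the propagation step: one must check that $T_1$ does act on the $\kappa_n$'s as claimed and that the exponential gauge relating $\tau$- and $\kappa$-functions interacts correctly with the Hirota derivatives appearing in \eqref{eq:bilkapncond}. A clean alternative, which avoids any reference to \Backlund transformations, is a direct induction on $|n|$: solve the Toda recursion for $\kappa_{n+1}$ in terms of $\kappa_{n-1},\kappa_n$, differentiate \eqref{eq:bilkapncond} at level $n$, and eliminate the resulting second derivatives via the Toda chain to recover \eqref{eq:bilkapncond} at level $n+1$; the $z_n$-condition is then propagated analogously, using that \eqref{eq:zn} expresses $z_n$ as a rational function of $\kappa_{n-1},\kappa_n,\kappa_{n+1}$ that also satisfies a first-order recursion induced by the Toda chain.
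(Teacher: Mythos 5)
Your proposal follows essentially the same route as the paper: check the hypotheses of the intermediate theorem at the base level (the two assumed second-order equations are exactly \eqref{eq:bilkapncond} at $n=0$ and $n=-1$ with $\kappa_0=1$, and \eqref{eq:kapm1kap1thirdcond} encodes the $z_0$-conditions), propagate them to all levels with the translation $T_1^k$, and conclude by Theorem \ref{thm:scalkapnsol} — the paper invokes instead its reformulation, Theorem \ref{thm:scalthetanetamsol}. The one real divergence is exactly the point you flag as the ``principal technical obstacle'': the action of $T_1$ on the $\kappa$'s. Since $\kappa_n=e^{-\frac13 n t^2}\,\tau_n\,\tau_0^{-1}$ and $T_1^k(\tau_n)=\tau_{n+k}$ while $t$ is invariant, one gets $T_1^k(\kappa_n)=\kappa_{n+k}\,\kappa_k^{-1}$, \emph{not} $\kappa_{n+k}$ up to an exponential gauge; the clean shift property holds only for the ratios, $T_1^k(\theta_0)=\theta_k$ and $T_1^k(\eta_0)=\eta_k$. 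Consequently, applying $T_1^k$ to the base bilinear equations does not literally reproduce \eqref{eq:bilkapncond} at level $k$; it produces the ratio-form conditions \eqref{eq:thetancond}, \eqref{eq:etamcond} (and likewise the translated $z$-conditions \eqref{eq:zncond}, \eqref{eq:zmcond}), which are equivalent to \eqref{eq:bilkapncond} only modulo the Toda chain \eqref{eq:kapTodan}. This is precisely why the paper passes from Theorem \ref{thm:scalkapnsol} to Theorem \ref{thm:scalthetanetamsol} and propagates \eqref{eq:theta1cond}--\eqref{eq:etam1cond} in the $\theta,\eta$ variables; with that substitution your argument closes and coincides with the paper's. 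Your sketched alternative — a direct induction on $|n|$ avoiding \Backlund transformations — would be a genuinely different (and self-contained) route, but as written it is only an outline, so the propagation step should be regarded as repaired by the change of variables rather than by the unexecuted induction.
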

\begin{proof}
As we have already shown, this theorem can be proved by introducing new notation $\theta_n$, $n \geq 0$, and $\eta_m$,~$m \leq 0$, and applying the $T_1^k$-operator to the conditions for $\kappa_{-1}$, $\kappa_1$. It allows us to derive conditions for $\theta_n$ and $\eta_m$ that lead to the statement of the theorem.
\end{proof}

Since in the noncommutative case it is sometimes impossible to reduce a system of ODEs to a single ODE, we prefer to use the auxiliary function $z_n$ and a condition for it instead of \eqref{eq:thetanthirdcond} or \eqref{eq:etamthirdcond}.

Now we are going to reformulate Theorem \ref{thm:scalthetanetamsol} for the $\PIV[f_{i, n}; n]$ symmetric system
\begin{align}
    \label{eq:P4nsym}
    &\left\{
    \begin{array}{lcl}
         f_{0, n}'
         &=& f_{0, n} f_{1, n} 
         - f_{2, n} f_{0, n} 
         + (\alpha_0 + n),  
         \\[2mm]
         f_{1, n}'
         &=& f_{1, n} f_{2, n} 
         - f_{0, n} f_{1, n} 
         + (\alpha_1 - n),
         \\[2mm]
         f_{2, n}'
         &=& f_{2, n} f_{0, n} 
         - f_{1, n} f_{2, n} 
         + \alpha_2,
    \end{array}
    \right.
\end{align}
that can be obtained from \eqref{eq:P4sym} by applying the $T_1^n$-operator to it, by setting
\begin{align}
    T_1^n (f_0)
    &= f_{0, n},
    &
    T_1^n (f_1)
    &= f_{1, n},
    &
    T_1^n (f_2)
    &= f_{2, n}.
\end{align}

\begin{thm} \label{thm:P4nsymsol}
Let $\theta_{n}$, $n \geq 0$, and $\eta_{m}$, $m \leq 0$, be chosen as in Theorem \ref{thm:scalthetanetamsol}. Then
\begin{itemize}
    \item[(a)] the functions
    \begin{align}
        f_{0, n}
        &= - f_{1, n} - f_{2, n} + t,
        &
        \tilde f_{1, n}
        &= \theta_{n + 1} \, \theta_{n}^{-1}
        - (\alpha_1 - n)
        = f_{1, n} \, f_{2, n},
        &
        f_{2, n}
        &= (\ln \theta_{n + 1})' + t,
    \end{align}
    are solutions of the $\PIV[f_{i, n}; n]$ symmetric form \eqref{eq:P4nsym}, if $f_{1, n}$ satisfies the equation
    \begin{equation} \label{eq:f1neq}
        f_{1, n}'
        = f_{1, n}^2
        + 2 f_{1, n} f_{2, n} 
        - t f_{1, n}
        + (\alpha_1 - n);
    \end{equation}
    
    \item[(b)] the functions
    \begin{gather}
    \begin{aligned}
        f_{1, m - 1}
        &= - f_{0, m - 1} - f_{2, m - 1} + t,
    \end{aligned}
    \\[1mm]
    \begin{aligned}
        \tilde f_{0, m - 1}
        &= \eta_{m - 1} \, \eta_{m}^{-1}
        + (\alpha_0 + m - 1)
        = f_{0, m - 1} \, f_{2, m - 1},
        &&&
        f_{2, m - 1}
        &= - (\ln \eta_{m - 1})' + t,
    \end{aligned}
    \end{gather}
    are solutions of the $\PIV[f_{i, m - 1}; m - 1]$ symmetric form \eqref{eq:P4nsym}, if $f_{0, m - 1}$ satisfies the equation
    \begin{equation} \label{eq:f0neq}
        f_{0, m - 1}'
        = - f_{0, m - 1}^2
        - 2 f_{0, m - 1} f_{2, m - 1} 
        + t f_{0, m - 1}
        + (\alpha_0 + m - 1).
    \end{equation}
\end{itemize}
\end{thm}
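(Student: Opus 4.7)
The plan is to reduce Theorem~\ref{thm:P4nsymsol} to the scalar result already established in Theorem~\ref{thm:scalthetanetamsol}, exploiting the classical fact that, in the commutative setting, the symmetric system~\eqref{eq:P4nsym} restricted to the invariant hyperplane $I=f_{0,n}+f_{1,n}+f_{2,n}-t=0$ is equivalent to the scalar $\PIV[y_n;n]$ equation for $y_n=f_{2,n}$. The shifted parameters $(\alpha_0+n,\alpha_1-n,\alpha_2)$ still sum to $1$ and the three cyclic right-hand sides of~\eqref{eq:P4nsym} telescope to zero, so $I$ is a genuine first integral; hence imposing $f_{0,n}=t-f_{1,n}-f_{2,n}$ selects the submanifold $\{I=0\}$, and once the second and third equations of~\eqref{eq:P4nsym} are verified on this submanifold, the first equation follows automatically from $f_{0,n}'=1-f_{1,n}'-f_{2,n}'$.

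For case~(a), I would substitute $f_{0,n}=t-f_{1,n}-f_{2,n}$ into the second and third equations of~\eqref{eq:P4nsym}. Using commutativity of $f_{1,n}$ and $f_{2,n}$, the second collapses precisely to~\eqref{eq:f1neq}, the hypothesis on $f_{1,n}$, while the third becomes the identity $f_{1,n}f_{2,n}=\tfrac12(tf_{2,n}-f_{2,n}^2-f_{2,n}'+\alpha_2)$. To verify this identity, I would compute $f_{2,n}'$ directly from $f_{2,n}=(\ln\theta_{n+1})'+t$, using the bilinear equation~\eqref{eq:thetancond} to eliminate $\theta_{n+1}''$; after cancellation using $\alpha_0+\alpha_1+\alpha_2=1$ this yields $f_{2,n}'=tf_{2,n}-f_{2,n}^2-2\tilde f_{1,n}+\alpha_2$, as required. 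Equivalently, one may invoke the two-equation system established in the proof of Theorem~\ref{thm:scalkapnsol}(a), whose second equation $-y_n'=2z_n+y_n^2-ty_n+\alpha_2$ translates, under the identifications $y_n=f_{2,n}$ and $z_n+\alpha_2=\tilde f_{1,n}$ (valid in the commutative case since $\theta_n^{-1}\theta_{n+1}=\theta_{n+1}\theta_n^{-1}$), into the required identity. The first equation of~\eqref{eq:P4nsym} then holds automatically, because the three cyclic right-hand sides of~\eqref{eq:P4nsym} sum to $\alpha_0+\alpha_1+\alpha_2=1=t'$.

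Case~(b) proceeds by the symmetric argument: I would set $f_{1,m-1}=t-f_{0,m-1}-f_{2,m-1}$, so that the first equation of~\eqref{eq:P4nsym} (with parameters shifted by $m-1$) collapses to~\eqref{eq:f0neq}, while the third equation imposes $f_{0,m-1}f_{2,m-1}=\tfrac12(tf_{2,m-1}-f_{2,m-1}^2+f_{2,m-1}'-\alpha_2)$. Invoking the bilinear equation~\eqref{eq:etamcond} for $\eta_{m-1}$ and performing the analogous computation, the right-hand side is identified with the auxiliary variable $z_m$ of Theorem~\ref{thm:scalthetanetamsol}(b); the identity $\alpha_0+\alpha_1+\alpha_2=1$ then rewrites $z_m=\eta_{m-1}\eta_m^{-1}-(\alpha_1+\alpha_2-m)=\eta_{m-1}\eta_m^{-1}+(\alpha_0+m-1)=\tilde f_{0,m-1}$. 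Since the whole argument reduces to elementary algebraic rearrangement once Theorem~\ref{thm:scalthetanetamsol} has been invoked, I do not foresee a genuine obstacle; the only step requiring care is bookkeeping the parameter shifts and the opposite-sign conventions in the two cases (in particular the sign of the derivative of $y$ entering the two relations).
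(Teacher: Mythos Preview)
Your proposal is correct and follows essentially the same route as the paper's proof. The paper also computes $f_{2,n}'$ directly from the bilinear relation \eqref{eq:thetancond} to obtain $f_{2,n}'=-f_{2,n}^2-2\tilde f_{1,n}+tf_{2,n}+\alpha_2$, pairs this with the hypothesis \eqref{eq:f1neq}, substitutes $f_{0,n}=t-f_{1,n}-f_{2,n}$, and then recovers the $f_{0,n}'$ equation from $f_{0,n}'=1-f_{1,n}'-f_{2,n}'$; case~(b) is handled symmetrically via \eqref{eq:etamcond}. Your additional remark that one may instead invoke the second equation of the $(y_n,z_n)$ system from Theorem~\ref{thm:scalthetanetamsol}(a) under the identifications $y_n=f_{2,n}$, $z_n+\alpha_2=\tilde f_{1,n}$ is a valid shortcut, but amounts to the same computation repackaged.
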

\begin{proof}
We repeat the reasoning of the proof of Theorem \ref{thm:scalthetanetamsol}.

\medskip
\textbullet \,\, \textbf{Case (a).} 
By a straightforward computation, one can get the equation
\begin{align}
    f_{2, n}'
    &= - f_{2, n}^2 
    - 2 \tilde f_{1, n}
    + t f_{2, n}
    + \alpha_2
    = - f_{2, n}^2 
    - 2 f_{1, n} f_{2, n}
    + t f_{2, n}
    + \alpha_2.
\end{align}
Taking it together with \eqref{eq:f1neq}, we obtain the system
\begin{align}
    &\left\{
    \begin{array}{lcl}
         f_{1, n}'
         &=& f_{1, n}^2
         + 2 f_{1, n} f_{2, n} 
         - t f_{1, n}
         + (\alpha_1 - n),
         \\[2mm]
         f_{2, n}'
         &=& - f_{2, n}^2
         - 2 f_{1, n} f_{2, n} 
         + t f_{2, n}
         + \alpha_2.
    \end{array}
    \right.
\end{align}
Using the definition of $f_{0, n}$, it can be rewritten as
\begin{align}
    &\left\{
    \begin{array}{lcl}
         f_{1, n}'
         &=& f_{1, n} f_{2, n}
         - f_{0, n} f_{1, n}
         + (\alpha_1 - n),
         \\[2mm]
         f_{2, n}'
         &=& f_{2, n} f_{0, n}
         - f_{1, n} f_{2, n} 
         + \alpha_2,
    \end{array}
    \right.
\end{align}
where
\begin{align}
    f_{0, n}'
    &= \brackets{
    - f_{1, n} - f_{2, n} + t
    }'
    = - f_{1, n}' - f_{2, n}' + 1
    \\[1mm]
    &= - \brackets{
    f_{1, n} f_{2, n}
    - f_{0, n} f_{1, n}
    + (\alpha_1 - n)
    }
    - \brackets{
    f_{2, n} f_{0, n}
    - f_{1, n} f_{2, n} 
    + \alpha_2
    }
    + (\alpha_0 + \alpha_1 + \alpha_2)
    \\[1mm]
    &= f_{0, n} f_{1, n}
    - f_{2, n} f_{0, n}
    + (\alpha_0 + n).
\end{align}
So, we arrive at the $\PIV[f_{i, n}; n]$ symmetric form \eqref{eq:P4nsym}.

\medskip
\textbullet \,\, \textbf{Case (b).} 
In a similar way we obtain the equation
\begin{align}
    f_{2, m - 1}'
    &= f_{2, m - 1}^2 
    + 2 \tilde f_{0, m - 1}
    - t f_{2, m - 1}
    + \alpha_2
    = f_{2, m - 1}^2  
    + 2 f_{0, m - 1} f_{2, m - 1}
    - t f_{2, m - 1}
    + \alpha_2,
\end{align}
that, with the assumption \eqref{eq:f0neq} and the definition of $f_{1, m - 1}$, gives the $\PIV[f_{i, m - 1}; m - 1]$ symmetric form~\eqref{eq:P4nsym}.
\end{proof}

\section{Noncommutative version of the \Painleve IV system}
\label{sec:ncP4}

A noncommutative version of the $\PIV$ symmetric form \eqref{eq:P4sym}, that has the first integral 
\begin{align}
    I
    &= f_0 + f_1 + f_2 - t,
\end{align}
can be written as
\begin{align}
    &\left\{
    \begin{array}{lcl}
         f_{0}'
         &=& a_0 \, f_{0} f_{1}
         + (1 - a_0) \, f_1 f_0
         - a_2 \, f_{2} f_{0}
         - (1 - a_2) \, f_0 f_2
         + \alpha_0,  
         \\[2mm]
         f_{1}'
         &=& a_1 \, f_{1} f_{2}
         + (1 - a_1) \, f_2 f_1
         - a_0 \, f_{0} f_{1}
         - (1 - a_0) \, f_1 f_0
         + \alpha_1,
         \\[2mm]
         f_{2}'
         &=& a_2 \, f_{2} f_{0}
         + (1 - a_2) \, f_0 f_2
         - a_1 \, f_{1} f_{2}
         - (1 - a_1) \, f_2 f_1
         + \alpha_2,
    \end{array}
    \right.
\end{align}
where $\alpha_0 + \alpha_1 + \alpha_2 = 1$ and $a_i$ are arbitrary parameters. 
\begin{rem}
We remark that when $\alpha_0 + \alpha_1 + \alpha_2 = 0$, this system can be regarded as a fully noncommutative analog of \textit{the~Lotka-Volterra~system}.
\end{rem}

\begin{rem}
The non-abelian generalizaions for the Volterra lattices with central time and their \Painleve type reductions were studied in the paper \cite{adler2020}. In particular, the author have derived two non-equivalent \Painleve IV type systems that admit a \Backlund transformation different from those given in Table \ref{tab:BTonP4sym}.
\end{rem}

It turns out that \textit{this system admits the same \Backlund transformations as in the commutative case (see Table~\ref{tab:BTonP4sym}) with nonzero parameters $\alpha_i$ iff $a_0 = a_1 = a_2 = a$}. Then it takes the form
\begin{align}
    \label{eq:ncP4sym}
    &\left\{
    \begin{array}{lcl}
         f_{0}'
         &=& a \, f_{0} f_{1}
         + (1 - a) \, f_1 f_0
         - a \, f_{2} f_{0}
         - (1 - a) \, f_0 f_2
         + \alpha_0,  
         \\[2mm]
         f_{1}'
         &=& a \, f_{1} f_{2}
         + (1 - a) \, f_2 f_1
         - a \, f_{0} f_{1}
         - (1 - a) \, f_1 f_0
         + \alpha_1,
         \\[2mm]
         f_{2}'
         &=& a \, f_{2} f_{0}
         + (1 - a) \, f_0 f_2
         - a \, f_{1} f_{2}
         - (1 - a) \, f_2 f_1
         + \alpha_2.
    \end{array}
    \right.
\end{align}

\begin{rem} \label{rem:Tinv}
The involution $x \, y \overset{T}{\mapsto} y \, x$ preserves the form of system \eqref{eq:ncP4sym} changing the parameter $a$ to $1 - a$.
\end{rem}

\subsection{Solutions of the noncommutative \texorpdfstring{$\PIV$}{P4} system}
\label{sec:ncHankel}

In Section \ref{sec:scalHankel}, we have established that the commutative $\PIV$ equation possesses solutions expressible in terms of the Hankel determinant, if we impose some restrictions on the initial conditions $\kappa_{-1}$ and $\kappa_1$. In the current section, we are going to generalize this result to a fully noncommutative version of the $\PIV$ symmetric system of the form \eqref{eq:ncP4sym}. Since as we have remarked earlier, in noncommutative case it is sometimes impossible to reduce a system of ODEs to a single ODE (see, for instance, system \eqref{eq:ncP4sym}), we will generalize Theorem \ref{thm:P4nsymsol}.

Consider the functions $\theta_n$, $n \geq 0$, and $\eta_m$, $m \leq 0$ that are defined by the noncommutative Toda equations \eqref{eq:ncToda_thetan} and \eqref{eq:ncToda_etam}, respectively. Note that $\theta_0 = \eta_{-1}^{-1}$ and $\eta_0 = \theta_1^{-1}$.

\begin{prop} \label{thm:ncym1y0sol}
Let $\theta_0$, $\theta_{1}$ and $\eta_{0}$, $\eta_{-1}$ satisfy the conditions
\begin{align}
    \label{eq:nctheta1cond}
    &\theta_1''
    + t \, \theta_1'
    + 2 \theta_{1} \, \theta_0^{-1} \, \theta_1
    + (\alpha_0 - \alpha_1) \theta_1
    = 0,
    \\[1mm]
    \label{eq:ncetam1cond}
    &\eta_{-1}''
    - \eta_{-1}' \, t
    + 2 \eta_{-1} \, \eta_0^{-1} \, \eta_{-1}
    + (\alpha_0 - \alpha_1 - 2) \eta_{-1}
    = 0,
\end{align}
where $\theta_0 = \eta_{-1}^{-1}$ and $\eta_0 = \theta_{1}^{-1}$. 

Then
\begin{itemize}
    \item[(a)] the functions
    \begin{align}
        f_0
        &= - f_1 - f_2 + t,
        &
        \tilde f_1
        &= \theta_1 \, \theta_0^{-1} - \alpha_1
        = \tfrac12 f_1 f_2 + \tfrac12 f_2 f_1,
        &
        f_2
        &= \theta_1' \, \theta_1^{-1} + t,
    \end{align}
    are solutions of the $\PIV$ symmetric form \eqref{eq:ncP4sym} with $a = 1$, if $f_1$ satisfies the equation
    \begin{align}
        \label{eq:ncf1eq}
        f_1'
        &= f_1^2 + f_1 f_2 + f_2 f_1 - t f_1 + \alpha_1;
    \end{align}
    
    \item[(b)] the functions
    \begin{align}
        f_1
        &= - f_0 - f_2 + t,
        &
        \tilde f_0
        &= \eta_0^{-1} \, \eta_{- 1}
        + (\alpha_0 - 1)
        = \tfrac12 f_0 f_2 + \tfrac12 f_2 f_0,
        &
        f_2
        &= - \eta_{-1}^{-1} \, \eta_{-1}' + t,
    \end{align}
    are solutions of the $\PIV$ symmetric form \eqref{eq:ncP4sym} with $\tilde \alpha_0 = \alpha_0 - 1$, $\tilde \alpha_1 = \alpha_1 + 1$, and $a = 1$, if $f_0$ satisfies the equation
    \begin{align}
        \label{eq:ncf0eq}
        f_0'
        &= - f_0^2 - f_0 f_2 - f_2 f_0 + f_0 t + \tilde \alpha_0.
    \end{align}
\end{itemize}
\end{prop}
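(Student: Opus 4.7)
The plan is to mirror the commutative argument in Theorem~\ref{thm:P4nsymsol}, tracking orderings carefully. For case (a) I would start from $f_2 = \theta_1' \theta_1^{-1} + t$ and compute $f_2'$ using the identity $(\theta_1' \theta_1^{-1})' = \theta_1'' \theta_1^{-1} - (\theta_1' \theta_1^{-1})^2$ (noncommutative but one-sided, no ordering issue). Then use the hypothesis \eqref{eq:nctheta1cond} to replace $\theta_1''$ by $-t \theta_1' - 2\theta_1 \theta_0^{-1} \theta_1 - (\alpha_0 - \alpha_1)\theta_1$, multiply on the right by $\theta_1^{-1}$, and substitute $\theta_1' \theta_1^{-1} = f_2 - t$ and $\theta_1 \theta_0^{-1} = \tilde f_1 + \alpha_1$. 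The cross-terms $-t f_2$ (from the $\theta_1''$ substitution) and $+t f_2$ (from $(f_2-t)^2$) cancel, while the surviving $+f_2 t$ survives; using $\alpha_0 + \alpha_1 + \alpha_2 = 1$ I expect to land on
\[
    f_2' = -f_2^2 - 2\tilde f_1 + f_2 t + \alpha_2 = -f_2^2 - f_1 f_2 - f_2 f_1 + f_2 t + \alpha_2.
\]
Together with the assumed equation \eqref{eq:ncf1eq} this is the $(f_1, f_2)$-subsystem of \eqref{eq:ncP4sym} with $a=1$, and $f_0' = -f_1' - f_2' + 1$ combined with $f_0 = -f_1 - f_2 + t$ gives the third equation after checking that the right-hand side equals $f_0 f_1 - f_2 f_0 + \alpha_0$.

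For case (b) the argument is the mirror image, using the right logarithmic derivative: from $f_2 = -\eta_{-1}^{-1} \eta_{-1}' + t$ I would compute $f_2'$ via $(\eta_{-1}^{-1} \eta_{-1}')' = \eta_{-1}^{-1} \eta_{-1}'' - (\eta_{-1}^{-1} \eta_{-1}')^2$, eliminate $\eta_{-1}''$ by \eqref{eq:ncetam1cond}, multiply on the left by $\eta_{-1}^{-1}$, and substitute $\eta_{-1}^{-1} \eta_{-1}' = t - f_2$ and $\eta_0^{-1} \eta_{-1} = \tilde f_0 - (\alpha_0 - 1)$. The shift $-2$ in \eqref{eq:ncetam1cond} contributes a $+2$ to the constant part, which together with $-(\alpha_0 - \alpha_1) + 2(\alpha_0 - 1)$ collapses to $\alpha_2$ after using $\alpha_0 + \alpha_1 + \alpha_2 = 1$. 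The result should be $f_2' = f_2^2 + f_0 f_2 + f_2 f_0 - t f_2 + \alpha_2$, which together with the hypothesis \eqref{eq:ncf0eq} yields the symmetric system with the shifted parameters $(\tilde\alpha_0, \tilde\alpha_1, \tilde\alpha_2) = (\alpha_0 - 1, \alpha_1 + 1, \alpha_2)$; the equation for $f_1$ then follows from $f_1 = -f_0 - f_2 + t$ by differentiation.

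The main obstacle is purely bookkeeping: every occurrence of $t f_i$ versus $f_i t$ must be tracked and cannot be conflated, and the anticommutator $2\tilde f_i = f_i f_j + f_j f_i$ must split in exactly the way demanded by the $a=1$ ordering in \eqref{eq:ncP4sym}. The conditions \eqref{eq:nctheta1cond} and \eqref{eq:ncetam1cond} are designed with precisely this ordering: $t$ appears on the left of $\theta_1'$ but on the right of $\eta_{-1}'$, which is what allows the cross-terms to cancel in each case and picks out the symmetric form with $a=1$ (rather than some general $a$). Because the calculations are long but mechanical, I would relegate the explicit case-by-case verifications to an appendix paralleling the style of Theorem~\ref{thm:scalthetanetamsol}, keeping only the identification of $f_2'$ in the main text.
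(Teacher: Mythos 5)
Your proposal is correct and follows essentially the same route as the paper: compute $f_2'$ from the one-sided logarithmic derivative, eliminate the second derivative via \eqref{eq:nctheta1cond} (resp. \eqref{eq:ncetam1cond}), substitute $\theta_1'\theta_1^{-1}=f_2-t$, $\theta_1\theta_0^{-1}=\tilde f_1+\alpha_1$ (resp. $\eta_{-1}^{-1}\eta_{-1}'=t-f_2$, $\eta_0^{-1}\eta_{-1}=\tilde f_0-(\alpha_0-1)$), and close the system with the hypothesis on $f_1$ (resp. $f_0$) and the first-integral relation for the remaining variable. One caveat: in case (b) your verbal tally of the constants is sign-garbled (the correct contributions are $+(\alpha_0-\alpha_1-2)$ from eliminating $\eta_{-1}''$, $-2(\alpha_0-1)$ from $2\eta_0^{-1}\eta_{-1}$, and $+(\alpha_0+\alpha_1+\alpha_2)$ from the derivative of $t$, which indeed sum to $\alpha_2$), but your stated final identity $f_2'=f_2^2+f_2f_0+f_0f_2-tf_2+\alpha_2$ is the right one, so the argument goes through as in the paper.
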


\begin{proof}
The proof is given by a straightforward computation, using definitions of $f_i$ and conditions~\eqref{eq:nctheta1cond}~--~\eqref{eq:ncetam1cond}.

\medskip
\textbullet \, \, \textbf{Case (a).} 
The derivative of $f_{2}$ can be written in the following way:
\begin{align}
    f_{2}'
    &= \theta_{1}'' \theta_{1}^{-1}
    - (\theta_{1}' \theta_{1}^{-1})^2
    + 1
    \\[1mm]
    &= - \brackets{
    t \theta_{1}'
    + 2 \theta_{1} \theta_0^{-1} \theta_{1}
    + (\alpha_0 - \alpha_1) \theta_{1}
    } \theta_{1}^{-1}
    - (\theta_{1}' \theta_{1}^{-1})^2
    + (\alpha_0 + \alpha_1 + \alpha_2)
    \\[1mm]
    &= - t \theta_{1}' \theta_{1}^{-1}
    - (\theta_{1}' \theta_{1}^{-1})^2
    - 2 \theta_{1} \theta_{0}^{-1}
    + (2 \alpha_1 + \alpha_2)
    \\[1mm]
    &= - t (f_{2} - t)
    - (f_{2} - t)^2
    - 2 \brackets{
    \tilde f_{1} + \alpha_1
    }
    + (2 \alpha_1 + \alpha_2)
    \\[1mm]
    &= - f_{2}^2 
    - 2 \tilde f_{1} + f_{2} t + \alpha_2
    = - f_{2}^2 
    - f_{2} f_{1} 
    - f_{1} f_{2} 
    + f_{2} t + \alpha_2.
\end{align}
Taking it together with the condition \eqref{eq:ncf1eq}, we obtain the system
\begin{align}
    \left\{
    \begin{array}{lcl}
         f_{1}'
         &=& f_{1}^2 
         + f_{1} f_{2}
         + f_{2} f_{1}
         - t f_{1}
         + \alpha_1,
         \\[2mm]
         f_{2}'
         &=& - f_{2}^2 
        - f_{2} f_{1} 
        - f_{1} f_{2} 
        + f_{2} t + \alpha_2.
    \end{array}
    \right.
\end{align}
Since $f_{0} = - f_{1} - f_{2} + t$, the system takes the form
\begin{align}
    &
    \left\{
    \begin{array}{lcl}
         f_{1}'
         &=& f_{1} f_{2}
         - \brackets{
         - f_{1} - f_{2} + t
         } f_{1}
         + \alpha_1,
         \\[2mm]
         f_{2}'
         &=& f_{2}
        \brackets{
        - f_{2}
        - f_{1}
        + t
        }
        - f_{1} f_{2} 
        + \alpha_2;
    \end{array}
    \right.
    &
    \Leftrightarrow&&
    &
    \left\{
    \begin{array}{lcl}
         f_{1}'
         &=& f_{1} f_{2}
         - f_{0} f_{1}
         + \alpha_1,
         \\[2mm]
         f_{2}'
         &=& f_{2} f_{0}
        - f_{1} f_{2} 
        + \alpha_2.
    \end{array}
    \right.
\end{align}
Finally, adding to this system the following equation for $f_{0}$:
\begin{align}
    f_{0}'
    &= \brackets{
    - f_{1} - f_{2} + t
    }' 
    = - f_{1}' - f_{2}' + 1
    \\[1mm]
    &= - \brackets{
    f_{1} f_{2}
    - f_{0} f_{1}
     + \alpha_1
    }
    - \brackets{
    f_{2} f_{0}
    - f_{1} f_{2} 
    + \alpha_2
    }
    + (\alpha_0 + \alpha_1 + \alpha_2)
    \\[1mm]
    &= f_{0} f_{1}
    - f_{2} f_{0}
    + \alpha_0,
\end{align}
we arrive at the $\PIV$ symmetric form \eqref{eq:ncP4sym} with $a = 1$.

\medskip
\textbullet \, \, \textbf{Case (b).} 
Similarly, for $f_{2}'$ we have:
\begin{align}
    f_{2}'
    &= (\eta_{- 1}^{-1} \eta_{- 1}')^2
    - \eta_{- 1}^{-1} \eta_{- 1}'' 
    + 1
    \\[1mm]
    &=  (\eta_{- 1}^{-1} \eta_{- 1}')^2
    + \eta_{- 1}^{-1} \brackets{
    - \eta_{- 1}' t
    + 2 \eta_{- 1} \eta_{0}^{-1} \eta_{- 1}
    + (\alpha_0 - \alpha_1 -2) 
    \eta_{- 1}
    }
    + (\alpha_0 + \alpha_1 + \alpha_2)
    \\[1mm]
    &= (\eta_{- 1}^{-1} \eta_{- 1}')^2
    - \eta_{- 1}^{-1} \eta_{- 1}' t
    + 2 \eta_{0}^{-1} \eta_{- 1}
    + (2 \alpha_0 + \alpha_2 - 2)
    \\[1mm]
    &= (- f_{2} + t)^2
    - (- f_{2} + t) t
    + 2 \brackets{
    \tilde f_{0} - (\alpha_0 - 1)
    }
    + (2 \alpha_0 + \alpha_2 - 2)
    \\[1mm]
    &= f_{2}^2 
    + 2 \tilde f_{0}
    - t f_{2} + \alpha_2
    = f_{2}^2 
    + f_{2} f_{0}
    + f_{0} f_{2}
    - t f_{2} + \alpha_2.
\end{align}
The equation and condition \eqref{eq:ncf0eq} give the system
\begin{align}
    \left\{
    \begin{array}{lcl}
         f_{0}'
         &=& - f_{0}^2 
         - f_{0} f_{2}
         - f_{2} f_{0}
         + f_{0} t
         + (\alpha_0 - 1),
         \\[2mm]
         f_{2}'
         &=& f_{2}^2 
        + f_{2} f_{0}
        + f_{0} f_{2}
        - t f_{2} + \alpha_2,
    \end{array}
    \right.
\end{align}
that, by the definition of $f_{1}$, can be rewritten as
\begin{align}
    \left\{
    \begin{array}{lcl}
         f_{0}'
         &=& f_{0} f_{1}
         - f_{2} f_{0}
         + (\alpha_0 - 1),
         \\[2mm]
         f_{2}'
         &=& f_{2} f_{0}
         - f_{1} f_{2}
         + \alpha_2.
    \end{array}
    \right.
\end{align}
Supplementing it by the equation for $f_{1}'$,
\begin{align}
    f_{1}'
    &= \brackets{
    - f_{0}
    - f_{2}
    + t
    }' 
    = - f_{0}'
    - f_{2}'
    + 1
    \\[1mm]
    &= - \brackets{
    f_{0} f_{1}
    - f_{2} f_{0}
    + (\alpha_0 - 1)
    }
    - \brackets{
    f_{2} f_{0}
    - f_{1} f_{2}
    + \alpha_2
    }
    + (\alpha_0 + \alpha_1 + \alpha_2)
    \\[1mm]
    &= f_{1} f_{2}
    - f_{0} f_{1}
    + (\alpha_1 + 1),
\end{align}
we get the $\PIV$ symmetric form \eqref{eq:ncP4sym} with $\tilde \alpha_0 = \alpha_0 - 1$, $\tilde \alpha_1 = \alpha_1 + 1$, and $a = 1$.
\end{proof}

\begin{rem}
Conditions \eqref{eq:nctheta1cond} -- \eqref{eq:ncetam1cond} are noncommutative analogs of conditions \eqref{eq:theta1cond} -- \eqref{eq:etam1cond}.
\end{rem}

Proposition \ref{thm:ncym1y0sol} means that there is only one system (up to the $T$-involution given in Remark \ref{rem:Tinv}) of the form \eqref{eq:ncP4sym}:
\begin{align}
    \label{eq:ncP4sym1}
    &\left\{
    \begin{array}{lcl}
         f_{0}'
         &=& f_{0} f_{1}
         - f_{2} f_{0}
         + \alpha_0,  
         \\[2mm]
         f_{1}'
         &=& f_{1} f_{2}
         - f_{0} f_{1}
         + \alpha_1,
         \\[2mm]
         f_{2}'
         &=& f_{2} f_{0}
         - f_{1} f_{2}
         + \alpha_2,
    \end{array}
    \right.
\end{align}
that has solutions in the quasideterminant Hankel form. This system is a generalization of the quantum $\PIV$ symmetric form defined in the paper \cite{nagoya2008quantum} and the matrix $\text{P}_4^0$ system from \cite{Bobrova_Sokolov_2021_1} to the fully noncommutative case. In the paper \cite{Bobrova_Sokolov_2021_2}, the authors have suggested a noncommutative version of the $\PIV$ system with the noncommutative independent variable. This system admits solutions in the quasideterminant form only in the "positive" or "negative" directions for $k = 2$ or $k = 0$. The noncommutative version \eqref{eq:ncP4sym1} of the $\PIV$ symmetric form allows us to write solutions in both directions. 

Now we are going to define solutions in the "positive" and "negative" directions. Since system \eqref{eq:ncP4sym1} has the same Weyl group as in the commutative case, we are able to define the $T_1$-operator. Applying it to \eqref{eq:ncP4sym1}, we obtain the noncommutative $\PIV[f_{i, n}; n]$ symmetric form
\begin{align}
    \label{eq:ncP4nsym}
    &\left\{
    \begin{array}{lcl}
         f_{0, n}'
         &=& f_{0, n} f_{1, n}
         - f_{2, n} f_{0, n}
         + (\alpha_0 + n),  
         \\[2mm]
         f_{1, n}'
         &=& f_{1, n} f_{2, n}
         - f_{0, n} f_{1, n}
         + (\alpha_1 - n),
         \\[2mm]
         f_{2, n}'
         &=& f_{2, n} f_{0, n}
         - f_{1, n} f_{2, n}
         + \alpha_2.
    \end{array}
    \right.
\end{align}
where $\alpha_0 + \alpha_1 + \alpha_2 = 1$ and $T_1^{n} (f_i) = f_{i, n}$. Note that the system has the following first integral
\begin{equation}
    I =
    f_{0, n} + f_{1, n} + f_{2, n} - t.
\end{equation}

\begin{thm} \label{thm:ncP4nsymsol}
Let the functions $\theta_{n}$, $n \geq 0$ and $\eta_{m}$, $m \leq 0$
satisfy the noncommutative Toda equations \eqref{eq:ncToda_thetan} -- \eqref{eq:ncToda_etam} and the following equations
\begin{align}
    \label{eq:ncthetancond}
    &\theta_{n + 1}''
    + t \, \theta_{n + 1}'
    + 2 \theta_{n + 1} \, \theta_n^{-1} \, \theta_{n + 1}
    + \brackets{\alpha_0 - \alpha_1 + 2 n} \theta_{n + 1} 
    = 0,
    \\[2mm]
    \label{eq:ncetamcond}
    &\eta_{m - 1}''
    - \eta_{m - 1}' \, t
    + 2 \eta_{m - 1} \, \eta_{m}^{-1} \, \eta_{m - 1}
    + (\alpha_0 - \alpha_1 + 2 (m - 1)) \eta_{m - 1}
    = 0.
\end{align}
Then
\begin{itemize}
    \item[(a)] the functions
    \begin{align}
        f_{0, n}
        &= - f_{1, n} - f_{2, n} + t,
        &
        \tilde f_{1, n}
        &= \theta_{n + 1} \, \theta_{n}^{-1}
        - (\alpha_1 - n)
        = \tfrac12 f_{1, n} \, f_{2, n}
        + \tfrac12 f_{2, n} \, f_{1, n}
        ,
        &
        f_{2, n}
        &= \theta_{n + 1}' \, \theta_{n + 1}^{-1} + t,
    \end{align}
    are solutions of the $\PIV[f_{i, n}; n]$ symmetric form \eqref{eq:ncP4nsym}, if $f_{1, n}$ satisfies the equation
    \begin{equation} \label{eq:ncf1neq}
        f_{1, n}'
        = f_{1, n}^2
        + f_{1, n} f_{2, n} 
        + f_{2, n} f_{1, n} 
        - t f_{1, n}
        + (\alpha_1 - n);
    \end{equation}
    
    \item[(b)] the functions
    \begin{gather}
    \begin{aligned}
        f_{1, m - 1}
        &= - f_{0, m - 1} - f_{2, m - 1} + t,
    \end{aligned}
    \\[1mm]
    \begin{aligned}
        \tilde f_{0, m - 1}
        &= \eta_{m}^{-1} \, \eta_{m - 1}
        + (\alpha_0 + m - 1)
        = \tfrac12 f_{0, m - 1} \, f_{2, m - 1}
        + \tfrac12 f_{2, m - 1} \, f_{0, m - 1}
        ,
        &&&
        f_{2, m - 1}
        &= - \eta_{m - 1}^{-1} \, \eta_{m - 1}' + t,
    \end{aligned}
    \end{gather}
    are solutions of the $\PIV[f_{i, m - 1}; m - 1]$ symmetric form \eqref{eq:ncP4nsym}, if $f_{0, m - 1}$ satisfies the equation
    \begin{equation} \label{eq:ncf0neq}
        f_{0, m - 1}'
        = - f_{0, m - 1}^2
        - f_{0, m - 1} f_{2, m - 1} 
        - f_{2, m - 1} f_{0, m - 1} 
        + f_{0, m - 1} t
        + (\alpha_0 + m - 1).
    \end{equation}
\end{itemize}
\end{thm}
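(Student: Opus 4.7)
The plan is to generalize, index-by-index, the computation carried out in Proposition \ref{thm:ncym1y0sol} (which is the $n=0$, resp.\ $m=0$, case of the present statement). The key observation is that the hypotheses \eqref{eq:ncthetancond} and \eqref{eq:ncetamcond} on $\theta_{n+1}$ and $\eta_{m-1}$ are precisely the $T_1^{n}$-shifts of the conditions \eqref{eq:nctheta1cond} and \eqref{eq:ncetam1cond} used in Proposition \ref{thm:ncym1y0sol}, so the same sequence of algebraic manipulations should go through verbatim with the parameter $\alpha_1$ replaced by $\alpha_1 - n$ (resp.\ $\alpha_0$ by $\alpha_0 + m - 1$).

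For Case (a), I would start by differentiating $f_{2,n} = \theta_{n+1}' \theta_{n+1}^{-1} + t$, using the noncommutative identity $(v^{-1})' = -v^{-1} v' v^{-1}$ to obtain
\begin{align*}
    f_{2,n}'
    = \theta_{n+1}'' \, \theta_{n+1}^{-1}
    - \bigl(\theta_{n+1}' \, \theta_{n+1}^{-1}\bigr)^2 + 1.
\end{align*}
Then I would invoke \eqref{eq:ncthetancond} to eliminate $\theta_{n+1}''$, and use $\alpha_0 + \alpha_1 + \alpha_2 = 1$ to repackage the constant into $2\alpha_1 + \alpha_2 - 2n$. After substituting $\theta_{n+1}' \theta_{n+1}^{-1} = f_{2,n} - t$ and $\theta_{n+1} \theta_n^{-1} = \tilde f_{1,n} + (\alpha_1 - n)$, the expression collapses to
\begin{align*}
    f_{2,n}' = - f_{2,n}^2 - 2 \tilde f_{1,n} + f_{2,n} t + \alpha_2.
\end{align*}
Replacing $t$ by $f_{0,n} + f_{1,n} + f_{2,n}$ and invoking the anticommutator identity $2 \tilde f_{1,n} = f_{1,n} f_{2,n} + f_{2,n} f_{1,n}$ yields the third equation of \eqref{eq:ncP4nsym}. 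The second equation of \eqref{eq:ncP4nsym} is obtained by rewriting the hypothesis \eqref{eq:ncf1neq} with the same substitution $t = f_{0,n} + f_{1,n} + f_{2,n}$. Finally, the first equation of \eqref{eq:ncP4nsym} follows by differentiating $f_{0,n} = -f_{1,n} - f_{2,n} + t$ and summing the other two, once more using $\alpha_0 + \alpha_1 + \alpha_2 = 1$ to produce the shifted parameter $\alpha_0 + n$.

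Case (b) is essentially the $T$-involution of Case (a) (see Remark \ref{rem:Tinv}) applied to the $\eta$-side: I would differentiate $f_{2,m-1} = -\eta_{m-1}^{-1} \eta_{m-1}' + t$, use \eqref{eq:ncetamcond} to eliminate $\eta_{m-1}''$, substitute $\eta_{m}^{-1} \eta_{m-1} = \tilde f_{0,m-1} - (\alpha_0 + m - 1)$, and thereby derive $f_{2,m-1}' = f_{2,m-1}^2 + 2 \tilde f_{0,m-1} - t f_{2,m-1} + \alpha_2$. Combined with the hypothesis \eqref{eq:ncf0neq} and the definition $f_{1,m-1} = -f_{0,m-1} - f_{2,m-1} + t$, the three equations of \eqref{eq:ncP4nsym} (with index $m-1$) emerge exactly as in Case (a).

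The main obstacle is the twofold characterization of $\tilde f_{1,n}$ (and $\tilde f_{0,m-1}$): the theorem packages into the same symbol both the $\theta$-expression $\theta_{n+1} \theta_n^{-1} - (\alpha_1 - n)$ and the anticommutator $\tfrac12 (f_{1,n} f_{2,n} + f_{2,n} f_{1,n})$. The direct computation sketched above naturally produces the former, while matching to the symmetric system \eqref{eq:ncP4nsym} requires the latter. The coincidence of the two expressions is the crucial algebraic identity that drives the whole proof; it is really the anticommutator-form of the noncommutative Toda identity $\kappa_{n-1} \kappa_n^{-2} \kappa_{n+1}$, and should be treated as the content of Theorem \ref{thm:ncthetan} combined with the bilinear condition \eqref{eq:ncthetancond} — noting that in the commutative reduction (Theorem \ref{thm:scalthetanetamsol}) this step is trivial because products commute, but in the fully noncommutative setting it is the single place where the specific symmetric ordering of the system \eqref{eq:ncP4sym1} becomes essential.
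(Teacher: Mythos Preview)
Your outline in the first four paragraphs is exactly the paper's proof: differentiate $f_{2,n}$ (resp.\ $f_{2,m-1}$), kill the second derivative via \eqref{eq:ncthetancond} (resp.\ \eqref{eq:ncetamcond}), substitute the definitions of $f_{2,n}$ and $\tilde f_{1,n}$, combine the resulting equation for $f_{2,n}'$ with the assumed equation \eqref{eq:ncf1neq} for $f_{1,n}'$, and recover the $f_{0,n}'$ equation from $f_{0,n} = t - f_{1,n} - f_{2,n}$.

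Your final paragraph, however, introduces a phantom obstacle. The equality
\[
\theta_{n+1}\,\theta_n^{-1} - (\alpha_1 - n) \;=\; \tfrac12\bigl(f_{1,n}f_{2,n} + f_{2,n}f_{1,n}\bigr)
\]
is not something to be \emph{proved} here: it is part of the \emph{hypotheses}, serving as the implicit definition of $f_{1,n}$ in terms of the Toda data (since $f_{2,n}$ is already explicitly given and $f_{1,n}$ is not otherwise defined). The paper simply uses this relation as a substitution, writing $2\tilde f_{1,n} = f_{1,n}f_{2,n} + f_{2,n}f_{1,n}$, with no appeal to Theorem~\ref{thm:ncthetan} or to any additional identity. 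The Toda equations \eqref{eq:ncToda_thetan}--\eqref{eq:ncToda_etam} themselves are in fact not invoked at all in the computation; only \eqref{eq:ncthetancond}, \eqref{eq:ncetamcond}, the anticommutator definitions, and \eqref{eq:ncf1neq}, \eqref{eq:ncf0neq} are used. So drop the last paragraph and your proof is complete and matches the paper's.
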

\begin{proof}
Using the conditions \eqref{eq:ncthetancond} -- \eqref{eq:ncetamcond} and definitions of $f_{i, n}$, one is able to derive the corresponding $\PIV$ symmetric form \eqref{eq:ncP4nsym} by the same computations as in Proposition \ref{thm:ncym1y0sol}. For more details see Appendix~\ref{app:proofthmcnP4nsymsol}.
\end{proof}

\begin{rem}
Conditions \eqref{eq:ncthetancond} and \eqref{eq:ncetamcond} are derived from \eqref{eq:nctheta1cond} and \eqref{eq:ncetam1cond}, respectively, by acting the $T_1^k$-operator on them.
\end{rem}

The next proposition follows from the fact that in the noncommutative case \Backlund transformations preserve the Hankel property just like in the commutative case (see Proposition \ref{thm:scalarinitcompcond}).
\begin{prop}
Solutions defined by \Backlund transformations and by Hankel matrices of the noncommutative Toda equations are equivalent.
\end{prop}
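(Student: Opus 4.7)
The plan is to mirror the commutative argument of Proposition \ref{thm:scalarinitcompcond}: solutions of the noncommutative Toda chain are rigid once the two initial functions $\kappa_{-1}$ and $\kappa_1$ are prescribed, and both the Hankel quasideterminant recipe and the B\"acklund construction are anchored on the same pair $(\kappa_{-1},\kappa_1)$. So the proof reduces to a uniqueness-of-recursion statement.

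First I would recast the noncommutative Toda equations \eqref{eq:ncToda_thetan}--\eqref{eq:ncToda_etam} as explicit recurrences. Because $R$ is a division ring, equation \eqref{eq:ncToda_thetan} rearranges to
\[
\theta_{n+1} = \bigl[(\theta_n'\theta_n^{-1})' + \theta_n\theta_{n-1}^{-1}\bigr]\theta_n,
\]
so $\theta_{n+1}$ is uniquely determined by $\theta_{n-1}$ and $\theta_n$; a symmetric rearrangement of \eqref{eq:ncToda_etam} expresses $\eta_{m-1}$ uniquely in terms of $\eta_m$ and $\eta_{m+1}$. Consequently, once the base values $\theta_0 = \kappa_{-1}^{-1}$, $\theta_1 = \kappa_1$, $\eta_{-1} = \kappa_{-1}$, $\eta_0 = \kappa_1^{-1}$ are fixed, both infinite sequences are determined.

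Next I would verify that the two constructions produce solutions with the same initial data. On the Hankel side, Theorem \ref{thm:ncthetan} builds $\{\theta_n\}$ and $\{\eta_m\}$ out of the sequences $\{a_n\}$, $\{b_n\}$ generated from $a_0 = \kappa_1$ and $b_0 = \kappa_{-1}$, and the lowest quasideterminants directly read off $\theta_1 = \kappa_1$, $\eta_{-1} = \kappa_{-1}$ (with $\theta_0, \eta_0$ fixed by the conventions $\theta_0 = \eta_{-1}^{-1}$, $\eta_0 = \theta_1^{-1}$). On the B\"acklund side, iterated action of the translation operator $T_1$ on the base $\tau$-functions of the noncommutative $\PIV$ symmetric system produces sequences satisfying the same Toda equations (this is implicit in the arguments of Theorem \ref{thm:ncP4nsymsol} together with the compatibility conditions \eqref{eq:ncthetancond}--\eqref{eq:ncetamcond}) and is likewise seeded by $\kappa_{-1}$ and $\kappa_1$. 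Matching initial data and invoking the uniqueness of the recursion then forces the two sequences to coincide index by index.

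The principal technical point, and the likely obstacle, is to confirm that in the purely noncommutative setting the two recursive rearrangements are genuinely well-posed and that no hidden normalisation breaks the match of initial data at $n=0,1$ and $m=-1,0$. All the required invertibilities hold in the division ring $R$, and the matching of the zeroth and first values is built into the definitions of both constructions; granted these inputs, the uniqueness principle closes the argument.
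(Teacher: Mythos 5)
Your argument is essentially the paper's own: the paper proves the commutative Proposition \ref{thm:scalarinitcompcond} by noting that Toda solutions are uniquely determined by the initial data $\kappa_{-1}$, $\kappa_1$ and that both constructions can be seeded with $a_0=\kappa_1$, $b_0=\kappa_{-1}$, and it proves the noncommutative statement by invoking the same reasoning, which is exactly your uniqueness-of-recursion plus matching-of-initial-data argument (your explicit rearrangement of \eqref{eq:ncToda_thetan}--\eqref{eq:ncToda_etam} in the division ring $R$ just makes the paper's implicit uniqueness step concrete). Correct, and no substantive difference from the paper's route.
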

\begin{proof}
Follows from the same arguments as in Proposition \ref{thm:scalarinitcompcond}.
\end{proof}

Summarising results of this section, we formulate the following
\begin{thm} \label{thm:ncP4solmainthm}
Let the functions $\theta_{n}$, $n \geq 0$ and $\eta_{m}$, $m \leq 0$
be defined by the noncommutative Toda chains \eqref{eq:ncToda_thetan} -- \eqref{eq:ncToda_etam} and initial conditions $\theta_0$, $\theta_1$ and $\eta_0$, $\eta_{-1}$ satisfy the equations
\begin{align}
    \label{eq:nctheta1cond1}
    &\theta_1''
    + t \, \theta_1'
    + 2 \theta_{1} \, \theta_0^{-1} \, \theta_1
    + (\alpha_0 - \alpha_1) \theta_1
    = 0,
    \\[1mm]
    \label{eq:ncetam1cond1}
    &\eta_{-1}''
    - \eta_{-1}' \, t
    + 2 \eta_{-1} \, \eta_0^{-1} \, \eta_{-1}
    + (\alpha_0 - \alpha_1 - 2) \eta_{-1}
    = 0,
\end{align}
where $\theta_0 = \eta_{-1}^{-1}$ and $\eta_0 = \theta_{1}^{-1}$. 

Then
\begin{itemize}
    \item[(a)] the functions
    \begin{align}
        f_{0, n}
        &= - f_{1, n} - f_{2, n} + t,
        &
        \tilde f_{1, n}
        &= \theta_{n + 1} \, \theta_{n}^{-1}
        - (\alpha_1 - n)
        = \tfrac12 f_{1, n} \, f_{2, n}
        + \tfrac12 f_{2, n} \, f_{1, n}
        ,
        &
        f_{2, n}
        &= \theta_{n + 1}' \, \theta_{n + 1}^{-1} + t,
    \end{align}
    are solutions of the $\PIV[f_{i, n}; n]$ symmetric form \eqref{eq:ncP4nsym}, if $f_{1, n}$ satisfies the equation
    \begin{equation} \label{eq:ncf1neq1}
        f_{1, n}'
        = f_{1, n}^2
        + f_{1, n} f_{2, n} 
        + f_{2, n} f_{1, n} 
        - t f_{1, n}
        + (\alpha_1 - n);
    \end{equation}
    
    \item[(b)] the functions
    \begin{gather}
    \begin{aligned}
        f_{1, m - 1}
        &= - f_{0, m - 1} - f_{2, m - 1} + t,
    \end{aligned}
    \\[1mm]
    \begin{aligned}
        \tilde f_{0, m - 1}
        &= \eta_{m}^{-1} \, \eta_{m - 1}
        + (\alpha_0 + m - 1)
        = \tfrac12 f_{0, m - 1} \, f_{2, m - 1}
        + \tfrac12 f_{2, m - 1} \, f_{0, m - 1}
        ,
        &&&
        f_{2, m - 1}
        &= - \eta_{m - 1}^{-1} \, \eta_{m - 1}' + t,
    \end{aligned}
    \end{gather}
    are solutions of the $\PIV[f_{i, m - 1}; m - 1]$ symmetric form \eqref{eq:ncP4nsym}, if $f_{0, m - 1}$ satisfies the equation
    \begin{equation} \label{eq:ncf0neq1}
        f_{0, m - 1}'
        = - f_{0, m - 1}^2
        - f_{0, m - 1} f_{2, m - 1} 
        - f_{2, m - 1} f_{0, m - 1} 
        + f_{0, m - 1} t
        + (\alpha_0 + m - 1).
    \end{equation}
\end{itemize}
\end{thm}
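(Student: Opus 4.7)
The plan is to reduce Theorem~\ref{thm:ncP4solmainthm} to Theorem~\ref{thm:ncP4nsymsol} by upgrading the initial-data conditions \eqref{eq:nctheta1cond1}--\eqref{eq:ncetam1cond1}, which only constrain $\theta_1$ and $\eta_{-1}$, to the full-index conditions \eqref{eq:ncthetancond}--\eqref{eq:ncetamcond} constraining every $\theta_{n+1}$ (for $n \geq 0$) and every $\eta_{m-1}$ (for $m \leq 0$). Once these hold, the assertion that the prescribed $f_{0,n},\tilde f_{1,n},f_{2,n}$ and $f_{1,m-1},\tilde f_{0,m-1},f_{2,m-1}$ solve the symmetric system \eqref{eq:ncP4nsym} under the Riccati-type conditions \eqref{eq:ncf1neq1}--\eqref{eq:ncf0neq1} is exactly the content of Theorem~\ref{thm:ncP4nsymsol}.

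The upgrading is performed by applying the translation $T_1=\pi s_2 s_1$ from the affine Weyl group $\tilde{W}(A_2^{(1)})$. Since each \Backlund transformation $s_i,\pi$ commutes with the derivation $D$ and fixes $t$, so does $T_1^k$; on parameters one has $T_1^k(\alpha_0)=\alpha_0+k$, $T_1^k(\alpha_1)=\alpha_1-k$, $T_1^k(\alpha_2)=\alpha_2$, so that $T_1^k(\alpha_0-\alpha_1)=\alpha_0-\alpha_1+2k$. The crucial ingredient is the equivariance
\begin{equation*}
T_1^k(\theta_j)=\theta_{j+k},\qquad T_1^k(\eta_j)=\eta_{j+k}
\end{equation*}
for all admissible $j,k$. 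Granting this, a direct application of $T_1^n$ with $n \geq 0$ to \eqref{eq:nctheta1cond1} yields \eqref{eq:ncthetancond} term by term, and of $T_1^m$ with $m \leq 0$ to \eqref{eq:ncetam1cond1} yields \eqref{eq:ncetamcond}.

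I would prove the equivariance itself by induction on $|k|$: the base case $k=0$ is tautological, and the inductive step combines the quasideterminant Hankel formulas of Theorem~\ref{thm:ncthetan} with the action of the elementary generators $s_i,\pi$ on the seeds $\theta_1=a_0$ and $\eta_{-1}=b_0$, propagated through the recursions \eqref{eq:ncan}--\eqref{eq:ncbn}. Equivalently, one may transcribe the commutative lattice picture of Section~\ref{sec:Hirota}, in which $T_1$ is the shift by $(1,0,0)$ on the indexing of $\kappa_{l,m,n}$, to the noncommutative setting. The main obstacle is precisely this equivariance: in the commutative case it is immediate from $\tau_n=T_1^n(\tau_0)$, but in the fully noncommutative situation one must verify that the quasideterminant construction commutes with the \Backlund action on the seeds. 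With equivariance established, the remainder reduces to a substitution and an appeal to Theorem~\ref{thm:ncP4nsymsol}.
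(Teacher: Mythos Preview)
Your approach is correct and matches the paper's: Theorem~\ref{thm:ncP4solmainthm} is presented there as a summary of the preceding results, obtained precisely by applying $T_1^k$ to the initial conditions \eqref{eq:nctheta1cond1}--\eqref{eq:ncetam1cond1} to produce \eqref{eq:ncthetancond}--\eqref{eq:ncetamcond} and then invoking Theorem~\ref{thm:ncP4nsymsol}. The equivariance $T_1^k(\theta_j)=\theta_{j+k}$, $T_1^k(\eta_j)=\eta_{j+k}$ that you flag as the main obstacle is handled in the paper by the combination of the remark following Theorem~\ref{thm:ncP4nsymsol} and the proposition on equivalence of \Backlund and Hankel solutions, argued exactly as you outline via the common initial data $a_0=\kappa_1$, $b_0=\kappa_{-1}$.
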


\subsection{"Hamiltonian" structure and the Lax representation}

\subsubsection{"Hamiltonian" structure}
\label{sec:ncham}

The Hamiltonian for the commutative $\PIV$ symmetric form is given by \eqref{eq:scalh0}:
\begin{align}
    \label{eq:scalH}
    H (f_0, f_1, f_2)
    &= f_0 f_1 f_2
    + \tfrac13 (\alpha_1 - \alpha_2) f_0
    + \tfrac13 (\alpha_1 + 2 \alpha_2) f_1
    - \tfrac13 (2 \alpha_1 + \alpha_2) f_2.
\end{align}
Using a skew-symmetric matrix $U$,
\begin{align}
    &&
    U
    &= 
    \begin{pmatrix}
    0 & 1 & -1
    \\
    -1 & 0 & 1
    \\
    1 & -1 & 0
    \end{pmatrix}
    = ||u_{i j}||,
    &
    i, j
    &= 0, 1, 2,
    &&
\end{align}
one can define a Poisson bracket as
\begin{align}
    \label{eq:scalbr}
    &&
    \{ f_i, f_j \}
    &= u_{i, j}.
    &&
\end{align}
It turns out that the commutative $\PIV$ symmetric form is represented in terms of the linear Poisson bracket \eqref{eq:scalbr} with Hamiltonian \eqref{eq:scalH} in the following form
\begin{align}
    \left\{
    \begin{array}{lcl}
         f_0'
         &=& \{ H, f_0 \} + 1,
         \\[1mm]
         f_1'
         &=& \{ H, f_1 \},
         \\[1mm]
         f_2'
         &=& \{ H, f_2 \},
    \end{array}
    \right.
\end{align}
and is Hamiltonian iff $\alpha_0 + \alpha_1 + \alpha_2 = 0$.
On the other hand, if we introduce the canonical variables
\begin{align}
    q
    &:= f_1,
    &
    p
    &:= f_2,
    &
    t:= f_0 + f_1 + f_2,
\end{align}
then the Hamiltonian takes the form
\begin{align}
    H (q, p, t)
    &= - p^2 q 
    - p q^2 
    + p q t
    - \alpha_1 p
    + \alpha_2 q 
    + \tfrac13 (\alpha_1 - \alpha_2) t
\end{align}
and gives the system
\begin{align}
    \left\{
    \begin{array}{lclclcl}
         q'
         &=& \{ H, q \}
         &=& - \partial_p H
         &=& q^2 + 2 p q - q t + \alpha_1,
         \\[1mm]
         p'
         &=& \{ H, p \}
         &=& \partial_q H
         &=& - p^2 - 2 p q + p t + \alpha_2.
    \end{array}
    \right.
\end{align}

In the noncommutative case we have a similar structures. 
\begin{prop}
Let us introduce the following "canonical" variables
\begin{align}
    q
    &:= f_1,
    &
    p
    &:= f_2,
    &
    t
    &:= f_0 + f_1 + f_2,
\end{align}
where functions $f_i$, $i = 0, 1, 2$ satisfy the fully noncommutative $\PIV$ symmetric form. In terms of $p$ and $q$ the function
\begin{gather}
\label{eq:ncHfi}
\begin{multlined}
    H (f_0, f_1, f_2)
    = a_0 f_0 f_1 f_2 
    + (2 - a_0 - a_1) f_1 f_2 f_0
    + a_1 f_2 f_0 f_1
    \hspace{5cm}
    \\
    - (1 - a_1) f_1 f_0 f_2
    + (1 - a_0 - a_1) f_0 f_2 f_1
    - (1 - a_0) f_2 f_1 f_0
    \\
    + \tfrac13 (\alpha_1 - \alpha_2) f_0
    + \tfrac13 (\alpha_1 + 2 \alpha_2) f_1
    - \tfrac13 (2 \alpha_1 + \alpha_2) f_2
\end{multlined}
\end{gather}
obtains the form
\begin{multline}
    H (q, p, t)
    = - (1 - a_0) p^2 q 
    + (1 - 2 a_0) p q p 
    - (1 - a_0) q p^2 
    + (1 - a_0 - a_1) q^2 p 
    - (3 - 2 a_0 - 2 a_1) q p q 
    + (1 - a_0 - a_1) p q^2
    \\
    + (1 - a_0 - a_1) t p q 
    + a_1 p t q 
    - (1 - a_0) p q t 
    + (2 - a_0 - a_1) q p t 
    - (1 - a_1) q t p 
    + a_0 t q p
    \\
    - \alpha_1 p 
    + \alpha_2 q 
    + \tfrac13 (\alpha_1 - \alpha_2) t
\end{multline}
and the $\PIV$ system is equivalent to the following "Hamiltonian" system
\begin{align}
    \left\{
    \begin{array}{lclcl}
         q'
         &=& - \partial_p H
         &=& q^2 
         + q p + p q 
         - t q 
         + \alpha_1,
         \\[2mm]
         p'
         &=& \partial_q H
         &=& - p^2
         - p q - q p
         + p t
         + \alpha_2.
    \end{array}
    \right.
\end{align}
\end{prop}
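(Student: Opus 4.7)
The proof reduces to two direct computations: rewriting the Hamiltonian \eqref{eq:ncHfi} in the canonical variables $(q, p, t)$, and verifying that its noncommutative partial derivatives reproduce the equations of motion of \eqref{eq:ncP4sym1}. For the first step, I would substitute $f_0 = t - q - p$ into each of the six cubic monomials $f_{\sigma(0)} f_{\sigma(1)} f_{\sigma(2)}$ appearing in $H(f_0, f_1, f_2)$. Each substitution produces three cubic monomials in $(t, q, p)$, one from each of $t$, $-q$, $-p$ replacing $f_0$. Collecting by monomial type and tracking the coefficients parameterized by $a_0$, $a_1$ yields the twelve cubic terms of the claimed expression for $H(q, p, t)$ together with the expected linear terms.

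For the second step, I would use the noncommutative Fr{\'e}chet derivative, defined on monomials by $\partial_x(A \, x \, B) = B A$ and extended linearly. Applying this to the expanded $H(q, p, t)$ yields quadratic expressions in $p$, $q$, $t$; the claim is that these give $-\partial_p H = q^2 + qp + pq - tq + \alpha_1$ and $\partial_q H = -p^2 - pq - qp + pt + \alpha_2$. These must then be matched against the $\PIV$ equations obtained from \eqref{eq:ncP4sym1} by setting $f_0 = t - q - p$, $f_1 = q$, $f_2 = p$: the equation $f_1' = f_1 f_2 - f_0 f_1 + \alpha_1$ becomes $q' = qp - (t - q - p)q + \alpha_1 = q^2 + qp + pq - tq + \alpha_1$, and analogously one obtains $p' = -p^2 - pq - qp + pt + \alpha_2$ from the equation for $f_2'$.

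The main obstacle is combinatorial bookkeeping. The Hamiltonian \eqref{eq:ncHfi} carries free parameters $a_0$, $a_1$ which do not appear in the final equations of motion, so one must verify that all $a_0$- and $a_1$-dependent contributions in $\partial_q H$ and $\partial_p H$ cancel. This reflects the fact that the ambiguity in $a_0, a_1$ corresponds to adding cyclically trivial terms, i.e.\ differences of cyclic permutations of cubic monomials, which vanish under the Fr{\'e}chet derivative. Exploiting this cyclic-derivative structure from the outset would reduce the check to a single convenient normalization (for instance $a_0 = a_1 = 0$), after which cyclic invariance handles the general case.
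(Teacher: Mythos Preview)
Your overall plan coincides with the paper's own proof, which is literally ``by a straightforward computation'': substitute $f_0=t-q-p$, expand, and compare partial derivatives with the equations obtained from \eqref{eq:ncP4sym1}. Your derivation of $q'=q^2+qp+pq-tq+\alpha_1$ and $p'=-p^2-pq-qp+pt+\alpha_2$ from the $f_1'$, $f_2'$ equations is correct.

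There is, however, one genuine slip. The rule you write, $\partial_x(A\,x\,B)=BA$, is the \emph{cyclic} derivative, and with that convention the check fails in the fully noncommutative setting. Applying it to the $t$-dependent cubics of $H(q,p,t)$ one finds the $t$-contribution to $\partial_p H$ equal to $2tq-qt$ (and to $\partial_q H$ equal to $2pt-tp$), for \emph{every} choice of $a_0,a_1$; since $t$ does not commute with $p,q$, this never reduces to the required $tq$ (resp.\ $pt$). The convention under which the proposition holds verbatim is $\partial_x(A\,x\,B)=AB$: delete each occurrence of $x$ and keep the remaining factors in their original order. With this rule the coefficients of $tq$, $qt$, $pt$, $tp$ come out exactly right and independent of $a_0,a_1$.

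This also affects your proposed shortcut. Cyclic invariance is a feature of the $BA$ rule, not the $AB$ rule; a single cyclic difference such as $p^2q-pqp$ has nonzero $AB$-derivative. What actually kills the $a_0$-, $a_1$-dependent part of $H$ is that it assembles into \emph{double} commutators,
\[
a_0\,[f_0,[f_1,f_2]] \;+\; a_1\,[[f_2,f_0],f_1],
\]
and one checks directly that $\partial_u\bigl([[u,v],w]\bigr)=\partial_v\bigl([[u,v],w]\bigr)=\partial_w\bigl([[u,v],w]\bigr)=0$ under the $AB$ rule when $u,v,w$ are distinct letters. So the reduction to a single normalisation of $(a_0,a_1)$ is still available, but for this reason rather than cyclic invariance.
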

\begin{proof}
By a straightforward computation.
\end{proof}

\begin{rem}
When $a_0 = a_1 = 1$, the "Hamiltonian" reads
\begin{align}
    H
    &= f_0 f_1 f_2 
    + f_2 f_0 f_1
    - f_0 f_2 f_1
    + \tfrac13 (\alpha_1 - \alpha_2) f_0
    + \tfrac13 (\alpha_1 + 2 \alpha_2) f_1
    - \tfrac13 (2 \alpha_1 + \alpha_2) f_2
    \\[1mm]
    &= - p q p
    - q^2 p
    + q p q
    - p q^2
    - t p q
    + p t q
    + t q p
    - \alpha_1 p 
    + \alpha_2 q 
    + \tfrac13 (\alpha_1 - \alpha_2) t
\end{align}
and is equivalent to a matrix Hamiltonian found in \cite{Kawakami_2015}.
\end{rem}

\begin{rem}
Since the ``Poisson brackets'' are defined by the matrix $U$, one can consider their naive ``quantization'', which is the replacement of the Poisson brackets by commutators of the generators $f_i$. This can always be done thanks to Theorem 1.2 from the paper \cite{Farkas1998RingTF}. A ``quantized Hamiltonian'' takes the form 
\begin{equation}
    H 
    = f_2 f_0 f_1
    + \tfrac13 (\alpha_1 - \alpha_2 + 3 \lambda) f_0
    + \tfrac13 (\alpha_1 + 2 \alpha_2) f_1
    - \tfrac13 (2 \alpha_1 + \alpha_2) f_2,
\end{equation}
where $\PoissonBrackets{f_i, f_j} = \lambda^{-1} \LieBrackets{{f_i, f_j}}$, $\lambda \in F$.
\end{rem}

\subsubsection{Lax representation}
\label{sec:isompair}

In this section we will discuss different isomonodromic Lax pairs for the noncommutative $\PIV[f_{i, n}; n]$ symmetric form \eqref{eq:ncP4nsym}.

\begin{prop}
For any $n \in \mathbb{Z}$, system \eqref{eq:ncP4nsym} can be written as the linear system
\begin{align}
    \left\{
    \begin{array}{lcl}
         \partial_{\lambda} \Psi_n (\lambda, t)
         &=& \mathcal{A}_n (\lambda, t) \, \Psi_n (\lambda, t),
         \\[2mm]
         \partial_{t} \Psi_n (\lambda, t)
         &=& \mathcal{B}_n (\lambda, t) \, \Psi_n (\lambda, t),
    \end{array}
    \right.
\end{align}
where matrices $\mathcal{A}_n (\lambda, t)$ and $\mathcal{B}_n (\lambda, t)$ have the following $\lambda$-dependence
\begin{align} \label{eq:ncNYpair}
    &&
    \mathcal{A}_n (\lambda, t)
    &= A_0 + A_{-1} \lambda^{-1},
    &
    \mathcal{B}_n (\lambda, t)
    &= B_1 \lambda + B_0,
    &&
\end{align}
with matrices $A_0$, $A_{-1}$, $B_1$, and $B_0$ are given by
\begin{align}
    A_0
    &= 
    \begin{pmatrix}
    0 & 1 & f_{0, n}
    \\[0.5mm]
    0 & 0 & 1
    \\[0.5mm]
    0 & 0 & 0
    \end{pmatrix}
    ,
    &
    A_{-1}
    &=
    \begin{pmatrix}
    \beta_0 & 0 & 0
    \\[0.5mm]
    f_{1, n} & \beta_1 & 0
    \\[0.5mm]
    1 & f_{2, n} & \beta_2
    \end{pmatrix}
    ,
    &
    B_1
    &= 
    \begin{pmatrix}
    0 & 0 & 1 
    \\[0.5mm]
    0 & 0 & 0 
    \\[0.5mm]
    0 & 0 & 0
    \end{pmatrix}
    ,
    &
    B_0
    &= 
    \begin{pmatrix}
    - f_{2, n} & 0 & 0
    \\[0.5mm]
    1 & - f_{0, n} & 0
    \\[0.5mm]
    0 & 1 & - f_{1, n}
    \end{pmatrix}
    .
\end{align}
Here scalar parameters $\beta_0$, $\beta_1$, $\beta_2$ are related to the $\alpha$'s parameters as
\begin{align} \label{eq:albetrel}
    \alpha_0
    &= 1 + \beta_2 - \beta_0 - n,
    &
    \alpha_1
    &= \beta_0 - \beta_1 + n,
    &
    \alpha_2 
    &= \beta_1 - \beta_2.
\end{align}
\end{prop}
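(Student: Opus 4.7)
The plan is to verify the claim by checking the zero-curvature compatibility condition for the proposed Lax pair. Namely, the existence of a simultaneous solution $\Psi_n(\lambda,t)$ of the two linear systems is equivalent to
\begin{equation}
    \partial_t \mathcal{A}_n(\lambda,t) - \partial_\lambda \mathcal{B}_n(\lambda,t) + [\mathcal{A}_n(\lambda,t), \mathcal{B}_n(\lambda,t)] = 0,
\end{equation}
and the strategy is to expand this identity in powers of $\lambda$ and show that collecting coefficients produces precisely the system \eqref{eq:ncP4nsym} together with the scalar relations \eqref{eq:albetrel}.

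With $\mathcal{A}_n = A_0 + A_{-1}\lambda^{-1}$ and $\mathcal{B}_n = B_1\lambda + B_0$, the compatibility condition produces three nontrivial coefficients. The $\lambda^1$-coefficient is $[A_0, B_1] = 0$, which I would check directly: both $A_0$ and $B_1$ are strictly upper triangular of the prescribed shape, so their commutator vanishes identically (this step does not use the equations, only the ansatz). The $\lambda^{-1}$-coefficient is
\begin{equation}
    A_{-1}' + [A_{-1}, B_0] = 0,
\end{equation}
and expanding entry by entry should reproduce exactly the three equations
\begin{equation}
    f_{0,n}' = f_{0,n} f_{1,n} - f_{2,n} f_{0,n} + (\alpha_0 + n), \qquad
    f_{1,n}' = f_{1,n} f_{2,n} - f_{0,n} f_{1,n} + (\alpha_1 - n), \qquad
    f_{2,n}' = f_{2,n} f_{0,n} - f_{1,n} f_{2,n} + \alpha_2,
\end{equation}
provided the scalar parameters match via \eqref{eq:albetrel}. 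In particular, the constant terms coming from $A_{-1}$'s diagonal entries $\beta_0, \beta_1, \beta_2$ commute through and assemble into $\alpha_0 + n$, $\alpha_1 - n$, $\alpha_2$ precisely when $\alpha_0 = 1 + \beta_2 - \beta_0 - n$, $\alpha_1 = \beta_0 - \beta_1 + n$, $\alpha_2 = \beta_1 - \beta_2$; the diagonal part of the equation $A_{-1}' + [A_{-1}, B_0] = 0$ (which says that the $\beta_i$ are constant) is automatic since they are scalars.

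The remaining $\lambda^0$-coefficient is
\begin{equation}
    A_0' - B_1 + [A_0, B_0] + [A_{-1}, B_1] = 0,
\end{equation}
and I expect this to deliver the first integral $I = f_{0,n} + f_{1,n} + f_{2,n} - t$, i.e. the constraint $f_{0,n} + f_{1,n} + f_{2,n} = t$ (up to a constant that is absorbed by a gauge freedom in the normalization of $\beta_0+\beta_1+\beta_2$, which corresponds to the condition $\alpha_0 + \alpha_1 + \alpha_2 = 1$). The $B_1$ term contributes the $-1$ that balances $t'=1$, and the commutator $[A_{-1},B_1]$ picks up the off-diagonal entries $f_{1,n}$ and $f_{2,n}$, while $[A_0, B_0]$ contributes $f_{0,n}$.

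The main obstacle in executing this plan is purely bookkeeping: one must expand the $3 \times 3$ matrix products and commutators without ever assuming that the entries $f_{i,n}$ commute with each other, and then match the resulting noncommutative polynomial expressions entry-by-entry against \eqref{eq:ncP4nsym}. The scalar relations \eqref{eq:albetrel} are then read off from the diagonal contributions, and the fact that the linear $\lambda$-dependence $\lambda \mapsto \mathcal{A}_n(\lambda,t)$ has only a first-order pole at $\lambda = 0$ and a regular part that is nilpotent modulo the scalars $\beta_i$ ensures the pair \eqref{eq:ncNYpair} is an isomonodromic (not merely zero-curvature) representation, consistent with the commutative Noumi-Yamada pair of \cite{noumi2000affine}.
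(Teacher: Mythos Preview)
Your approach is the same as the paper's: write down the zero-curvature condition $\partial_t\mathcal{A}_n-\partial_\lambda\mathcal{B}_n=[\mathcal{B}_n,\mathcal{A}_n]$, expand in powers of $\lambda$, and read off the system. However, your attribution of the equations to the $\lambda$-coefficients is wrong. The matrix $A_{-1}$ contains only $f_{1,n}$ and $f_{2,n}$ (besides scalars), so the $\lambda^{-1}$ equation $A_{-1}'+[A_{-1},B_0]=0$ yields only the two equations for $f_{1,n}'$ and $f_{2,n}'$, not all three. The equation for $f_{0,n}'$ comes from the $(1,3)$ entry of the $\lambda^{0}$ coefficient $A_0'-B_1+[A_0,B_0]+[A_{-1},B_1]=0$, where the $-B_1$ term supplies the constant $1$ that produces the $1+\beta_2-\beta_0$ in $f_{0,n}'=f_{0,n}f_{1,n}-f_{2,n}f_{0,n}+(1+\beta_2-\beta_0)$. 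All other entries at order $\lambda^{0}$ vanish identically; the Lax pair does \emph{not} impose the constraint $f_{0,n}+f_{1,n}+f_{2,n}=t$. That relation is only a first integral of the resulting system (summing the three equations gives $(f_{0,n}+f_{1,n}+f_{2,n})'=1$), not a consequence of compatibility.

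So the plan is sound and matches the paper, but when you carry it out you must correct which coefficient produces which equation, and drop the claim that the $\lambda^0$ level enforces the first integral.
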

\begin{proof}
The compatibility condition
\begin{align}
    \partial_t \mathcal{A}_n 
    - \partial_{\lambda} \mathcal{B}_n
    &= \LieBrackets{\mathcal{B}_n, \mathcal{A}_n}
\end{align}
leads to the system
\begin{align}
    \left\{
    \begin{array}{lcl}
         f_{0, n}'
         &=& f_{0, n} f_{1, n}
         - f_{2, n} f_{0, n}
         + (1 + \beta_2 - \beta_0),
         \\[2mm]
         f_{1, n}'
         &=& f_{1, n} f_{2, n}
         - f_{0, n} f_{1, n}
         + (\beta_0 - \beta_1),
         \\[2mm]
         f_{2, n}'
         &=& f_{2, n} f_{0, n} 
         - f_{1, n} f_{2, n}
         + (\beta_1 - \beta_2)
    \end{array}
    \right.
\end{align}
that is exactly the $\PIV[f_{i, n}; n]$ symmetric form \eqref{eq:ncP4nsym}, since relations \eqref{eq:albetrel} hold.
\end{proof}

\begin{rem}
The pair is constructed by using the method of non-abelianization of the well-known scalar pairs, suggested in the paper \cite{Bobrova_Sokolov_2021_2}.
\end{rem}

Note that in the commutative case the Lax pair $\mathcal{A}_0(\lambda, t)$, $\mathcal{B}_0(\lambda, t)$ is equivalent to the Noumi-Yamada pair \cite{noumi2000affine} for the $\PIV$ symmetric form \eqref{eq:P4sym}.
The pair is reduced to the Jimbo-Miwa pair \cite{Jimbo_Miwa_1981} that has the following dependence on the spectral parameter $\mu$
\begin{align}
    \mathbf{A}_0 (\mu, t)
    &= A_1 \mu
    + A_0
    + A_{-1} \mu^{-1},
    &
    \mathbf{B}_0 (\mu, t)
    &= B_1 \mu
    + B_0,
\end{align}
where $A_{1}$, $A_0$, $A_{-1}$, $B_1$, $B_0$ are $2 \times 2$ matrices and $A_1$, $B_1$ are constant diagonal matrices with different eigenvalues. The reduction is given by a generalized Laplace transform and some constraints on the parameters $\beta_i$ \cite{joshi2007linearization}. One can establish the same fact in the noncommutative case.

\begin{prop}
The Noumi-Yamada pair \eqref{eq:ncNYpair} can be reduced to a pair of the Jimbo-Miwa type.
\end{prop}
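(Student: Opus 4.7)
The plan is to mimic, step by step, Joshi's commutative reduction \cite{joshi2007linearization} and then verify that every operation survives in the noncommutative setting. The guiding observation is that the Noumi-Yamada matrix $\mathcal{A}_n(\lambda, t) = A_0 + A_{-1} \lambda^{-1}$ has a regular singularity at $\lambda = 0$ and an irregular singularity of Poincar\'e rank one at $\lambda = \infty$, while the Jimbo-Miwa target $\mathbf{A}_0(\mu, t) = A_1 \mu + A_0 + A_{-1} \mu^{-1}$ has irregular singularities of rank one at both $\mu = 0$ and $\mu = \infty$. This matching of Poincar\'e indices is precisely what a Laplace-type correspondence exchanging $\lambda = \infty$ with $\mu = 0$ realises.

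Concretely, I would first apply the formal Laplace transform $\Psi_n(\lambda, t) \mapsto \tilde{\Psi}_n(\mu, t)$ implementing $\partial_\lambda \leftrightarrow -\mu$ and $\lambda \leftrightarrow \partial_\mu$. Writing $\xi$ for the Laplace image of $\lambda^{-1} \Psi_n$, so that $\partial_\mu \xi = \tilde{\Psi}_n$, the $\lambda$-equation becomes $(\mu I + A_0) \, \partial_\mu \xi = - A_{-1} \, \xi$. The nilpotency relation $A_0^3 = 0$, a direct consequence of the strictly upper-triangular form of $A_0$, supplies the closed formula $(\mu I + A_0)^{-1} = \mu^{-1} I - \mu^{-2} A_0 + \mu^{-3} A_0^{2}$, and a subsequent gauge transformation by a suitable $\mu$-dependent matrix redistributes the resulting powers of $\mu$ into the required ones $\mu^{+1}$, $\mu^{0}$, $\mu^{-1}$. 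Imposing on the scalar parameters $\beta_0, \beta_1, \beta_2$ the same linear constraint that appears in the scalar case forces one component of $\xi$ to be an algebraic combination of the other two; eliminating it delivers a $2 \times 2$ spectral pair of Jimbo-Miwa type with constant diagonal $A_1$ and, after treating the $t$-equation analogously, a constant diagonal $B_1$ as well.

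The main obstacle, and the reason the statement is not merely a citation, is verifying that this chain of manipulations survives intact when $f_{i, n} \in R$ is noncommutative. The inversion $(\mu I + A_0)^{-1}$, the gauge change, and the algebraic elimination must all be performed with care about left versus right multiplication, since $A_0$ and $A_{-1}$ contain noncommutative entries $f_{0, n}, f_{1, n}, f_{2, n}$ that act on $\Psi_n$ from the left. Because the derivations $\partial_\lambda$, $\partial_\mu$, $\partial_t$ commute with left multiplication by elements of $R$ and the Laplace symbol correspondence is $R$-linear, the commutative argument should extend verbatim, but writing out the matrix identities in the correct order is where the actual work lies. Once this is done, the compatibility $\partial_t \mathbf{A}_0 - \partial_\mu \mathbf{B}_0 = [\mathbf{B}_0, \mathbf{A}_0]$ follows from the compatibility of \eqref{eq:ncNYpair} by an invertible sequence of algebraic operations and is therefore again equivalent to \eqref{eq:ncP4nsym}.
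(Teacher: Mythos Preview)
Your proposal is correct and follows essentially the same route as the paper: both carry out Joshi's reduction \cite{joshi2007linearization} via a formal Laplace transform, invert $\mu I \pm A_0$ using the nilpotency $A_0^3=0$ coming from strict upper-triangularity, impose a single linear constraint on the $\beta_i$ (the paper takes $\beta_2=-1$) so that one component of the transformed solution decouples algebraically, and then gauge the remaining $2\times2$ block into Jimbo--Miwa form. Your observation that all manipulations are left $R$-linear and therefore survive noncommutativity is exactly the point.

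Two minor discrepancies are worth noting. First, where you propose a ``$\mu$-dependent gauge transformation'' to convert the powers $\mu^{-1},\mu^{-2},\mu^{-3}$ into $\mu,\mu^{0},\mu^{-1}$, the paper instead performs the change of spectral parameter $\mu\mapsto\mu^{-1}$; this is cleaner and avoids having to exhibit an explicit gauge, so you may want to adopt it. Second, because you work with the Laplace image $\xi$ of $\lambda^{-1}\Psi_n$ rather than of $\Psi_n$ itself, your transformed connection matrix carries $A_{-1}$ in place of the paper's $A_{-1}+I$; this is harmless (it amounts to a uniform shift $\beta_i\mapsto\beta_i+1$) but means your constraint reads $\beta_2=0$ rather than $\beta_2=-1$. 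Neither point affects the validity of the argument.
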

\begin{proof}
We will repeat the proof of this fact in the commutative case, presented in the paper \cite{joshi2007linearization}.

\medskip

\textbullet \, \, \textbf{Laplace transform.} First we formally consider the following Laplace transform
\begin{align}
    \Psi_n (\lambda, t)
    &= \int_C e^{\lambda \mu} \, \Phi_n (\mu, t) \, d \mu,
\end{align}
where we assume that one can choose a contour $C$ such that the corresponding terms arising from integration by parts will cancel out. 
\begin{rem}
Strict conditions of the existence such a transformation will be explored in forthcoming articles.
\end{rem}
This transformation turns the linear system
\begin{align}
    \left\{
    \begin{array}{lcl}
         \partial_{\lambda} \Psi_n (\lambda, t)
         &=& (A_0 + A_{-1} \lambda^{-1}) \, \Psi_n (\lambda, t),
         \\[2mm]
         \partial_{t} \Psi_n (\lambda, t)
         &=& (B_1 \lambda + B_0) \, \Psi_n (\lambda, t)
    \end{array}
    \right.
\end{align}
into the following linear system
\begin{align}
    \left\{
    \begin{array}{rclcl}
         \partial_{\mu} \Phi_n (\mu, t)
         &=& (A_0 - \mu \, \mathbb{I})^{-1} (A_{-1} + \mathbb{I}) \, \Phi_n (\mu, t)
         &=& \mathbf{A}_n (\mu, t) \, \Phi_n (\mu, t)
         ,
         \\[2mm]
         \partial_{t} \Phi_n (\mu, t)
         &=& - B_1 \, \partial_{\mu} \Phi_n (\mu, t)
         + B_0 \Phi_n (\mu, t)
         &=& \mathbf{B}_n (\mu, t) \, \Phi_n (\mu, t)
         ,
    \end{array}
    \right.
\end{align}
where matrices $\mathbf{A}_n$ and $\mathbf{B}_n$ have the form
\begin{align}
    \mathbf{A}_n (\mu, t)
    &= A_{-1} \, \mu^{-1}
    + A_{-2} \, \mu^{-2}
    + A_{-3} \, \mu^{-3},
    &
    \mathbf{B}_n (\mu, t)
    &= B_0 
    + B_{-1} \, \mu^{-1}.
\end{align}
Changing the spectral parameter as $\mu \mapsto \mu^{-1}$, we obtain the pair
\begin{align}
    \mathbf{A}_n (\mu, t)
    &= A_{1} \, \mu
    + A_{0}
    + A_{-1} \, \mu^{-1},
    &
    \mathbf{B}_n (\mu, t)
    &= B_1 \, \mu
    + B_{0},
\end{align}
where matrices $A_1$, $A_0$, $A_{-1}$, $B_1$, and $B_0$ are
\begin{gather}
\begin{aligned}
    A_1
    &= 
    \begin{pmatrix}
    1 & f_{2, n} & \beta_2 + 1
    \\[0.5mm]
    0 & 0 & 0
    \\[0.5mm]
    0 & 0 & 0
    \end{pmatrix}
    ,
    &
    A_0
    &= 
    \begin{pmatrix}
    f_{0, n} + f_{1, n} 
    & f_{0, n} f_{2, n} + (\beta_1 + 1)
    & (\beta_2 + 1) f_{0, n}
    \\[0.5mm]
    1 & f_{2, n} & \beta_2 + 1
    \\[0.5mm]
    0 & 0 & 0
    \end{pmatrix}
    ,
\end{aligned}
\\[2mm]
\begin{aligned}
    A_{-1}
    &= 
    \begin{pmatrix}
    \beta_0 + 1 & 0 & 0
    \\[0.5mm]
    f_{1, n} & \beta_1 + 1 & 0 
    \\[0.5mm]
    1 & f_{2, n} & \beta_2 + 1
    \end{pmatrix}
    ,
    &
    B_1
    &= 
    \begin{pmatrix}
    1 & f_{2, n} & \beta_2 + 1
    \\[0.5mm]
    0 & 0 & 0 
    \\[0.5mm]
    0 & 0 & 0
    \end{pmatrix}
    ,
    &
    B_0
    &= 
    \begin{pmatrix}
    - f_{2, n} & 0 & 0
    \\[0.5mm]
    1 & - f_{0, n} & 0
    \\[0.5mm]
    0 & 1 & - f_{1, n}
    \end{pmatrix}
    .
\end{aligned}
\end{gather}

\medskip

\textbullet \,\, 
\textbf{Reduction to a $2 \times 2$ pair.} 
Let us present the fundamental solution $\Phi_n (\mu, t)$ as a column vector of the form $\Phi_n = \begin{pmatrix} \Phi_{1, n} & \Phi_{2, n} & \Phi_{3, n}\end{pmatrix}^T$. Without loss of generality, one can set $\beta_2 = -1$. In view of this condition, we define the column vector $\tilde \Phi_n = \begin{pmatrix} \Phi_{1, n} & \Phi_{2, n}\end{pmatrix}^T$ that satisfies the following linear system
\begin{align}
    \left\{
    \begin{array}{rclcl}
         \partial_{\mu} \tilde \Phi_n (\mu, t)
         &=& \mathbf{\tilde A}_n (\mu, t) \, \tilde \Phi_n (\mu, t)
         ,
         \\[2mm]
         \partial_{t} \tilde \Phi_n (\mu, t)
         &=& \mathbf{\tilde B}_n (\mu, t) \, \tilde \Phi_n (\mu, t)
         ,
    \end{array}
    \right.
\end{align}
with
\begin{gather}
    \mathbf{\tilde A}_n (\mu, t)
    = 
    \begin{pmatrix}
    1 & f_{2, n}
    \\[0.5mm]
    0 & 0
    \end{pmatrix}
    \mu
    + 
    \begin{pmatrix}
    f_{0, n} + f_{1, n} 
    & f_{0, n} f_{2, n} + (\beta_1 + 1)
    \\[0.5mm]
    1 & f_{2, n}
    \end{pmatrix}
    + 
    \begin{pmatrix}
    \beta_0 + 1 & 0
    \\[0.5mm]
    f_{1, n} & \beta_1 + 1
    \end{pmatrix}
    \mu^{-1},
    \\[2mm]
    \mathbf{\tilde B}_n (\mu, t)
    = 
    \begin{pmatrix}
    1 & f_{2, n}
    \\[0.5mm]
    0 & 0
    \end{pmatrix}
    \mu
    + 
    \begin{pmatrix}
    - f_{2, n} 
    & 0
    \\[0.5mm]
    1 & - f_{0, n}
    \end{pmatrix}.
\end{gather}
Making a gauge transformation by the matrix $g = \begin{pmatrix} 1 & f_{2, n} \\[0.5mm] 0 & 1 \end{pmatrix}$, the pair is written in the Jimbo-Miwa form:
\begin{gather}
\label{eq:ncJMpair}
\begin{gathered}
    \begin{multlined}
    \mathbf{\tilde A}_n (\mu, t)
    = 
    \begin{pmatrix}
    1 & 0
    \\[0.5mm]
    0 & 0
    \end{pmatrix}
    \mu
    + 
    \begin{pmatrix}
    f_{0, n} + f_{1, n} + f_{2, n}
    & - f_{1, n} f_{2, n} + (\beta_1 + 1)
    \\[0.5mm]
    1 & 0
    \end{pmatrix}
    \hspace{2.72cm}
    \\[1mm]
    + 
    \begin{pmatrix}
    f_{2, n} f_{1, n} + (\beta_0 + 1) & - f_{2, n} f_{1, n} f_{2, n} - (\beta_0 - \beta_1) f_2
    \\[0.5mm]
    f_{1, n} & - f_{1, n} f_{2, n} + (\beta_1 + 1)
    \end{pmatrix}
    \mu^{-1},
    \end{multlined}
    \\[2mm]
    \mathbf{\tilde B}_n (\mu, t)
    = 
    \begin{pmatrix}
    1 & 0
    \\[0.5mm]
    0 & 0
    \end{pmatrix}
    \mu
    + 
    \begin{pmatrix}
    0 & - f_{1, n} f_{2, n} + (\beta_1 + 1)
    \\[0.5mm]
    1 & - f_{0, n} - f_{2, n}
    \end{pmatrix}.
\end{gathered}
\end{gather}
The compatibility condition
\begin{align}
    \partial_t \mathbf{\tilde A}_n
    - \partial_{\mu} \mathbf{\tilde B}_n
    &= \LieBrackets{\mathbf{\tilde B}_n, \mathbf{\tilde A}_n}
\end{align}
leads to the system that is equivalent to the noncommutative $\PIV[f_{i, n}; n]$ symmetric form \eqref{eq:ncP4nsym}.
\end{proof}

\begin{rem}
If we use the condition $f_{0, n} + f_{1, n} + f_{2, n} = t$ with an appropriate choice of index $n$ in  the Jimbo-Miwa pair \eqref{eq:ncJMpair}, we obtain pairs for systems \eqref{eq:ncf1f2nsys} and \eqref{eq:ncf0f2mm1sys}.
In particular, the substitution of $f_{0, n} = - f_{1, n} - f_{2, n} + t$ into \eqref{eq:ncJMpair} leads to a pair that is a particular case of a slightly general pair written in Conclusion of the paper \cite{Bobrova_Sokolov_2021_2}.
\end{rem}

\section*{Conclusion}

Using \Backlund transformations and the symmetric form of the commutative $\PIV$ equation, we construct its fully noncommutative version that possesses solutions in terms of the infinite Toda system. Our approach can be useful for constructing such analogs for other \Painleve equations. Their connection to the solutions of the noncommutative Toda system can be applied to describing Darboux-\Backlund transformations of their solutions. 

Our analog leads to a fully noncommutative hierarchy of the $\PIV$ systems and to fully noncommutative analogs of the systems of type $A_l^{(1)}$, $l = 2, 3, \dots, N$. One of such infinite sequence of the $\PIV$ systems is suggested in \cite{gordoa2021matrix} in the matrix case. The second type of the hierarchy that contains the commutative $\PIV$ symmetric form was introduced in \cite{veselov1993} as a dressing chain and was investigated in the paper \cite{noumi2000affine} with the help of the affine Weyl group symmetries.

In the commutative case, the generating functions for the entries of the Hankel determinants are connected with the asymptotic solution at infinity of the isomonodromic problem for the $\PPII$ and $\PIV$ equations \cite{joshi2004generating}, \cite{joshi2006generating}. It would be interesting to study this problem in a fully noncommutative case.

\appendix

\section{}
\label{sec:app}
\subsection{Proof of Theorem \ref{thm:scalthetanetamsol}}
\label{app:proofthmscalthetanetamsol}
\phantom{}

\medskip
\textbullet \,\, \textbf{Case (a).} 
Take the derivative of $y_n$ w.r.t. $t$ and use condition \eqref{eq:thetancond}:
\begin{align}
    y_n'
    &= \brackets{\theta_{n + 1}' \theta_{n + 1}^{-1} + t}'
    = \theta_{n + 1}'' \theta_{n + 1}^{- 1} 
    - \brackets{\theta_{n + 1}' \theta_{n + 1}^{-1}}^2
    + 1
    \\[1mm]
    &= -\brackets{
    t \theta_{n + 1}' 
    + 2 \theta_{n + 1}^2 \theta_n^{-1}
    + (\alpha_0 - \alpha_1 + 2 n) \theta_{n + 1}
    } \theta_{n + 1}^{- 1} 
    - \brackets{\theta_{n + 1}' \theta_{n + 1}^{-1}}^2
    + 1
    \\[1mm]
    &= - t \theta_{n + 1}' \theta_{n + 1}^{-1} 
    - 2 \theta_{n + 1} \theta_n^{-1}
    - (\alpha_0 - \alpha_1 + 2 n)
    - (\theta_{n + 1}' \theta_{n + 1}^{-1})^2
    + 1.
\end{align}
Replace $\theta_{n + 1}' \theta_{n + 1}^{-1}$ by $y_n - t$ and $\theta_{n + 1} \theta_n^{-1}$ by $z_n + (\alpha_1 + \alpha_2 - n)$ and use $\alpha_0 + \alpha_1 + \alpha_2 = 1$:
\begin{align}
    y_n'
    &= - t (y_n - t) 
    - 2 \brackets{
    z_n + (\alpha_1 + \alpha_2 - n)
    }
    - (\alpha_0 - \alpha_1 + 2 n)
    - (y_n - t)^2
    + 1
    \\[1mm]
    &= - y_n^2
    - 2 z_n + t y_n - \alpha_2.
\end{align}
Hence, we obtain the system
\begin{align} \label{eq:P4nsys}
    \left\{
    \begin{array}{lcl}
         - z_n'
         &=& y_n^{-1} z_n^2 
         + (\alpha_2 - y_n^2) y_n^{-1} z_n
         - (\alpha_1 + \alpha_2 - n) y_n,
         \\[2mm]
         - y_n'
         &=& y_n^2 
         + 2 z_n 
         - t y_n 
         + \alpha_2,
    \end{array}
    \right.
\end{align}
that is equivalent to the $\PIV[y_n; n]$ equation \eqref{eq:P4yn}.

\medskip
\textbullet \,\, \textbf{Case (b).} 
There is the same chain of identities:
\begin{align}
    y_{m - 1}'
    &= \brackets{- \eta_{m - 1}' \eta_{m - 1}^{-1} + t}'
    = - \eta_{m - 1}'' \eta_{m - 1}^{- 1} 
    + \brackets{\eta_{m - 1}' \eta_{m - 1}^{-1}}^2
    + 1
    \\[1mm]
    &= \brackets{
    - t \, \eta_{m - 1}'
    + 2 \eta_{m - 1}^2 \, \eta_{m}^{-1}
    + (\alpha_0 - \alpha_1 + 2 (m - 1)) \eta_{m - 1}
    } \eta_{m - 1}^{- 1} 
    + \brackets{\eta_{m - 1}' \eta_{m - 1}^{-1}}^2
    + 1
    \\[1mm]
    &= - t \eta_{m - 1}' \eta_{m - 1}^{-1} 
    + 2 \eta_{m - 1} \eta_{m}^{-1}
    + \brackets{\eta_{m - 1}' \eta_{m - 1}^{-1}}^2
    + (\alpha_0 - \alpha_1 + 2 m - 1).
\end{align}
Replace $\eta_{m - 1}' \eta_{m - 1}^{-1}$ by $- y_{m - 1} + t$ and $\eta_{m - 1} \eta_{m}^{-1}$ by $z_{m} + (\alpha_1 + \alpha_2 - m)$ and use $\alpha_0 + \alpha_1 + \alpha_2 = 1$:
\begin{align}
    y_{m - 1}'
    &= - t (- y_{m - 1} + t) 
    + 2 \brackets{
    z_{m} + (\alpha_1 + \alpha_2 - m)
    }
    + (- y_{m - 1} + t)^2
    + (- 2 \alpha_1 - \alpha_2 + 2 m)
    \\[1mm]
    &= y_{m - 1}^2
    + 2 z_{m} - t y_{m - 1} + \alpha_2.
\end{align}
So, the resulting system
\begin{align} \label{eq:P4msys}
    \left\{
    \begin{array}{lcl}
         z_{m}'
         &=& y_{m - 1}^{-1} z_{m}^2 
         + (\alpha_2 - y_{m - 1}^2) y_{m - 1}^{-1} z_{m}
         - (\alpha_1 + \alpha_2 - m) y_{m - 1},
         \\[2mm]
         y_{m - 1}'
         &=& y_{m - 1}^2 
         + 2 z_{m} 
         - t y_{m - 1} 
         + \alpha_2,
    \end{array}
    \right.
\end{align}
reduces to the $\PIV[y_{m - 1}; m - 1]$ equation \eqref{eq:P4yn} for $y_{m - 1} = y_{m - 1} (t)$.

\subsection{Proof of Theorem \ref{thm:ncP4nsymsol}}
\label{app:proofthmcnP4nsymsol}
\phantom{}

\textbullet \, \, \textbf{Case (a).} 
The derivative of $f_{2, n}$:
\begin{align}
    f_{2, n}'
    &= \theta_{n + 1}'' \theta_{n + 1}^{-1}
    - (\theta_{n + 1}' \theta_{n + 1}^{-1})^2
    + 1
    \\[1mm]
    &= - \brackets{
    t \theta_{n + 1}'
    + 2 \theta_{n + 1} \theta_n^{-1} \theta_{n + 1}
    + (\alpha_0 - \alpha_1 + 2 n) \theta_{n + 1}
    } \theta_{n + 1}^{-1}
    - (\theta_{n + 1}' \theta_{n + 1}^{-1})^2
    + (\alpha_0 + \alpha_1 + \alpha_2)
    \\[1mm]
    &= - t \theta_{n + 1}' \theta_{n + 1}^{-1}
    - (\theta_{n + 1}' \theta_{n + 1}^{-1})^2
    - 2 \theta_{n + 1} \theta_{n}^{-1}
    + (2 \alpha_1 + \alpha_2 - 2 n)
    \\[1mm]
    &= - t (f_{2, n} - t)
    - (f_{2, n} - t)^2
    - 2 \brackets{
    \tilde f_{1, n} + (\alpha_1 - n)
    }
    + (2 \alpha_1 + \alpha_2 - 2 n)
    \\[1mm]
    &= - f_{2, n}^2 - 2 \tilde f_{1, n} + f_{2, n} t + \alpha_2
    = - f_{2, n}^2 
    - f_{2, n} f_{1, n} 
    - f_{1, n} f_{2, n} 
    + f_{2, n} t + \alpha_2.
\end{align}
This equation together with \eqref{eq:ncf1neq} give the system
\begin{align}
\label{eq:ncf1f2nsys}
    \left\{
    \begin{array}{lcl}
         f_{1, n}'
         &=& f_{1, n}^2 
         + f_{1, n} f_{2, n}
         + f_{2, n} f_{1, n}
         - t f_{1, n}
         + (\alpha_1 - n),
         \\[2mm]
         f_{2, n}'
         &=& - f_{2, n}^2 
        - f_{2, n} f_{1, n} 
        - f_{1, n} f_{2, n} 
        + f_{2, n} t + \alpha_2.
    \end{array}
    \right.
\end{align}
Since $f_{0, n} = - f_{1, n} - f_{2, n} + t$, it can be rewritten as
\begin{align}
    &
    \left\{
    \begin{array}{lcl}
         f_{1, n}'
         &=& f_{1, n} f_{2, n}
         - f_{0, n} f_{1, n}
         + (\alpha_1 - n),
         \\[2mm]
         f_{2, n}'
         &=& f_{2, n} f_{0, n}
        - f_{1, n} f_{2, n} 
        + \alpha_2.
    \end{array}
    \right.
\end{align}
Supplementing the system with the following equation for $f_{0, n}$:
\begin{align}
    f_{0, n}'
    &= \brackets{
    - f_{1, n} - f_{2, n} + t
    }' 
    = - f_{1, n}' - f_{2, n}' + 1
    \\[1mm]
    &= - \brackets{
    f_{1, n} f_{2, n}
    - f_{0, n} f_{1, n}
     + (\alpha_1 - n)
    }
    - \brackets{
    f_{2, n} f_{0, n}
    - f_{1, n} f_{2, n} 
    + \alpha_2
    }
    + (\alpha_0 + \alpha_1 + \alpha_2)
    \\[1mm]
    &= f_{0, n} f_{1, n}
    - f_{2, n} f_{0, n}
    + (\alpha_0 + n),
\end{align}
we get the $\PIV[f_{i, n}; n]$ symmetric form \eqref{eq:ncP4nsym}.

\medskip
\textbullet \, \, \textbf{Case (b).} 
Similarly, we have the following chain of identities:
\vspace{-0.5cm}
\begin{align}
    f_{2, m - 1}'
    &= (\eta_{m - 1}^{-1} \eta_{m - 1}')^2
    - \eta_{m - 1}^{-1} \eta_{m - 1}'' 
    + 1
    \\[1mm]
    &=  (\eta_{m - 1}^{-1} \eta_{m - 1}')^2
    + \eta_{m - 1}^{-1} \brackets{
    - \eta_{m - 1}' t
    + 2 \eta_{m - 1} \eta_{m}^{-1} \eta_{m - 1}
    + (\alpha_0 - \alpha_1 + 2 (m - 1)) 
    \eta_{m - 1}
    }
    + (\alpha_0 + \alpha_1 + \alpha_2)
    \\[1mm]
    &= (\eta_{m - 1}^{-1} \eta_{m - 1}')^2
    - \eta_{m - 1}^{-1} \eta_{m - 1}' t
    + 2 \eta_{m}^{-1} \eta_{m - 1}
    + (2 \alpha_0 + \alpha_2 + 2 (m - 1))
    \\[1mm]
    &= (- f_{2, m - 1} + t)^2
    - (- f_{2, m - 1} + t) t
    + 2 \brackets{
    \tilde f_{0, m - 1} - (\alpha_0 + (m - 1))
    }
    + (2 \alpha_0 + \alpha_2 + 2 (m - 1))
    \\[1mm]
    &= f_{2, m - 1}^2 
    + 2 \tilde f_{0, m - 1}
    - t f_{2, m - 1} + \alpha_2
    = f_{2, m - 1}^2 
    + f_{2, m - 1} f_{0, m - 1}
    + f_{0, m - 1} f_{2, m - 1}
    - t f_{2, m - 1} + \alpha_2.
\end{align}
So we arrive at the system
\begin{align}
\label{eq:ncf0f2mm1sys}
    \left\{
    \begin{array}{lcl}
         f_{0, m - 1}'
         &=& - f_{0, m - 1}^2 
         - f_{0, m - 1} f_{2, m - 1}
         - f_{2, m - 1} f_{0, m - 1}
         + f_{0, m - 1} t
         + (\alpha_0 + (m - 1)),
         \\[2mm]
         f_{2, m - 1}'
         &=& f_{2, m - 1}^2 
        + f_{2, m - 1} f_{0, m - 1}
        + f_{0, m - 1} f_{2, m - 1}
        - t f_{2, m - 1} + \alpha_2,
    \end{array}
    \right.
\end{align}
that by the definition of $f_{1, m - 1}$, can be represented as
\begin{align}
    \left\{
    \begin{array}{lcl}
         f_{0, m - 1}'
         &=& f_{0, m - 1} f_{1, m - 1}
         - f_{2, m - 1} f_{0, m - 1}
         + (\alpha_0 + (m - 1)),
         \\[2mm]
         f_{2, m - 1}'
         &=& f_{2, m - 1} f_{0, m - 1}
         - f_{1, m - 1} f_{2, m - 1}
         + \alpha_2.
    \end{array}
    \right.
\end{align}
Taking it with the equation for $f_{1, m - 1}'$,
\begin{align}
    f_{1, m - 1}'
    &= \brackets{
    - f_{0, m - 1}
    - f_{2, m - 1}
    + t
    }' 
    = - f_{0, m - 1}'
    - f_{2, m - 1}'
    + 1
    \\[1mm]
    &= - \brackets{
    f_{0, m - 1} f_{1, m - 1}
    - f_{2, m - 1} f_{0, m - 1}
    + (\alpha_0 + (m - 1))
    }
    \\
    & \qquad
    - \brackets{
    f_{2, m - 1} f_{0, m - 1}
    - f_{1, m - 1} f_{2, m - 1}
    + \alpha_2
    }
    + (\alpha_0 + \alpha_1 + \alpha_2)
    \\[1mm]
    &= f_{1, m - 1} f_{2, m - 1}
    - f_{0, m - 1} f_{1, m - 1}
    + (\alpha_1 - (m - 1)),
\end{align}
one can get the $\PIV[f_{i, m - 1}; m - 1]$ symmetric form \eqref{eq:ncP4nsym}.

\section{}
\label{sec:VV}
\par
{\rm 1.  Sokolov V. V. and  Svinolupov S. I.}, {\it
Vector-matrix generalizations of classical integrable equations}, Teoret.
and Mat. Fiz, 1994, {\bf 100}, no.{\bf 2},  214-218 \quad [in Russian];
 \quad {\it translation in Theoret. and Math. Phys.}, 1994, {\bf 100},
no.{\bf 2},  
\href{https://link.springer.com/article/10.1007/BF01016758}{959 - 962} (1994). 
\quad 35Qxx (35A30)
\medskip
\par
{\rm 2.   Olver P. J. and   Sokolov V. V.}, {\it
Integrable evolution equations on associative algebras},
Comm. in Math. Phys., 1998, {\bf 193}, no.{\bf 2},  
\href{https://link.springer.com/article/10.1007/s002200050328}{245-268}.
\quad 58F07 (35Q51)
\medskip
\par
{\rm 3.   Balandin S. P. and  Sokolov V. V.}, {\it On the Painlev\'e test
for non-Abelian equations}, Phys. Lett. A, 1998, {\bf 246}, no.{\bf 3-4},
\href{https://www.sciencedirect.com/science/article/abs/pii/S0375960198003363?via\%3Dihub}{267-272}.  
\quad 34A34 (34A25)
\medskip
\par
{\rm 4.   Olver P. J. and   Sokolov V. V.}, {\it Non-abelian integrable
systems of the derivative nonlinear Schr\"odinger type},
Inverse Problems, 1998, {\bf 14}, no.{\bf 6},
\href{https://iopscience.iop.org/article/10.1088/0266-5611/14/6/002}{L5-L8}. 
\quad 35Q55 (58F37)
\medskip
\par
{\rm 5.  Mikhailov A. V. and  Sokolov V. V.}, {\it Integrable ODEs on Associative Algebras},
Comm. in Math. Phys., 2000, {\bf 211}, no.{\bf 1}, 
\href{https://link.springer.com/article/10.1007/s002200050810}{231-251}.
\medskip
\par
{\rm 6. Odesskii A. V. and Sokolov V. V.}, 
{\it Integrable matrix equations related to pairs of compatible associative algebras},
Journal Phys. A: Math. Gen., 2006, {\bf 39},
\href{https://iopscience.iop.org/article/10.1088/0305-4470/39/40/011/meta}{12447-12456}.
\quad
\href{https://arxiv.org/abs/math/0604574}{arXiv:math/0604574}
\medskip
\par
{\rm 7. Odesskii A. V., Roubtsov V. N., and Sokolov V. V.},
{\it Bi-Hamiltonian ODEs with matrix variables},  
Teoret. and Mat. Fiz. 2012, {\bf 171}, no.{\bf 1}, 
\href{http://www.mathnet.ru/php/archive.phtml?wshow=paper&jrnid=tmf&paperid=6912&option_lang=eng}{26--33}.  
\quad 
\href{https://arxiv.org/abs/1105.1740}{arXiv:1105.1740}.
\medskip
\par
{\rm 8. Odesskii A. V., Roubtsov V. N., and Sokolov V. V.},
{\it Double Poisson brackets on free associative algebras}, 
Max-Planck-Institut fur Mathematik Preprint Series 2012, no.{\bf 52}, 19 p.; 
\quad    
\href{https://arxiv.org/abs/1208.2935}{arXiv:1208.2935}; 
\quad Contemporary Mathematics,
2013,    {\bf 592}, \href{http://dx.doi.org/10.1090/conm/592/11861}{225-241}, 
ISBN-10: 0-8218-8980-X,
ISBN-13: 978-0-8218-8980-0
\medskip
\par
{\rm 9. Sokolov V.~V. and Wolf T.},
{\it  Non-commutative generalization of integrable quadratic ODE-systems},    
{Letters in Mathematical Physics}, 
2020, {\bf 110},
no.{\bf 3}, 
\href{https://link.springer.com/article/10.1007/s11005-019-01229-0}{533-553}.  
\quad  
\href{https://arxiv.org/abs/1807.05583}{arXiv:1807.05583}
\medskip
\par
{\rm 10.   Sokolov V.~V.}, 
{\it Nonabelian $\mathfrak{so}_3$ Euler top}, {Russian Math. Surveys.},  
2021, {\bf 76}(1), 
\href{http://dx.doi.org/10.4213\%2Frm9988}{195-196}.
 
\href{https://arxiv.org/abs/2101.00934}{arXiv:2101.00934}
\medskip
\par  
{\rm 11.  Adler V.~E. and Sokolov V.~V.}, 
{\it Non-Abelian evolution systems with conservation laws}, 
{Mathematical Physics Analysis and Geometry},
2021, 
{\bf 24}, 
no.{\bf 7}, 
\href{https://link.springer.com/article/10.1007/s11040-021-09382-6}{1-24}.
\quad  
\href{https://arxiv.org/abs/2008.09174}{arXiv:2008.09174}
\medskip
\par 
{\rm 12.  Adler V.~E. and Sokolov V.~V.},
{\it On matrix Painlev\'e II equations}, 
{Theor. and Math. Phys.}, 2021, 
{\bf 207}(2),  
\href{https://doi.org/10.4213/tmf10027}{188--201}.
\quad   
\href{https://arxiv.org/abs/2012.05639}{arXiv:2012.05639}
\medskip
\par
{\rm 13.  
Odesskii A. V. and Sokolov V.~V.}, {\it Noncommutative elliptic Poisson structures on projective spaces}, {Journal of Geometry and Physics}, 2021, {\bf 169}, 
\href{https://doi.org/10.1016/j.geomphys.2021.104330}{104330}. 
\quad 
\href{https://arxiv.org/abs/1911.03320}{arXiv:1911.03320}
\medskip
\par
{\rm 14. Bobrova I.~A. and Sokolov V.~V.},
{\it On matrix Painlev\'e-4 equations. Part 1: Painlev\'e--Kovalevskaya test},  
2021,  20 pp. 
\quad 
\href{https://arxiv.org/abs/2107.11680}{arXiv:2107.11680} 
\medskip
\par 
{\rm 15. Bobrova I.~A. and Sokolov V.~V.},
{\it On matrix Painlev\'e-4 equations. Part 2: Isomonodromic Lax pairs},  
2021, 12 pp. 
\quad 
\href{https://arxiv.org/abs/2110.12159}{arXiv:2110.12159}
\medskip
\par  
{\rm 16.  Bobrova I.~A. and Sokolov V.~V.}, {\it On matrix Painlev\'e-4 equations,} submitted to {\em Nonlinearity},  
extended version is presented in 
\quad 
\href{https://arxiv.org/abs/2107.11680}{arXiv:2107.11680}  
\, and \, 
\href{https://arxiv.org/abs/2110.12159}{arXiv:2110.12159}
\medskip
\par   
{\rm 17.  Bobrova I.~A. and Sokolov V.~V.},
{\it Non-abelian Painlev\'e systems with~generalized~Okamoto~integral},

in preparation
\medskip

    \bibliographystyle{alpha}
    \bibliography{bib}
\end{document}